\titlespacing*{\paragraph}{0pt}{3pt plus 1pt minus 1pt}{1em}
\let\originalleft\left
\let\originalright\right
\renewcommand{\left}{\mathopen{}\mathclose\bgroup\originalleft}
\renewcommand{\right}{\aftergroup\egroup\originalright}
\newcommand{\C}{{\mathbb{C}}}
\newcommand{\R}{{\mathbb{R}}}
\renewcommand{\d}{\mathrm{d}}
\renewcommand{\Re}{\mathop{\mathrm{Re}}}
\renewcommand{\Im}{\mathop{\mathrm{Im}}}
\renewcommand{\vec}{\mathbf}
\newcommand{\norm}[1]{\|{#1}\|}
\newcommand{\range}[1]{[{#1}]}
\newcommand{\rangez}[1]{[{#1}]_0}
\DeclareMathOperator{\poly}{poly}
\DeclareMathOperator{\Rq}{R}
\newcommand{\dimn}{\Delta}
\newcommand{\rtra}{r_{\text{tra}}}
\newcommand{\Tlat}{T_{\text{lat}}}
\newcommand{\Tinf}{T_{\text{inf}}}
\newcommand{\rvac}{r_{\text{vac}}}
\newtheorem{theorem}{Theorem}
\newtheorem{lemma}{Lemma}
\newtheorem{corollary}{Corollary}
\newtheorem{problem}{Problem}
\numberwithin{equation}{section}
\newcommand{\eq}[1]{(\ref{eq:#1})}
\renewcommand{\sec}[1]{\hyperref[sec:#1]{Section~\ref*{sec:#1}}}
\newcommand{\thm}[1]{\hyperref[thm:#1]{Theorem~\ref*{thm:#1}}}
\newcommand{\lem}[1]{\hyperref[lem:#1]{Lemma~\ref*{lem:#1}}}
\newcommand{\cor}[1]{\hyperref[cor:#1]{Corollary~\ref*{cor:#1}}}
\newcommand{\prb}[1]{\hyperref[prb:#1]{Problem~\ref*{prb:#1}}}
\newcommand{\fig}[1]{\hyperref[fig:#1]{Figure~\ref*{fig:#1}}}
\let\@@magyar@captionfix\relax
\begin{document}

\title{
Efficient quantum algorithm for 
\\ dissipative nonlinear differential equations}

\author{Jin-Peng Liu$^{1,2,3}$ \quad Herman Øie Kolden$^{4,5}$ \quad Hari K.\ Krovi$^{6}$ \\
Nuno F.\ Loureiro$^{5}$ \quad Konstantina Trivisa$^{3,7}$ \quad Andrew M.\ Childs$^{1,2,8}$ \\ [10pt]
\footnotesize $^{1}$ Joint Center for Quantum Information and Computer Science, University of Maryland, MD 20742, USA \\
\footnotesize $^{2}$ Institute for Advanced Computer Studies, University of Maryland, MD 20742, USA \\
\footnotesize $^{3}$ Department of Mathematics, University of Maryland, MD 20742, USA \\
\footnotesize $^{4}$ Department of Physics, Norwegian University of Science and Technology, NO-7491 Trondheim, Norway \\
\footnotesize $^{5}$ Plasma Science and Fusion Center, Massachusetts Institute of Technology, MA 02139, USA \\
\footnotesize $^{6}$ Raytheon BBN Technologies, MA 02138, USA \\
\footnotesize $^{7}$ Institute for Physical Science and Technology, University of Maryland, MD 20742, USA \\
\footnotesize $^{8}$ Department of Computer Science, University of Maryland, MD 20742, USA \\
}

\date{}
\maketitle

\begin{abstract}
Nonlinear differential equations model diverse phenomena but are notoriously difficult to solve.
While there has been extensive previous work on efficient quantum algorithms for linear differential equations, the linearity of quantum mechanics has limited analogous progress for the nonlinear case. Despite this obstacle, we develop a quantum algorithm for dissipative quadratic $n$-dimensional ordinary differential equations. Assuming $\Rq < 1$, where $\Rq$ is a parameter characterizing the ratio of the nonlinearity and forcing to the linear dissipation, this algorithm has complexity $T^2 q \poly(\log T, \log n, \log 1/\epsilon)/\epsilon$, where $T$ is the evolution time, $\epsilon$ is the allowed error, and $q$ measures decay of the solution. This is an exponential improvement over the best previous quantum algorithms, whose complexity is exponential in $T$. While exponential decay precludes efficiency, driven equations can avoid this issue despite the presence of dissipation. Our algorithm uses the method of Carleman linearization, for which we give a novel convergence theorem. This method maps a system of nonlinear differential equations to an infinite-dimensional system of linear differential equations, which we discretize, truncate, and solve using the forward Euler method and the quantum linear system algorithm. We also provide a lower bound on the worst-case complexity of quantum algorithms for general quadratic differential equations, showing that the problem is intractable for $\Rq \ge \sqrt{2}$. Finally, we discuss potential applications, showing that the $\Rq < 1$ condition can be satisfied in realistic epidemiological models and giving numerical evidence that the method may describe a model of fluid dynamics even for larger values of $\Rq$.
\end{abstract}
        
\section{Introduction}
\label{sec:intro}

Models governed by both ordinary differential equations  (ODEs) and partial differential equations (PDEs) arise extensively in natural and social science, medicine, and engineering. Such equations characterize physical and biological systems that exhibit a wide variety of complex phenomena, including turbulence and chaos.

We focus here on differential equations with nonlinearities that can be expressed with quadratic polynomials. Note that polynomials of degree higher than two, and even more general nonlinearities, can be reduced to the quadratic case by introducing additional variables \cite{Ker81,FP17}. The quadratic case also directly includes many archetypal models, such as the logistic equation in biology, the Lorenz system in atmospheric dynamics, and the Navier--Stokes equations in fluid dynamics.

Quantum algorithms offer the prospect of rapidly characterizing the solutions of high-dimensional systems of linear ODEs \cite{Ber14,BCOW17,CL19} and PDEs \cite{CJS13,CPPTK13,MP16,CJO19,CLO20,ESP19,LMS20}. Such algorithms can produce a quantum state proportional to the solution of a sparse (or block-encoded) $n$-dimensional system of linear differential equations in time $\poly(\log n)$ using the quantum linear system algorithm \cite{HHL09}.

Early work on quantum algorithms for differential equations already considered the nonlinear case \cite{LO08}. It gave a quantum algorithm for ODEs that simulates polynomial nonlinearities by storing multiple copies of the solution.  The complexity of this approach is polynomial in the logarithm of the dimension but exponential in the evolution time, scaling as $O(1/\epsilon^{T})$ due to exponentially increasing resources used to maintain sufficiently many copies of the solution to represent the nonlinearity throughout the evolution.

Recently, heuristic quantum algorithms for nonlinear ODEs have been studied. Reference \cite{Jos20} explores a linearization technique known as the Koopman--von Neumann method that might be amenable to the quantum linear system algorithm.  In \cite{DS20}, the authors provide a high-level description of how linearization can help solve nonlinear equations on a quantum computer.  However, neither paper makes precise statements about concrete implementations or running times of quantum algorithms. The recent preprint \cite{LPG20} also describes a quantum algorithm to solve a nonlinear ODE by linearizing it using a different approach from the one taken here. However, a proof of correctness of their algorithm involving a bound on the condition number and probability of success is not given. The authors also do not describe how barriers such as those of \cite{CY16} could be avoided in their approach.

While quantum mechanics is described by linear dynamics, possible nonlinear modifications of the theory have been widely studied. Generically, such modifications enable quickly solving hard computational problems (e.g., solving unstructured search among $n$ items in time $\poly(\log n)$), making nonlinear dynamics exponentially difficult to simulate in general \cite{AL98,Aar05,CY16}. Therefore, constructing efficient quantum algorithms for general classes of nonlinear dynamics has been considered largely out of reach.

In this article, we design and analyze a quantum algorithm that overcomes this limitation using Carleman linearization \cite{Car32,KS91,FP17}. This approach embeds polynomial nonlinearities into an infinite-dimensional system of linear ODEs, and then truncates it to obtain a finite-dimensional linear approximation. The Carleman method has previously been used in the analysis of dynamical systems \cite{SW80,And82, LT87} and the design of control systems \cite{Bro14,RMA09,GMP05}, but to the best of our knowledge it has not been employed in the context of quantum algorithms. We discretize the finite ODE system in time using the forward Euler method and solve the resulting linear equations with the quantum linear system algorithm \cite{HHL09,CKS15}.
We control the approximation error of this approach by combining a novel convergence theorem with a bound for the global error of the Euler method. Furthermore, we provide an upper bound for the condition number of the linear system and lower bound the success probability of the final measurement. Subject to the condition $\Rq<1$, where the quantity $\Rq$ (defined in \prb{node} below) characterizes the relative strength of the nonlinear and dissipative linear terms, we show that the total complexity of this \emph{quantum Carleman linearization algorithm} is $sT^2q\poly(\log T, \log n, \log 1/\epsilon)/\epsilon$, where $s$ is the sparsity, $T$ is the evolution time, $q$ quantifies the decay of the final solution relative to the initial condition, $n$ is the dimension, and $\epsilon$ is the allowed error (see \thm{main}). In the regime $\Rq<1$, this is an exponential improvement over \cite{LO08}, which has complexity exponential in $T$.

Note that the solution cannot decay exponentially in $T$ for the algorithm to be efficient, as captured by the dependence of the complexity on $q$---a known limitation of quantum ODE algorithms \cite{BCOW17}. For homogeneous ODEs with $R<1$, the solution necessarily decays exponentially in time (see equation \eq{homo_example}), so the algorithm is not asymptotically efficient. However, even for solutions with exponential decay, we still find an improvement over the best previous result $O(1/\epsilon^{T})$ \cite{LO08} for sufficiently small $\epsilon$. Thus our algorithm might provide an advantage over classical computation for studying evolution for short times. More significantly, our algorithm can handle inhomogeneous quadratic ODEs, for which it can remain efficient in the long-time limit since the solution can remain asymptotically nonzero (for an explicit example, see the discussion just before the proof of \lem{Carleman}), or can decay slowly (i.e., $q$ can be $\poly(T)$).
Inhomogeneous equations arise in many applications, including for example the discretization of PDEs with nontrivial boundary conditions. 

We also provide a quantum lower bound for the worst-case complexity of simulating strongly nonlinear dynamics, showing that the algorithm's condition $\Rq<1$ cannot be significantly improved in general (\thm{lower}). Following the approach of \cite{AL98,CY16}, we construct a protocol for distinguishing two states of a qubit driven by a certain quadratic ODE. Provided $\Rq\ge\sqrt{2}$, this procedure distinguishes states with overlap $1-\epsilon$ in time $\poly(\log(1/\epsilon))$. Since nonorthogonal quantum states are hard to distinguish, this implies a lower bound on the complexity of the quantum ODE problem.

Our quantum algorithm could potentially be applied to study models governed by quadratic ODEs arising in biology and epidemiology as well as in fluid and plasma dynamics. In particular, the celebrated Navier--Stokes equation with linear damping, which describes many physical phenomena, can be treated by our approach provided the Reynolds number is sufficiently small.
We also note that while the formal validity of our arguments assumes $R<1$, we find in one numerical experiment that our proposed approach remains valid for larger $R$ (see \sec{application}).

We emphasize that, as in related quantum algorithms for linear algebra and differential equations, instantiating our approach requires an implicit description of the problem that allows for efficient preparation of the initial state and implementation of the dynamics. Furthermore, since the output is encoded in a quantum state, readout is restricted to features that can be revealed by efficient quantum measurements. More work remains to understand how these methods might be applied, as we discuss further in \sec{discussion}.

The remainder of this paper is structured as follows. \sec{qode} introduces the quantum quadratic ODE problem. \sec{linearization} presents the Carleman linearization procedure and describes its performance. \sec{proof} gives a detailed analysis of the quantum Carleman linearization algorithm. \sec{lower} establishes a quantum lower bound for simulating quadratic ODEs. \sec{application} describes how our approach could be applied to several well-known ODEs and PDEs and presents numerical results for the case of the viscous Burgers equation. Finally, we conclude with a discussion of the results and some possible future directions in \sec{discussion}.

\section{Quadratic ODEs}
\label{sec:qode}
We focus on an initial value problem described by the $n$-dimensional quadratic ODE
\begin{equation}
\frac{\d{u}}{\d{t}} = F_2u^{\otimes 2}+F_1u+F_0(t), \qquad
u(0) = u_{\mathrm{in}}.
\label{eq:NODE}
\end{equation}
Here $u=[u_1, \ldots, u_n]^{T}\in\R^n$, $u^{\otimes 2}=[u_1^2, u_1u_2, \ldots, u_1u_n, u_2u_1, \ldots, u_nu_{n-1}, u_n^2]^{T}\in\R^{n^2}$, each $u_j = u_j(t)$ is a function of $t$ on the interval $[0,T]$ for $j\in\range{n}\coloneqq\{1,\ldots,n\}$, $F_2\in\R^{n\times n^2}$, $F_1\in\R^{n\times n}$ are time-independent matrices, and the inhomogeneity $F_0(t)\in\R^n$ is a $C^1$ continuous function of $t$. We let $\|\cdot\|$ denote the spectral norm.

The main computational problem we consider is as follows.
\begin{problem}\label{prb:node}
In the \emph{quantum quadratic ODE problem}, we consider an $n$-dimensional quadratic ODE as in \eq{NODE}. We assume $F_2$, $F_1$, and $F_0$ are $s$-sparse (i.e., have at most $s$ nonzero entries in each row and column), $F_1$ is diagonalizable, and that the eigenvalues $\lambda_j$ of $F_1$ satisfy $\Re{(\lambda_n)} \le \cdots \le \Re{(\lambda_1)} < 0$. We parametrize the problem in terms of the quantity
\begin{equation}
\Rq \coloneqq \frac{1}{|\Re{(\lambda_1)}|}\biggl(\|u_{\mathrm{in}}\|\|F_2\|+\frac{\|F_0\|}{\|u_{\mathrm{in}}\|}\biggr).
\label{eq:A1}
\end{equation}
For some given $T>0$, we assume the values $\Re{(\lambda_1)}$, $\|F_2\|$, $\|F_1\|$, $\|F_0(t)\|$ for each $t\in[0,T]$, and $\|F_0\| \coloneqq \max_{t\in[0,T]} \|F_0(t)\|$, $\|F_0'\| \coloneqq \max_{t\in[0,T]} \|F'_0(t)\|$ are known, and that we are given oracles $O_{F_2}$, $O_{F_1}$, and $O_{F_0}$ that provide the locations and values of the nonzero entries of $F_2$, $F_1$, and $F_0(t)$ for any specified $t$, respectively, for any desired row or column. We are also given the value $\|u_{\mathrm{in}}\|$ and an oracle $O_x$ that maps $|00\ldots0\rangle \in\C^n$ to a quantum state proportional to $u_{\mathrm{in}}$. 
Our goal is to produce a quantum state proportional to $u(T)$ for some given $T>0$ within some prescribed error tolerance $\epsilon>0$.
\end{problem}

When $F_0(t)=0$ (i.e., the ODE is homogeneous), the quantity $\Rq = \frac{\|u_{\mathrm{in}}\|\|F_2\|}{|\Re{(\lambda_1)}|}$ is qualitatively similar to the Reynolds number, which characterizes the ratio of the (nonlinear) convective forces to the (linear) viscous forces within a fluid \cite{Bur48,Lem18}.
More generally, $\Rq$ quantifies the combined strength of the nonlinearity and the inhomogeneity relative to dissipation.

Note that without loss of generality, given a quadratic ODE satisfying \eq{NODE} with $\Rq<1$, we can modify it by rescaling $u \to \gamma u$ with a suitable constant $\gamma$ to satisfy
\begin{equation}
\|F_2\| + \|F_0\| < |{\Re{(\lambda_1)}}|
\label{eq:A2}
\end{equation}
and
\begin{equation}
\|u_{\mathrm{in}}\| < 1,
\label{eq:A3}
\end{equation}
with $\Rq$ left unchanged by the rescaling. We use this rescaling in our algorithm and its analysis. With this rescaling, a small $\Rq$ implies both small $\|u_{\mathrm{in}}\|\|F_2\|$ and small $|F_0\|/\|u_{\mathrm{in}}\|$ relative to $|{\Re{(\lambda_1)}}|$. Our analysis also relies on the assumption that $\|F_0\| \le \|F_2\|$.

\section{Quantum Carleman linearization}
\label{sec:linearization}

As discussed in \sec{intro}, it is challenging to directly simulate quadratic ODEs using quantum computers, and indeed the complexity of the best known quantum algorithm is exponential in the evolution time $T$ \cite{LO08}.
However, for a dissipative nonlinear ODE without a source, any quadratic nonlinear effect will only be significant for a finite time because of the dissipation. To exploit this, we can create a linear system that approximates the initial nonlinear evolution within some controllable error. After the nonlinear effects are no longer important, the linear system properly captures the almost-linear evolution from then on.

We develop such a quantum algorithm using the concept of Carleman linearization \cite{Car32,KS91,FP17}. Carleman linearization is a method for converting a finite-dimensional system of nonlinear differential equations into an infinite-dimensional linear one. This is achieved by introducing powers of the variables into the system, allowing it to be written as an infinite sequence of coupled linear differential equations. We then truncate the system to $N$ equations, where the truncation level $N$ depends on the allowed approximation error, giving a finite linear ODE system. 

Let us describe the Carleman linearization procedure in more detail. 
Given a system of quadratic ODEs \eq{NODE}, we apply the Carleman procedure to obtain the system of linear ODEs
\begin{equation}
  \frac{\d{\hat y}}{\d{t}} = A(t) \hat y + b(t), \qquad
  \hat y(0) = \hat y_{\mathrm{in}}
\label{eq:LODE}
\end{equation}
with the tri-diagonal block structure
\begin{equation}
\frac{\d{}}{\d{t}}
  \begin{pmatrix}
    \hat y_1 \\
    \hat y_2 \\
    \hat y_3 \\
    \vdots \\
    \hat y_{N-1} \\
    \hat y_N \\
  \end{pmatrix}
=
  \begin{pmatrix}
    A_1^1 & A_2^1 &  &  &  &  \\
    A_1^2 & A_2^2 & A_3^2 & &  &  \\
     & A_2^3 & A_3^3 & A_4^3 &  &  \\
     &  & \ddots & \ddots & \ddots &  \\
     &  &  & A_{N-2}^{N-1} & A_{N-1}^{N-1} & A_N^{N-1} \\
     &  &  &  & A_{N-1}^N & A_N^N \\
  \end{pmatrix}
  \begin{pmatrix}
    \hat y_1 \\
    \hat y_2 \\
    \hat y_3 \\
    \vdots \\
    \hat y_{N-1} \\
    \hat y_N \\
  \end{pmatrix}+
  \begin{pmatrix}
    F_0(t) \\
    0 \\
    0 \\
    \vdots \\
    0 \\
    0 \\
  \end{pmatrix},
\label{eq:UODE}
\end{equation}
where $\hat y_j=u^{\otimes j}\in\R^{n^j}$, $\hat y_{\mathrm{in}}=[u_{\mathrm{in}}; u_{\mathrm{in}}^{\otimes 2}; \ldots; u_{\mathrm{in}}^{\otimes N}]$, and $A_{j+1}^j \in \R^{n^j\times n^{j+1}}$, $A_j^j \in \R^{n^j\times n^j}$, $A_{j-1}^j \in \R^{n^j\times n^{j-1}}$ for $j\in\range{N}$ satisfying
\begin{align}
A_{j+1}^j &= F_2\otimes I^{\otimes j-1}+I\otimes F_2\otimes I^{\otimes j-2}+\cdots+I^{\otimes j-1}\otimes F_2, \label{eq:tensor2} \\
A_j^j &= F_1\otimes I^{\otimes j-1}+I\otimes F_1\otimes I^{\otimes j-2}+\cdots+I^{\otimes j-1}\otimes F_1, \label{eq:tensor1} \\
A_{j-1}^j &= F_0(t)\otimes I^{\otimes j-1}+I\otimes F_0(t)\otimes I^{\otimes j-2}+\cdots+I^{\otimes j-1}\otimes F_0(t). \label{eq:tensor0}
\end{align}
Note that $A$ is a $(3Ns)$-sparse matrix. The dimension of \eq{LODE} is
\begin{equation}
  \dimn \coloneqq n+n^2+\cdots+n^N=\frac{n^{N+1}-n}{n-1}=O(n^N).
\end{equation}

To construct a system of linear equations, we next divide $[0,T]$ into $m = T/h$ time steps and apply the forward Euler method on \eq{LODE}, letting 
\begin{equation}
y^{k+1} = [I+A(kh)h] y^k + b(kh)
\label{eq:forward}
\end{equation}
where $y^k \in \R^{\dimn}$ approximates $\hat y(kh)$ for each $k \in \rangez{m+1} \coloneqq \{0,1,\ldots,m\}$, with $y^0 = y_{\mathrm{in}} \coloneqq \hat y(0) = \hat y_{\mathrm{in}}$, and letting all $y^k$ be equal for $k\in\rangez{m+p+1}\setminus\rangez{m+1}$, for some sufficiently large integer $p$.
(It is unclear whether another discretization could improve performance, as discussed further in \sec{discussion}.)
This gives an $(m+p+1)\dimn\times(m+p+1)\dimn$ linear system
\begin{equation}
L|Y\rangle=|B\rangle
\label{eq:linear_system}
\end{equation}
that encodes \eq{forward} and uses it to produce a numerical solution at time $T$, where
\begin{equation}
L = \sum_{k=0}^{m+p}|k\rangle\langle k|\otimes I-\sum_{k=1}^{m}|k\rangle\langle k-1|\otimes [I+A((k-1)h)h]-\sum_{h=m+1}^{m+p}|k\rangle\langle k-1|\otimes I
\label{eq:matrixL}
\end{equation}
and
\begin{equation}
|B\rangle = \frac{1}{\sqrt{B_m}}\Bigl(\|y_{\mathrm{in}}\||0\rangle \otimes |y_{\mathrm{in}}\rangle+\sum_{k=1}^{m}\|b((k-1)h)\||k\rangle \otimes |b((k-1)h)\rangle\Bigr)
\label{eq:vectorB}
\end{equation}
with a normalizing factor $B_m$. Observe that the system \eq{linear_system} has the lower triangular structure
\renewcommand{\arraystretch}{1.4}
\begin{equation}
  \begin{pmatrix}
    I &  &  &  &  &  &  \\
    -[I\!+\!A(0)h] & I &  &  &  &  &  \\
     & \ddots & \ddots &  &  &  & \\
     &  & -[I\!+\!A((m-1)h)h] & I &  &  &  \\
     &  &  & -I & I &  &  \\
     &  &  &  & \ddots & \ddots & \\
     &  &  &  &  & -I & I \\
  \end{pmatrix}
  \begin{pmatrix}
    y^0 \\
    y^1 \\
    \vdots \\
    y^m \\
    y^{m+1} \\
    \vdots \\
    y^{m+p} \\
  \end{pmatrix}
=
  \begin{pmatrix}
    y_{\mathrm{in}} \\
    b(0) \\
    \vdots \\
    b((m-1)h) \\
    0 \\
    \vdots \\
    0 \\
  \end{pmatrix}.
\end{equation}
\renewcommand{\arraystretch}{1}

In the above system, the first $n$ components of $y^k$ for $k\in\rangez{m+p+1}$ (i.e., $y_1^k$) approximate the exact solution $u(T)$, up to normalization. We apply the high-precision quantum linear system algorithm (QLSA) \cite{CKS15} to \eq{linear_system} and postselect on $k$ to produce $y_1^k/\|y^k_1\|$ for some $k\in\rangez{m+p+1}\setminus\rangez{m}$. The resulting error is at most
\begin{equation}
\epsilon \coloneqq \max_{k\in\rangez{m+p+1}\setminus\rangez{m}}\biggl\|\frac{u(T)}{\|u(T)\|}-\frac{y^k_1}{\|y^k_1\|}\biggr\|.
\label{eq:error}
\end{equation}
This error includes contributions from both Carleman linearization and the forward Euler method. (The QLSA also introduces error, which we bound separately. Note that we could instead apply the original QLSA \cite{HHL09} instead of its subsequent improvement \cite{CKS15}, but this would slightly complicate the error analysis and might perform worse in practice.)

Now we state our main algorithmic result.

\begin{theorem}[Quantum Carleman linearization algorithm]\label{thm:main}
Consider an instance of the quantum quadratic ODE problem as defined in \prb{node}, with its Carleman linearization as defined in \eq{LODE}. Assume $\Rq<1$ and $\|F_0\| \le \|F_2\|$. Let
\begin{align}
g &\coloneqq \|u(T)\|, &
q &\coloneqq \frac{\|u_{\mathrm{in}}\|}{\|u(T)\|}.
\label{eq:gq}
\end{align}
There exists a quantum algorithm producing a state that approximates $u(T)/\| u(T)\|$ with error at most $\epsilon\le1$, succeeding with probability $\Omega(1)$, with a flag indicating success, using at most
\begin{equation}
\begin{aligned}
\frac{sT^2q[(\|F_2\|+\|F_1\|+\|F_0\|)^2+\|F_0'\|]}{(1-\|u_{\mathrm{in}}\|)^2(\|F_2\|+\|F_0\|)g\epsilon}\cdot
\poly\biggl(\log\biggl(\frac{sT\|F_2\|\|F_1\|\|F_0\|\|F'_0\|}{(1-\|u_{\mathrm{in}}\|)g\epsilon}\biggr)/\log (1/\|u_{\mathrm{in}}\|) \biggr)
\end{aligned}
\end{equation}
queries to the oracles $O_{F_2}$, $O_{F_1}$, $O_{F_0}$ and $O_{x}$. 
The gate complexity is larger than the query complexity by a factor of $\poly\bigl(\log(nsT\|F_2\|\|F_1\|\|F_0\|\|F'_0\|/(1-\|u_{\mathrm{in}}\|) g\epsilon)/\log (1/\|u_{\mathrm{in}}\|)\bigr)$.
Furthermore, if the eigenvalues of $F_1$ are all real, the query complexity is
\begin{equation}
\begin{aligned}
\frac{sT^2q[(\|F_2\|+\|F_1\|+\|F_0\|)^2+\|F_0'\|]}{g\epsilon}\cdot
\poly\biggl(\log\biggl(\frac{sT\|F_2\|\|F_1\|\|F_0\|\|F'_0\|}{g\epsilon}\biggr)/\log (1/\|u_{\mathrm{in}}\|) \biggr)
\end{aligned}
\end{equation}
and the gate complexity is larger by a factor of $\poly\bigl(\log(nsT\|F_2\|\|F_1\|\|F_0\|\|F'_0\|/g\epsilon)/\log (1/\|u_{\mathrm{in}}\|)\bigr)$.
\end{theorem}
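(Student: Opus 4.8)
The plan is to bound the three error contributions hidden in \eq{error}---Carleman truncation, forward Euler discretization, and the QLSA---making each at most a third of the prescribed $\epsilon$ by suitably choosing the truncation level $N$, the number of Euler steps $m$, the padding $p$, and the QLSA precision, and then reading off the query and gate counts from the QLSA cost together with an amplitude-amplification step for the final postselection. Throughout I would work with the rescaled instance satisfying \eq{A2}--\eq{A3}, which is harmless because $\Rq<1$ and $\Rq$ is rescaling-invariant; all of $\|F_2\|,\|F_1\|,\|F_0\|,\|F_0'\|,\|u_{\mathrm{in}}\|$ below refer to this instance. A key bookkeeping point: the error in \eq{error} is relative, so the (naturally absolute) Carleman and Euler errors must be driven below $\Theta(g\epsilon)$ with $g=\|u(T)\|$, which is the source of the $1/g$ factor---and the reason exponential decay of $u$ would be fatal.

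First I would invoke the Carleman convergence theorem \lem{Carleman}: under $\|u_{\mathrm{in}}\|<1$ it yields a truncation error $\|\hat y_1(t)-u(t)\|$ that decays geometrically in $N$, together with the uniform bound $\sup_t\|u(t)\|\le\|u_{\mathrm{in}}\|$ (hence $\|\hat y_j(t)\|\le\|u_{\mathrm{in}}\|^{j}$) that dissipativity provides. Taking $N=O\bigl(\log(sT\|F_2\|\|F_1\|\|F_0\|\|F_0'\|/(g\epsilon))/\log(1/\|u_{\mathrm{in}}\|)\bigr)$ makes this error $\le g\epsilon/3$, which explains the $1/\log(1/\|u_{\mathrm{in}}\|)$ running through the polylogarithmic factor. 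Next I would bound the global forward Euler error on \eq{LODE}: each block of $A(t)$ is a sum of at most $N$ tensor terms, so $\|A\|=O(N(\|F_2\|+\|F_1\|+\|F_0\|))$, $\|A'\|=O(N\|F_0'\|)$, $\|b\|=\|F_0\|$, $\|b'\|=\|F_0'\|$, and differentiating \eq{LODE} with $\|\hat y(t)\|=O(1)$ gives $\|\hat y''\|=O\bigl(\|A\|^2+\|A'\|+\|A\|\,\|F_0\|+\|F_0'\|\bigr)$, whence the global error is $O\bigl(hT[(\|F_2\|+\|F_1\|+\|F_0\|)^2+\|F_0'\|]\poly(N)\bigr)$ once the iteration is shown to be stable. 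Stability---controlling $\|I+A(kh)h\|$ and the products thereof---again uses the dissipative block structure, and this is where the argument splits: it is clean when the eigenvalues of $F_1$ are real, and otherwise incurs the extra $(1-\|u_{\mathrm{in}}\|)^2(\|F_2\|+\|F_0\|)$ in the denominator. Forcing the Euler error below $g\epsilon/3$ fixes $m=T/h=\Theta\bigl(T^2[(\|F_2\|+\|F_1\|+\|F_0\|)^2+\|F_0'\|]\poly(N)/(g\epsilon)\bigr)$, the origin of the $T^2/(g\epsilon)$ scaling.

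Then I would analyze the system \eq{linear_system}. Because $L$ is lower block-triangular with identity diagonal, $L^{-1}$ is explicit, its off-diagonal blocks being ordered products $\prod_i[I+A(ih)h]$ (or identity blocks in the padded region), each bounded since it approximates the solution operator of the dissipative \eq{LODE}; hence $\|L\|=O(1)$, $\|L^{-1}\|=O(m+p)$, so $\kappa(L)=O(m+p)$, with any residual conditioning of $F_1$ absorbed into the same $(1-\|u_{\mathrm{in}}\|)$- and $(\|F_2\|+\|F_0\|)$-dependent constants. Taking $p=\Theta(m)$ and applying the high-precision QLSA \cite{CKS15}---$L$ is $O(Ns)$-sparse because $A$ is $(3Ns)$-sparse, $|B\rangle$ is prepared from $O_x$ and $O_{F_0}$, and each of its oracle calls costs $O(N)$ calls to the given oracles---produces $|Y\rangle$ to sufficient precision (polynomial in $\epsilon$ and $1/q$, hence only polylogarithmic cost) in $\widetilde O(Ns\cdot m)$ queries, where $\widetilde O$ suppresses polylogarithmic factors. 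Finally, the first-block components with time index in $\{m,\dots,m+p\}$ carry an $\Omega((1-\|u_{\mathrm{in}}\|^2)/q^2)$ fraction of $\||Y\rangle\|^2$: each such block has norm $\approx g$, while $\sum_k\|y^k\|^2=O\bigl((m+p)\|u_{\mathrm{in}}\|^2/(1-\|u_{\mathrm{in}}\|^2)\bigr)$ using $\|y^k\|^2=\sum_j\|y^k_j\|^2=O(\|u_{\mathrm{in}}\|^2/(1-\|u_{\mathrm{in}}\|^2))$ and $p=\Theta(m)$. So $O(q)$ rounds of amplitude amplification, up to a $(1-\|u_{\mathrm{in}}\|)$-dependent factor, bring the postselection to success probability $\Omega(1)$ with a flag; multiplying by this $O(q)$ and substituting $m$ and $N$ gives the stated query complexity, the gate complexity larger by a $\poly(\log n)$ factor (and logs of the other parameters) from the sparse-oracle and state-preparation subroutines, and the real-eigenvalue bound follows identically with the improved stability constant. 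The various $(1-\|u_{\mathrm{in}}\|)$ powers in the general bound come from the Carleman decay rate, the Euler stability constant, and this amplitude-amplification overhead.

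The step I expect to be the main obstacle is \lem{Carleman} itself: proving that, under $\Rq<1$, the truncation error of the tri-diagonal system \eq{UODE} decays geometrically in $N$ \emph{uniformly on $[0,T]$}---prior Carleman convergence results cover only short times or analytic data---requires propagating error level by level through the block coupling while exploiting that the dissipation at level $j$ scales like $j|\Re(\lambda_1)|$ and dominates the couplings $A^j_{j\pm1}$, whose norms scale only like $j(\|F_2\|+\|F_0\|)$. A secondary difficulty is the stability and condition-number estimate: bounding products of $[I+A(kh)h]$ when $\|A\|h$ need not be small block by block and $F_1$ may be non-normal is exactly where the general-versus-real-eigenvalue dichotomy in \thm{main} enters. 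Given those two estimates, the remainder is bookkeeping---balancing the three error terms and tracing how $N$, $m$, $p$, and the postselection probability feed into the QLSA and amplitude-amplification costs.
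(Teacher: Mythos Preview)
Your proposal is correct and follows essentially the same approach as the paper: rescale to \eq{A2}--\eq{A3}, choose $N$ via \lem{Carleman} and $h$ via \lem{global} to drive the absolute Carleman and Euler errors below $\Theta(g\epsilon)$, bound $\kappa(L)=O(m+p)$ using the lower-triangular structure and the stability estimate $\|I+A(t)h\|\le1$ (where the real-versus-complex eigenvalue dichotomy enters), apply the QLSA, and finish with amplitude amplification on the postselection. The only minor deviation is your postselection estimate: you use the geometric series $\sum_j\|u_{\mathrm{in}}\|^{2j}=O(1/(1-\|u_{\mathrm{in}}\|^2))$ whereas the paper uses the cruder $\|\hat y(t)\|^2<4N\|u_{\mathrm{in}}\|^2$ to get $P_{\mathrm{measure}}=\Omega(1/(Nq^2))$ and hence $O(\sqrt{N}q)$ rounds---but since $N$ is polylogarithmic, both routes land on the same final bound, and your attribution of the $(1-\|u_{\mathrm{in}}\|)^2$ factor should point specifically to the time-step constraint \eq{h} feeding into the condition-number bound rather than to Carleman decay or amplitude amplification.
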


\section{Algorithm analysis}
\label{sec:proof}

In this section we establish several lemmas and use them to prove \thm{main}.

\subsection{Solution error}

The solution error has three contributions: the error from applying Carleman linearization to \eq{NODE}, the error in the time discretization of \eq{LODE} by the forward Euler method, and the error from the QLSA. Since the QLSA produces a solution with error at most $\epsilon$ with complexity $\poly(\log(1/\epsilon))$ \cite{CKS15}, we focus on bounding the first two contributions.

\subsubsection{Error from Carleman linearization}
\label{sec:carleman_error}

First, we provide an upper bound for the error from Carleman linearization for arbitrary evolution time $T$. To the best of our knowledge, the first and only explicit bound on the error of Carleman linearization appears in \cite{FP17}. However, they only consider homogeneous quadratic ODEs; and furthermore, to bound the error for arbitrary $T$, they assume the logarithmic norm of $F_1$ is negative (see Theorems 4.2 and 4.3 of \cite{FP17}), which is too strong for our case. Instead, we give a novel analysis under milder conditions, providing the first convergence guarantee for general inhomogeneous quadratic ODEs.

We begin with a lemma that describes the decay of the solution of \eq{NODE}.

\begin{lemma}\label{lem:decrease}
Consider an instance of the quadratic ODE \eq{NODE}, and assume $\Rq < 1$ as defined in \eq{A1}. Let
\begin{equation}
r_{\pm} \coloneqq \frac{-\Re(\lambda_1)\pm\sqrt{\Re(\lambda_1)^2-4\|F_2\|\|F_0\|}}{2\|F_2\|}.
\label{eq:roots}
\end{equation}
Then $r_\pm$ are distinct real numbers with $0\le r_-<r_+$, and the solution $u(t)$ of \eq{NODE} satisfies
$\|u(t)\|<\|u_{\mathrm{in}}\|<r_+$
for any $t>0$.
\end{lemma}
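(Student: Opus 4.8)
The plan is to analyze the scalar quantity $\|u(t)\|$ via a differential inequality. First I would establish the claim about $r_\pm$: since $\Rq < 1$, equation \eq{A1} gives $\|u_{\mathrm{in}}\|\|F_2\| + \|F_0\|/\|u_{\mathrm{in}}\| < |\Re(\lambda_1)|$, which after multiplying by $\|u_{\mathrm{in}}\|$ and rearranging shows $\|F_2\|\|u_{\mathrm{in}}\|^2 - |\Re(\lambda_1)|\|u_{\mathrm{in}}\| + \|F_0\| < 0$. Hence the quadratic $p(r) \coloneqq \|F_2\| r^2 - |\Re(\lambda_1)| r + \|F_0\|$ takes a negative value at $r = \|u_{\mathrm{in}}\|$, so its discriminant $\Re(\lambda_1)^2 - 4\|F_2\|\|F_0\|$ is strictly positive, the roots $r_\pm$ in \eq{roots} are distinct and real, both are nonnegative (nonnegative sum and product of roots), and $r_- < \|u_{\mathrm{in}}\| < r_+$ since $\|u_{\mathrm{in}}\|$ lies strictly between the roots. (I should also handle the degenerate case $\|F_2\| = 0$ or $\|F_0\| = 0$ separately, or note the formula still makes sense with $r_- = 0$ when $\|F_0\| = 0$.)

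Next I would derive the differential inequality. Write $\frac{\d}{\d t}\|u\|^2 = 2\langle u, \dot u\rangle = 2\langle u, F_2 u^{\otimes 2}\rangle + 2\langle u, F_1 u\rangle + 2\langle u, F_0(t)\rangle$. For the linear term, the key point is that since $F_1$ is diagonalizable with eigenvalues of real part at most $\Re(\lambda_1) < 0$, one has $\langle u, F_1 u\rangle \le \Re(\lambda_1)\|u\|^2$ — though with a non-orthonormal eigenbasis this requires care; I would use the real part of the Hermitian part of $F_1$, or work in the eigenbasis and absorb the conditioning, getting $\langle u, F_1 u \rangle \le \Re(\lambda_1)\|u\|^2$ at least in the relevant norm (this subtlety is flagged by the paper's assumption that $F_1$ is diagonalizable). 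The remaining terms are bounded by Cauchy–Schwarz: $|\langle u, F_2 u^{\otimes 2}\rangle| \le \|F_2\|\|u\|^3$ and $|\langle u, F_0(t)\rangle| \le \|F_0\|\|u\|$. Setting $w(t) \coloneqq \|u(t)\|$, this yields $\frac{\d}{\d t} w \le \|F_2\| w^2 + \Re(\lambda_1) w + \|F_0\| = p(-w)$... more precisely $\dot w \le \|F_2\|w^2 - |\Re(\lambda_1)|w + \|F_0\| = p(w)$ wherever $w > 0$.

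The final step is a comparison/barrier argument. Since $w(0) = \|u_{\mathrm{in}}\| \in (r_-, r_+)$ and $p(w) < 0$ strictly on $(r_-, r_+)$, the function $w$ is strictly decreasing as long as it stays in this interval; it cannot cross $r_-$ downward in finite time because $p(r_-) = 0$ means $r_-$ is a fixed point of the comparison ODE $\dot v = p(v)$, so by uniqueness $w(t) > r_- \ge 0$ for all $t$, and in particular $w$ never reaches the boundary where the bound could fail. Therefore $\|u(t)\| = w(t) < w(0) = \|u_{\mathrm{in}}\| < r_+$ for all $t > 0$, which is the claim. I would phrase this cleanly using the standard comparison lemma for scalar ODEs rather than re-deriving it.

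The main obstacle is the linear-term bound $\langle u, F_1 u\rangle \le \Re(\lambda_1)\|u\|^2$: this is immediate for normal $F_1$ but for a general diagonalizable $F_1$ the eigenbasis is skewed, so either one needs to restrict to a regime where the Hermitian part of $F_1$ has spectral bound $\Re(\lambda_1)$, or carry the condition number of the eigenvector matrix through the estimate (which would weaken the constant). I expect the paper either works under an implicit normality-type assumption here or absorbs this into the norm in which decay is measured; I would follow whichever convention makes the subsequent Carleman bound in \lem{Carleman} go through, and state the dependence explicitly.
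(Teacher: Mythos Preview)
Your approach is essentially the same as the paper's: derive the differential inequality $\tfrac{\d}{\d t}\|u\| \le \|F_2\|\|u\|^2 + \Re(\lambda_1)\|u\| + \|F_0\|$ from the ODE, show via $\Rq<1$ that the scalar quadratic has distinct nonnegative roots $r_-<\|u_{\mathrm{in}}\|<r_+$, and then use a scalar comparison to conclude decay. The one substantive difference is that where you invoke a qualitative barrier argument (the right-hand side is negative on $(r_-,r_+)$, so $w$ decreases and cannot cross $r_-$), the paper instead solves the comparison ODE $\dot x = a(x-r_-)(x-r_+)$ in closed form and reads off monotonicity from the explicit formula. Your route is cleaner for the lemma as stated; the paper's explicit solution buys an actual decay profile \eq{solution-norm}, which it uses immediately afterward (e.g., \eq{homo_example}) to discuss the exponential-decay issue in the homogeneous case.

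On the obstacle you flag: the paper does exactly what you worry about---it writes $u^\dagger(F_1+F_1^\dagger)u \le 2\Re(\lambda_1)\|u\|^2$ without further comment, which is the numerical-range bound for the Hermitian part and is only guaranteed to equal $\Re(\lambda_1)$ when $F_1$ is normal. So this is not a gap in your argument relative to the paper; the paper makes the same implicit assumption. You should simply state the inequality as the paper does rather than carrying a condition number through.
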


\begin{proof}
Consider the derivative of $\|u(t)\|$.
We have
\begin{align}
    \frac{\d\|u\|^2}{\d{t}}&= u^\dag F_2(u\otimes u) + (u^\dag\otimes u^\dag)F_2^\dag u + u^\dag(F_1 + F_1^\dag)u + u^\dag F_0(t) + F_0(t)^\dag u,\nonumber\\
    &\leq  2\|F_2\|\|u\|^3 + 2\Re(\lambda_1)\|u\|^2 + 2\|F_0\|\|u\|.
\end{align}
If $\|u\|\neq 0$, then
\begin{equation}
    \frac{\d\|u\|}{\d{t}}\leq \|F_2\|\|u\|^2 + \Re(\lambda_1)\|u\| + \|F_0\|.
\end{equation}
Letting $a=\|F_2\|>0$, $b=\Re(\lambda_1)<0$, and $c=\|F_0\|>0$ with $a,b,c>0$, we consider a $1$-dimensional quadratic ODE
\begin{equation}
\frac{\d{x}}{\d{t}} = ax^2+bx+c, \qquad
x(0) = \|u_{\mathrm{in}}\|.
\end{equation}
Since $\Rq<1 \Leftrightarrow b>a\|u_{\mathrm{in}}\|+\frac{c}{\|u_{\mathrm{in}}\|}$, the discriminant satisfies
\begin{equation}
b^2-4ac > \biggl(a\|u_{\mathrm{in}}\|+\frac{c}{\|u_{\mathrm{in}}\|}\biggr)^2-4a\|u_{\mathrm{in}}\|\cdot\frac{c}{\|u_{\mathrm{in}}\|} \ge \biggl(a\|u_{\mathrm{in}}\|-\frac{c}{\|u_{\mathrm{in}}\|}\biggr)^2 \ge 0.
\end{equation}
Thus, $r_\pm$ defined in \eq{roots} are distinct real roots of $ax^2+bx+c$. Since $r_-+r_+ = -\frac{b}{a} > 0$ and $r_-r_+ = \frac{c}{a} \ge 0$, we have $0 \le r_- < r_+$. We can rewrite the ODE as
\begin{equation}
\frac{\d{x}}{\d{t}} = ax^2+bx+c = a(x-r_-)(x-r_+), \qquad
x(0) = \|u_{\mathrm{in}}\|.
\end{equation}
Letting $y=x-r_-$, we obtain an associated homogeneous quadratic ODE
\begin{equation}
\frac{\d{y}}{\d{t}} = -a(r_+-r_-)y+ay^2 = ay[y-(r_+-r_-)], \qquad
y(0) =\|u_{\mathrm{in}}\|-r_-.
\end{equation}
Since the homogeneous equation has the closed-form solution
\begin{equation}
y(t) = \frac{r_+-r_-}{1-e^{a(r_+-r_-)t}[1-(r_+-r_-)/(\|u_{\mathrm{in}}\|-r_-)]},
\end{equation}
the solution of the inhomogeneous equation can be obtained as
\begin{equation}
x(t) = \frac{r_+-r_-}{1-e^{a(r_+-r_-)t}[1-(r_+-r_-)/(\|u_{\mathrm{in}}\|-r_-)]} + r_-.
\end{equation}
Therefore we have
\begin{equation}
\|u(t)\| \le \frac{r_+-r_-}{1-e^{a(r_+-r_-)t}[1-(r_+-r_-)/(\|u_{\mathrm{in}}\|-r_-)]} + r_-.
\label{eq:solution-norm}
\end{equation}
Since $\Rq < 1 \Leftrightarrow a\|u_{\mathrm{in}}\|+\frac{c}{\|u_{\mathrm{in}}\|}<-b \Leftrightarrow a\|u_{\mathrm{in}}\|^2+b\|u_{\mathrm{in}}\|+c<0$, $\|u_{\mathrm{in}}\|$ is located between the two roots $r_-$ and $r_+$, and thus $1-(r_+-r_-)/(\|u_{\mathrm{in}}\|-r_-)<0$. This implies $\|u(t)\|$ in \eq{solution-norm} decreases from $u(0) = \|u_{\mathrm{in}}\|$, so we have $\|u(t)\|<\|u_{\mathrm{in}}\|<r_+$ for any $t>0$.
\end{proof}

We remark that $\lim_{t\to\infty}\frac{\d\|u\|}{\d{t}}=0$ since $\frac{\d\|u\|}{\d{t}}<0$ and $\|u(t)\|\ge0$, so $u(t)$ approaches to a stationary point of the right-hand side of \eq{NODE} (called an attractor in the theory of dynamical systems). 

Note that for a homogeneous equation (i.e., $\|F_0\|=0$), this shows that the dissipation inevitably leads to exponential decay. In this case we have $r_-=0$, so \eq{solution-norm} gives
\begin{align}
  \|u(t)\| 
  = \frac{\|u_{\mathrm{in}}\| r_+}{e^{a r_+ t}(r_+ - \|u_{\mathrm{in}}\|) + \|u_{\mathrm{in}}\|},
  \label{eq:homo_example}
\end{align}
which decays exponentially in $t$.

On the other hand, as mentioned in the introduction, the solution of a dissipative inhomogeneous quadratic ODE can remain asymptotically nonzero. Here we present an example of this. Consider a time-independent uncoupled system with $\frac{\d{u_j}}{\d{t}} = f_2u_j^2+f_1u_j+f_0$, $j\in\range{n}$, with $u_j(0)=x_0>0$, $f_2>0$, $f_1<0$, $f_0>0$, and $\Rq < 1$. We see that each $u_j(t)$ decreases from $x_0$ to $x_1 \coloneqq \frac{-f_1-\sqrt{f_1^2-4f_2f_0}}{2f_2} >0$, with $0<x_1<u_j(t)<x_0$. Hence, the norm of $u(t)$ is bounded as $0<\sqrt{n}x_1<\|u(t)\|<\sqrt{n}x_0$ for any $t>0$. In general, it is hard to lower bound $\|u(t)\|$, but the above example shows that a nonzero inhomogeneity can prevent the norm of the solution from decreasing to zero.

We now give an upper bound on the error of Carleman linearization. 

\begin{lemma}\label{lem:Carleman}
Consider an instance of the quadratic ODE \eq{NODE}, with its corresponding Carleman linearization as defined in \eq{LODE}. As in \prb{node}, assume that the eigenvalues $\lambda_j$ of $F_1$ satisfy $\Re{(\lambda_n)} \le \cdots \le \Re{(\lambda_1)} < 0$.
Assume that $\Rq$ defined in \eq{A1} satisfies $\Rq<1$ and assume that $\|F_0\| \le \|F_2\|$.
Then for any $j \in \range{N}$, the error $\eta_j(t) \coloneqq u^{\otimes j}(t)-\hat y_j(t)$ satisfies
\begin{equation}
\|\eta_j(t)\| \le \|\eta(t)\| \le tN\|F_2\|\|u_{\mathrm{in}}\|^{N+1}.
\label{eq:bound_general_1}
\end{equation}
\end{lemma}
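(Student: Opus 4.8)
The plan is to realize the error $\eta(t)=(\eta_1(t);\dots;\eta_N(t))$ as the solution of a \emph{linear} inhomogeneous ODE whose only forcing is the term discarded by the truncation, and then to bound it by a Duhamel/Gr\"onwall estimate together with the a priori bound of \lem{decrease}.

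First I would check that the exact tensor powers solve the untruncated Carleman hierarchy exactly: differentiating $u^{\otimes j}$ and substituting \eq{NODE},
\[
\frac{\d}{\d t}u^{\otimes j} = A_{j+1}^j u^{\otimes(j+1)} + A_j^j u^{\otimes j} + A_{j-1}^j u^{\otimes(j-1)},
\]
which is exactly the $j$-th block row of \eq{UODE} for every $j\le N-1$ (the inhomogeneity $F_0$ in the first block cancels), while for $j=N$ there is one extra term $A_{N+1}^N u^{\otimes(N+1)}$ that the truncated system \eq{UODE} drops. Subtracting \eq{UODE} and using $\eta(0)=\hat y_{\mathrm{in}}-\hat y_{\mathrm{in}}=0$, the error therefore obeys
\[
\frac{\d\eta}{\d t} = A(t)\eta + \bigl(0;\dots;0;\,A_{N+1}^N u^{\otimes(N+1)}(t)\bigr),\qquad \eta(0)=0.
\]
The forcing is easy to bound: $A_{N+1}^N$ is a sum of $N$ terms of the form $I^{\otimes a}\otimes F_2\otimes I^{\otimes b}$, so $\|A_{N+1}^N\|\le N\|F_2\|$, and $\|u^{\otimes(N+1)}(t)\|=\|u(t)\|^{N+1}<\|u_{\mathrm{in}}\|^{N+1}$ by \lem{decrease}; hence the forcing has norm $<N\|F_2\|\|u_{\mathrm{in}}\|^{N+1}$ for all $t$. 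If the homogeneous flow generated by $A(t)$ is non-expansive---equivalently, if the relevant logarithmic norm satisfies $\mu(A(t))\le 0$---then $\frac{\d}{\d t}\|\eta\|\le N\|F_2\|\|u_{\mathrm{in}}\|^{N+1}$, which integrated from $\eta(0)=0$ yields $\|\eta(t)\|\le tN\|F_2\|\|u_{\mathrm{in}}\|^{N+1}$; the inequality $\|\eta_j\|\le\|\eta\|$ is then trivial since $\eta_j$ is a sub-block.

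The crux, and the place where the ``milder conditions'' in the text must do their work, is the non-expansiveness of $A(t)$: we only assume $\Re(\lambda_1)<0$ and $\Rq<1$, not that the logarithmic norm of $F_1$ is negative as in \cite{FP17}, so $F_1$ can be far from normal and the Euclidean logarithmic norm of $A_N^N$ alone can be positive. I would therefore measure $\eta$ not in the Euclidean norm but in one adapted to an eigenbasis of $F_1$: writing $F_1=S\Lambda S^{-1}$ and weighting block $j$ by $(S^{-\dagger}S^{-1})^{\otimes j}$, the estimate $\Lambda+\Lambda^\dagger\preceq 2\Re(\lambda_1)I$ makes each diagonal block $A_j^j$ dissipative with rate $j\Re(\lambda_1)$, and one then has to verify that the off-diagonal blocks $A_{j+1}^j$ (built from $F_2$) and $A_{j-1}^j$ (built from $F_0$) are dominated by this dissipation uniformly in $j$---which, after the rescaling \eq{A2}--\eq{A3}, is precisely the content of $\Rq<1$, the coupling growing only linearly in $j$ just as the dissipation rate does. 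An alternative in the same spirit as the proof of \lem{decrease} would be to dominate $\|\eta(t)\|$ by the solution of a scalar linear ODE whose decay rate is $\Re(\lambda_1)$ plus a coupling contribution of size $<|\Re(\lambda_1)|$. The delicate points I anticipate are (i) ensuring that the leak, injected only at level $N$, is not amplified as it cascades down through the $N-1$ couplings to level $1$, and (ii) arranging the weighted-norm argument so that the conditioning of $S$ does not enter the final bound---phrasing the forcing estimate through $\|u(t)\|$ via \lem{decrease}, rather than through a weighted norm of $u^{\otimes(N+1)}$, is what I would rely on to keep the statement clean.
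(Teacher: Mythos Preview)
Your setup matches the paper exactly: the error solves $\frac{\d\eta}{\d t}=A(t)\eta+\hat b(t)$, $\eta(0)=0$, with forcing $\hat b=(0;\dots;0;A_{N+1}^N u^{\otimes(N+1)})$ bounded by $N\|F_2\|\|u_{\mathrm{in}}\|^{N+1}$ via \lem{decrease}.

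The difference is in the dissipativity step. You reach for a norm weighted by $(S^{-\dagger}S^{-1})^{\otimes j}$ and rightly worry about $\kappa(S)$ contaminating the bound. The paper stays in the Euclidean norm and instead expands $\frac{\d}{\d t}\|\eta\|^2$ blockwise, using on the diagonal the \emph{same} estimate $v^\dagger(F_1+F_1^\dagger)v\le 2\Re(\lambda_1)\|v\|^2$ already invoked in the proof of \lem{decrease} (so nothing new is assumed beyond what that lemma already uses) and on the off-diagonals simply $\|A_{j+1}^j\|\le j\|F_2\|$, $\|A_{j-1}^j\|\le j\|F_0\|$. This collapses the quadratic form to one governed by an $N\times N$ tridiagonal matrix $G$ with $G_{j,j}=j\Re(\lambda_1)$, $G_{j+1,j}=j\|F_2\|$, $G_{j-1,j}=j\|F_0\|$; after the rescaling \eq{A2}, $G$ is strictly diagonally dominant with negative diagonal, so its eigenvalues have $\Re(\nu_1)<0$, giving $\frac{\d}{\d t}\|\eta\|\le\Re(\nu_1)\|\eta\|+N\|F_2\|\|u_{\mathrm{in}}\|^{N+1}$ and the claim upon integration. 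This comparison-matrix device disposes of both your delicate points at once---the off-diagonals of $G$ scale with $j$ exactly as the diagonal does, so the cascade from level $N$ down to level $1$ is controlled uniformly, and no eigenbasis of $F_1$ ever appears. Your ``alternative in the spirit of \lem{decrease}'' is actually much closer to the paper's route than your primary weighted-norm proposal; it is precisely that idea, lifted from a scalar comparison to an $N\times N$ one.
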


\begin{proof}
The exact solution $u(t)$ of the original quadratic ODE \eq{NODE} satisfies
\begin{equation}
\frac{\d{}}{\d{t}}
  \begin{pmatrix}
    u \\
    u^{\otimes 2} \\
    u^{\otimes 3} \\
    \vdots \\
    u^{\otimes(N-1)} \\
    u^{\otimes N} \\
    \vdots \\
  \end{pmatrix}
=
  \begin{pmatrix}
    A_1^1 & A_2^1 &  &  &  &  &  \\
    A_1^2 & A_2^2 & A_3^2 & &  &  &  \\
     & A_2^3 & A_3^3 & A_4^3 &  &  &  \\
     &  & \ddots & \ddots & \ddots &  &  \\
     &  &  & A_{N-2}^{N-1} & A_{N-1}^{N-1} & A_N^{N-1} &  \\
     &  &  &  & A_{N-1}^N & A_N^N & \ddots \\
     &  &  &  &  & \ddots & \ddots \\
  \end{pmatrix}
  \begin{pmatrix}
    u \\
    u^{\otimes 2} \\
    u^{\otimes 3} \\
    \vdots \\
    u^{\otimes(N-1)} \\
    u^{\otimes N} \\
    \vdots \\
  \end{pmatrix}+
  \begin{pmatrix}
    F_0(t) \\
    0 \\
    0 \\
    \vdots \\
    0 \\
    0 \\
    \vdots \\
  \end{pmatrix},
\label{eq:LUODE}
\end{equation}
and the approximated solution $\hat y_j(t)$ satisfies \eq{UODE}. Comparing these equations, we have
\begin{equation}
  \frac{\d{\eta}}{\d{t}} = A(t) \eta + \hat b(t), \qquad
  \eta(0) = 0
\label{eq:ELODE}
\end{equation}
with the tri-diagonal block structure
\begin{equation}
\frac{\d{}}{\d{t}}
  \begin{pmatrix}
    \eta_1 \\
    \eta_2 \\
    \eta_3 \\
    \vdots \\
    \eta_{N-1} \\
    \eta_N \\
  \end{pmatrix}
=
  \begin{pmatrix}
    A_1^1 & A_2^1 &  &  &  &  \\
    A_1^2 & A_2^2 & A_3^2 & &  &  \\
     & A_2^3 & A_3^3 & A_4^3 &  &  \\
     &  & \ddots & \ddots & \ddots &  \\
     &  &  & A_{N-2}^{N-1} & A_{N-1}^{N-1} & A_N^{N-1} \\
     &  &  &  & A_{N-1}^N & A_N^N \\
  \end{pmatrix}
  \begin{pmatrix}
    \eta_1 \\
    \eta_2 \\
    \eta_3 \\
    \vdots \\
    \eta_{N-1} \\
    \eta_N \\
  \end{pmatrix}+
  \begin{pmatrix}
    0 \\
    0 \\
    0 \\
    \vdots \\
    0 \\
    A_{N+1}^Nu^{\otimes(N+1)} \\
  \end{pmatrix},
\label{eq:EUODE}
\end{equation}

Consider the derivative of $\|\eta(t)\|$. We have
\begin{equation}
\frac{\d\|\eta\|^2}{\d{t}} = \eta^\dag(A(t) + A^\dag(t))\eta + \eta^\dag \hat b(t) + \hat b(t)^\dag \eta.
\label{eq:etabound}
\end{equation}
For $\eta^\dag(A(t) + A^\dag(t))\eta$, we bound each term as
\begin{equation}
\begin{aligned}
\eta_j^\dag A_{j+1}^j\eta_{j+1}+\eta_{j+1}^\dag (A_{j+1}^j)^\dag\eta_j &\le j\|F_2\|\|\eta_{j+1}\|\|\eta_j\|, \\
\eta_j^\dag [A_j^j+(A_j^j)^\dag]\eta_j &\le 2j\Re(\lambda_1)\|\eta_j\|^2, \\
\eta_j^\dag A_{j-1}^j\eta_{j-1}+\eta_{j-1}^\dag (A_{j-1}^j)^\dag\eta_j &\le 2j\|F_0\|\|\eta_{j-1}\|\|\eta_j\|
\end{aligned}
\end{equation}
using the definitions in \eq{tensor2}--\eq{tensor0}.

Define a matrix $G\in\R^{N\times N}$ with nonzero entries $G_{j-1,j} = j\|F_0\|$, $G_{j,j} = j\Re(\lambda_1)$, and $G_{j+1,j} = j\|F_2\|$. Then
\begin{equation}
\eta^\dag(A(t) + A^\dag(t))\eta \le \tilde{\eta}^\dag(G + G^\dag)\tilde{\eta}
\label{eq:eta_a}
\end{equation}
where $\tilde\eta \in \R^N$ with $\tilde\eta_j = \|\eta_j\|$. Since $\|F_2\| + \|F_0\| < |{\Re{(\lambda_1)}}|$ and $\|F_0\| \le \|F_2\|$, $G+ G^\dag$ is strictly diagonally dominant and thus the eigenvalues $\nu_j$ of $(G+G^\dag)/2$ satisfy $\nu_n \le \cdots \le \nu_1 < 0$. Thus we have
\begin{equation}
\tilde{\eta}^\dag(G + G^\dag)\tilde{\eta} \le 2\nu_1\|\eta\|^2.
\label{eq:eta_g}
\end{equation}
For $\eta^\dag \hat b(t) + \hat b(t)^\dag \eta$, we have
\begin{equation}
\eta^\dag \hat b(t) + \hat b(t)^\dag \eta \le 2\|\hat b\|\|\eta\| \le 2\|A_{N+1}^N\|\|u^{\otimes(N+1)}\|\|\eta\|.
\end{equation}
Since $\|A_{N+1}^N\| = N\|F_2\|$, and $\|u^{\otimes(N+1)}\|=\|u\|^{N+1} \le \|u_{\mathrm{in}}\|^{N+1}$, we have
\begin{equation}
\eta^\dag \hat b(t) + \hat b(t)^\dag \eta \le 2N\|F_2\|\|u_{\mathrm{in}}\|^{N+1}\|\eta\|.
\label{eq:eta_b}
\end{equation}

Using \eq{eta_a}, \eq{eta_g}, and \eq{eta_b} in \eq{etabound}, we find
\begin{equation}
\frac{\d\|\eta\|^2}{\d{t}} \le 2\nu_1\|\eta\|^2 + 2N\|F_2\|\|u_{\mathrm{in}}\|^{N+1}\|\eta\|,
\end{equation}
so elementary calculus gives
\begin{equation}
\frac{\d\|\eta\|}{\d{t}} 
= \frac{1}{2\|\eta\|}\frac{\d\|\eta\|^2}{\d{t}}
\le \nu_1\|\eta\| + N\|F_2\|\|u_{\mathrm{in}}\|^{N+1}.
\end{equation}
Solving the differential inequality as an equation with $\eta(0) = 0$ gives us a bound on $\|\eta\|$:
\begin{equation}
\|\eta(t)\| \le \int_0^te^{\Re(\nu_1)(t-s)}N\|F_2\|\|u_{\mathrm{in}}\|^{N+1}\,\d{s} \le N\|F_2\|\|u_{\mathrm{in}}\|^{N+1}\int_0^te^{\nu_1(t-s)}\d{s}.
\label{eq:eta_norm}
\end{equation}
Finally, using
\begin{equation}
\int_0^te^{\nu_1(t-s)}\d{s}=\frac{1-e^{\nu_1t}}{|\nu_1|}\le t
\label{eq:norm_negative_t}
\end{equation}
(where we used the inequality $1-e^{at}\le -at$ with $a<0$), \eq{eta_norm} gives the bound
\begin{equation}
\|\eta_j(t)\| \le \|\eta(t)\| \le tN\|F_2\|\|u_{\mathrm{in}}\|^{N+1}
\label{eq:eta_norm_1}
\end{equation}
as claimed. 
\end{proof}

Note that \eq{norm_negative_t} can be bounded alternatively by
\begin{equation}
\int_0^te^{\nu_1(t-s)}\d{s}=\frac{1-e^{\nu_1t}}{|\nu_1|}\le\frac{1}{|\nu_1|},
\label{eq:norm_negative_nu}
\end{equation}
and thus $\|\eta_j(t)\| \le \|\eta(t)\| \le \frac{1}{|\nu_1|}N\|F_2\|\|u_{\mathrm{in}}\|^{N+1}$. We select \eq{norm_negative_t} because it avoids including an additional parameter $\nu_1$.

We also give an improved analysis that works for homogeneous quadratic ODEs ($F_0(t)=0$) under milder conditions. This analysis follows the proof in \cite{FP17} closely.

\begin{corollary}\label{cor:Carleman_homo}
Under the same setting of \lem{Carleman}, assume $F_0(t)=0$ in \eq{NODE}.
Then for any $j \in \range{N}$, the error $\eta_j(t) \coloneqq u^{\otimes j}(t)-\hat y_j(t)$ satisfies
\begin{equation}
\|\eta_j(t)\| \le \|u_{\mathrm{in}}\|^j\Rq^{N+1-j}.
\label{eq:bound_general}
\end{equation}
For $j=1$, we have the tighter bound
\begin{equation}
\|\eta_1(t)\| \le
\|u_{\mathrm{in}}\| \Rq^N\big(1-e^{\Re(\lambda_1)t}\big)^{N}.
\label{eq:bound_1}
\end{equation}
\end{corollary}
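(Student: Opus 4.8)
The plan is to exploit that in the homogeneous case ($F_0(t)=0$) the lower blocks vanish — $A_{j-1}^j=0$ in \eq{tensor0} — so the tri-diagonal error system \eq{EUODE} collapses to a block \emph{upper}-bidiagonal cascade: $\eta_N$ is driven only by the source $A_{N+1}^N u^{\otimes(N+1)}$, and for $j<N$ each $\eta_j$ is driven only by $A_{j+1}^j\eta_{j+1}$. First I would reuse the bilinear estimates already established in the proof of \lem{Carleman} — that the symmetric part of $A_j^j$ contributes at most $j\Re(\lambda_1)$, that $\|A_{j+1}^j\|\le j\|F_2\|$, and that $\|A_{N+1}^N\|=N\|F_2\|$ — together with the decay bound $\|u(t)\|\le\|u_{\mathrm{in}}\|$ from \lem{decrease}, to turn each block into a scalar Duhamel inequality: $\|\eta_j(t)\|\le j\|F_2\|\int_0^t e^{j\Re(\lambda_1)(t-s)}\|\eta_{j+1}(s)\|\,\d s$ for $j<N$, and $\|\eta_N(t)\|\le N\|F_2\|\,\|u_{\mathrm{in}}\|^{N+1}\int_0^t e^{N\Re(\lambda_1)(t-s)}\,\d s$ at the bottom level. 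I would use throughout that $\|F_2\|\|u_{\mathrm{in}}\|/|\Re(\lambda_1)|=\Rq$ in the homogeneous case.

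For the uniform bound \eq{bound_general} I would solve the cascade from $j=N$ downward. Using $\int_0^t e^{j\Re(\lambda_1)(t-s)}\,\d s\le 1/(j|\Re(\lambda_1)|)$, the bottom level gives $\|\eta_N(t)\|\le\|F_2\|\|u_{\mathrm{in}}\|^{N+1}/|\Re(\lambda_1)|=\|u_{\mathrm{in}}\|^N\Rq$; the key point is that the factor $j$ in the coupling exactly cancels the $j$ produced by the integral, so each descending step merely multiplies the bound by $\|F_2\|\|u_{\mathrm{in}}\|/|\Re(\lambda_1)|=\Rq$. Downward induction on $j$ then yields $\|\eta_j(t)\|\le\|u_{\mathrm{in}}\|^j\Rq^{N+1-j}$.

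For the sharper $j=1$ bound \eq{bound_1}, a level-by-level recursion carrying time-dependent estimates would leak spurious factors of $j$, so instead I would unroll the whole cascade into a single $N$-fold iterated integral over the ordered simplex: with $s_0\coloneqq t$,
\begin{equation*}
\|\eta_1(t)\|\le\Bigl(\prod_{j=1}^{N}j\|F_2\|\Bigr)\|u_{\mathrm{in}}\|^{N+1}\int_{t\ge s_1\ge\dots\ge s_N\ge0}\prod_{j=1}^{N}e^{j\Re(\lambda_1)(s_{j-1}-s_j)}\,\d s_N\dots\d s_1 .
\end{equation*}
Then I would substitute $w_j\coloneqq e^{\Re(\lambda_1)s_j}$ for $j=1,\dots,N$, with $w_0\coloneqq e^{\Re(\lambda_1)t}$: each kernel factor becomes $(w_{j-1}/w_j)^{j}$, the Jacobian contributes $|\Re(\lambda_1)|^{-N}\prod_{j=1}^{N}w_j^{-1}$, and after collecting powers of the $w_j$ the integrand reduces to the single term $w_0\,w_N^{-(N+1)}$; the simplex maps to $\{w_0\le w_1\le\dots\le w_N\le1\}$, and a direct evaluation of $\int w_0\,w_N^{-(N+1)}\,\d w_1\dots\d w_N$ over this region gives $(1-w_0)^N/N!$. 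The $N!$ from $\prod_{j=1}^{N}j$ then cancels the $1/N!$, leaving $\|\eta_1(t)\|\le\|u_{\mathrm{in}}\|^{N+1}\bigl(\|F_2\|/|\Re(\lambda_1)|\bigr)^N(1-w_0)^N=\|u_{\mathrm{in}}\|\Rq^N\bigl(1-e^{\Re(\lambda_1)t}\bigr)^N$, using $w_0\le1$.

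The step I expect to be the main obstacle is this last change of variables together with the simplex integration: one must verify that the powers of the intermediate $w_j$ really all cancel (so that no growing factor survives), that the product of coupling constants is exactly $N!\|F_2\|^N$ so that it precisely offsets the simplex volume, and that the remaining one-dimensional integral in $w_N$ evaluates to exactly $(1-w_0)^N/N$ — getting any of these constants slightly wrong would leave an unwanted $N!$ or $N^N$ and break the bound. A secondary technical point, inherited from \lem{Carleman}, is controlling the diagonal block exponential $\|e^{A_j^j\tau}\|=\|e^{F_1\tau}\|^{j}$: since $F_1$ is diagonalizable with $\Re(\lambda_1)<0$, one passes to its eigenbasis to obtain a bound of order $e^{j\Re(\lambda_1)\tau}$, mirroring the analogous step in \cite{FP17}.
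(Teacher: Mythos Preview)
Your proposal is correct and follows essentially the same route as the paper: exploit the upper-bidiagonal structure when $F_0=0$, iterate the Duhamel bound from $j=N$ downward using $\|A_{j+1}^j\|\le j\|F_2\|$ and the $e^{j\Re(\lambda_1)\tau}$ semigroup estimate, and for $j=1$ unroll into the $N$-fold nested integral. The one substantive difference is how the nested integral is evaluated: the paper invokes Lemma~5.2 of \cite{FP17} for the identity $\int_{0\le s_0\le\dots\le s_{N-1}\le t} e^{a(-Ns_0+\sum_k s_k)}\,\d s_0\cdots\d s_{N-1}=(e^{at}-1)^N/(N!\,a^N)$, whereas you compute it directly via the substitution $w_j=e^{\Re(\lambda_1)s_j}$, observing that the powers telescope to $w_0\,w_N^{-(N+1)}$ and that the simplex integral then reduces to $(1-w_0)^N/N!$. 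Your route is more self-contained; the paper's is shorter by outsourcing the calculation.
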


\begin{proof}
We again consider $\eta$ satisfying \eq{ELODE}. Since $F_0(t)=0$, \eq{ELODE} reduces to a time-independent ODE with an upper triangular block structure,
\begin{equation}
\frac{\d{\eta_j}}{\d{t}} = A_j^j\eta_j+A_{j+1}^j\eta_{j+1}, \quad j\in\range{N-1}
\label{eq:deta}
\end{equation}
and
\begin{equation}
\frac{\d{\eta_N}}{\d{t}} = A_N^N\eta_N+A_{N+1}^Nu^{\otimes(N+1)}.
\label{eq:EODE_N}
\end{equation}

We proceed by backward substitution. Since $\eta_N(0)=0$, we have
\begin{equation}
\eta_N(t) = \int_0^te^{A_N^N(t-s_0)}A_{N+1}^Nu^{\otimes(N+1)}(s_0)\,\d{s_0}.
\label{eq:eta_N}
\end{equation}
For $j \in \range{N}$, \eq{tensor1} gives
$\|e^{A_j^j t}\| = e^{j\Re(\lambda_1)t}$
and \eq{tensor2} gives $\|A_{j+1}^j\| = j \norm{F_2}$.
By \lem{decrease}, $\|u^{\otimes(N+1)}\|=\|u\|^{N+1} \le \|u_{\mathrm{in}}\|^{N+1}$. We can therefore upper bound \eq{eta_N} by
\begin{equation}
\begin{aligned}
\|\eta_N(t)\|
&\le \int_0^t\|e^{A_N^N(t-s_0)}\|\cdot\|A_{N+1}^Nu^{\otimes(N+1)}(s_0)\|\,\d{s_0} \\
&\le N\|F_2\|\|u_{\mathrm{in}}\|^{N+1}\int_0^te^{N\Re(\lambda_1)(t-s_0)}\,\d{s_0}.
\label{eq:norm_N}
\end{aligned}
\end{equation}
For $j=N-1$, \eq{deta} gives
\begin{equation}
\frac{\d{\eta_{N-1}}}{\d{t}} = A_{N-1}^{N-1}\eta_{N-1}+A_N^{N-1}\eta_N.
\label{eq:EODE_N_1}
\end{equation}
Again, since $\eta_{N-1}(0)=0$, we have
\begin{equation}
\eta_{N-1}(t) = \int_0^te^{A_{N-1}^{N-1}(t-s_1)}A_N^{N-1}\eta_N(s_1)\,\d{s_1}
\label{eq:eta_N_1}
\end{equation}
which has the upper bound
\begin{equation}
\begin{aligned}
\|\eta_{N-1}(t)\|
&\le \int_0^t\|e^{A_{N-1}^{N-1}(t-s_1)}\|\cdot\|A_N^{N-1}\eta_N(s_1)\|\,\d{s_1} \\
&\le (N-1)\|F_2\|\int_0^te^{(N-1)\Re(\lambda_1)(t-s_1)}\|\eta_N(s_1)\|\,\d{s_1} \\
&\le N(N-1)\|F_2\|^2\|u_{\mathrm{in}}\|^{N+1}\int_0^te^{(N-1)\Re(\lambda_1)(t-s_1)}\int_0^{s_1}e^{N\Re(\lambda_1)(s_1-s)}\,\d{s_0}\,\d{s_1},
\label{eq:norm_N_1}
\end{aligned}
\end{equation}
where we used \eq{norm_N} in the last step.
Iterating this procedure for $j=N-2, \ldots, 1$, we find
\begin{equation}
\begin{aligned}
\|\eta_{j}(t)\|
&\le \frac{N!}{(j-1)!}\|F_2\|^{N+1-j}\|u_{\mathrm{in}}\|^{N+1} \int_0^te^{j\Re(\lambda_1)(t-s_{N-j})}\int_0^{s_{N-j}}e^{(j+1)\Re(\lambda_1)(s_{N-j}-s_{N-1-j})}\cdots \\
&\qquad \int_0^{s_2}e^{(N-1)\Re(\lambda_1)(s_2-s_1)}\int_0^{s_1}e^{N\Re(\lambda_1)(s_1-s_0)}\,\d{s_0}\,\d{s_1}\cdots \d{s_{N-j-1}} \, \d{s_{N-j}} \\
&= \frac{N!}{(j-1)!}\|F_2\|^{N+1-j}\|u_{\mathrm{in}}\|^{N+1} \!\int_0^{s_{N+1-j}}\!\!\cdots\!\int_0^{s_2}\!\!\int_0^{s_1} \!\!e^{\Re(\lambda_1)(-Ns_0+\sum_{k=1}^{N-j+1}s_k)} \, \d{s_0}\,\d{s_1}\cdots\d{s_{N-j}}.
\label{eq:norm_j}
\end{aligned}
\end{equation}

Finally, using
\begin{equation}
\int_0^{s_{k+1}}e^{(N-k)\Re(\lambda_1)(s_{k+1}-s_k)}\d{s_k}=\frac{1-e^{(N-k)\Re(\lambda_1)s_{k+1}}}{(N-k)|{\Re(\lambda_1)}|}\le\frac{1}{(N-k)|{\Re(\lambda_1)}|}
\label{eq:norm_negative}
\end{equation}
for $k=0,\ldots,N-j$, \eq{norm_j} can be bounded by
\begin{equation}
\|\eta_{j}(t)\| \le \frac{N!}{(j-1)!}\|F_2\|^{N+1-j}\|u_{\mathrm{in}}\|^{N+1}\frac{(j-1)!}{N!|{\Re(\lambda_1)}|^{N+1-j}} = \frac{\|u_{\mathrm{in}}\|^{N+1}\|F_2\|^{N+1-j}}{|{\Re(\lambda_1)}|^{N+1-j}} =  \|u_{\mathrm{in}}\|^j\Rq^{N+1-j}.
\label{eq:norm_general}
\end{equation}

For $j=1$, the bound can be further improved. By Lemma 5.2 of \cite{FP17}, if $a\ne 0$,
\begin{equation}
\int_0^{s_{N}}\!\cdots\int_0^{s_2}\int_0^{s_1}e^{a(-Ns_0+\sum_{k=1}^Ns_k)}\,\d{s_0}\,\d{s_1}\cdots\d{s_{N-1}} = \frac{(e^{as_N}-1)^N}{N!a^N}.
\label{eq:integration}
\end{equation}
With $s_N=t$ and $a=\Re(\lambda_1)$, we find
\begin{equation}
\begin{aligned}
\|\eta_{1}(t)\| 
&\le N!\|F_2\|^N\|u_{\mathrm{in}}\|^{N+1} \int_0^{s_N}\!\cdots\int_0^{s_2}\int_0^{s_1} e^{\Re(\lambda_1)(-Ns_0+\sum_{k=1}^Ns_k)} \, \d{s_0}\,\d{s_1}\cdots\d{s_{N-1}} \\
&\le N!\|F_2\|^N\|u_{\mathrm{in}}\|^{N+1}\frac{(e^{\Re(\lambda_1)t}-1)^N}{N!\Re(\lambda_1)^N} \\
&=  \|u_{\mathrm{in}}\|\Rq^N\big(1-e^{at}\big)^N,
\label{eq:norm_1}
\end{aligned}
\end{equation}
which is tighter than the $j=1$ case in \eq{norm_general}.
\end{proof}

While \prb{node} makes some strong assumptions about the system of differential equations, they appear to be necessary for our analysis. Specifically, the conditions $\Re{(\lambda_1)} < 0$ and $\Rq<1$ are required to ensure arbitrary-time convergence.

Since the Euler method for \eq{LODE} is unstable if $\Re{(\lambda_1)}>0$ \cite{Ber14,Dah63}, we only consider the case $\Re{(\lambda_1)}\le0$. If $\Re{(\lambda_1)}=0$, \eq{integration} reduces to
\begin{equation}
\int_0^{s_{N}}\!\cdots\int_0^{s_2}\int_0^{s_1}e^{a(-Ns_0+\sum_{k=1}^Ns_k)}\,\d{s_0}\,\d{s_1}\cdots\d{s_{N-1}} = \frac{t^N}{N!},
\end{equation}
giving the error bound
\begin{equation}
\|\eta_1(t)\| \le \|u_{\mathrm{in}}\|(\|u_{\mathrm{in}}\|\|F_2\|t)^N
\end{equation}
instead of \eq{norm_1}. Then the error bound can be made arbitrarily small for a finite time by increasing $N$, but after $t > {1}/{\|u_{\mathrm{in}}\| \|F_2\|}$, the error bound diverges.

Furthermore, if $\Rq \ge 1$, Bernoulli's inequality gives
\begin{equation}
\|u_{\mathrm{in}}\|(1-Ne^{\Re(\lambda_1)t}) \le \|u_{\mathrm{in}}\|(1-Ne^{\Re(\lambda_1)t})\Rq^N  \le \|u_{\mathrm{in}}\|(1-e^{\Re(\lambda_1)t})^N\Rq^N,
\end{equation}
where the right-hand side upper bounds $\|\eta_1(t)\|$ as in \eq{norm_1}. Assuming $\norm{\eta_1(t)} = \norm{u(t)-\hat y_1(t)}$ is smaller than $\|u_{\mathrm{in}}\|$, we require
\begin{equation}
N = \Omega (e^{|\Re(\lambda_1)|t}).
\end{equation}
In other words, to apply \eq{bound_1} for the Carleman linearization procedure, the truncation order given by \lem{Carleman} must grow exponentially with $t$.

In fact, we prove in \sec{lower} that for $\Rq\ge\sqrt{2}$, \emph{no} quantum algorithm (even one based on a technique other than Carleman linearization) can solve \prb{node} efficiently. It remains open to understand the complexity of the problem for $1 \le \Rq < \sqrt2$.

On the other hand, if $\Rq<1$, both \eq{bound_general} and \eq{bound_1} decrease exponentially with $N$, making the truncation efficient. We discuss the specific choice of $N$ in \eq{N} below.

\subsubsection{Error from forward Euler method}

Next, we provide an upper bound for the error incurred by approximating \eq{LODE} with the forward Euler method. This problem has been well studied for general ODEs. Given an ODE $\frac{\d{z}}{\d{t}}=f(z)$ on $[0,T]$ with an inhomogenity $f$ that is an $L$-Lipschitz continuous function of $z$, the global error of the solution is upper bounded by $e^{LT}$, although in most cases this bound overestimates the actual error \cite{Atk08}. To remove the exponential dependence on $T$ in our case, we derive a tighter bound for time discretization of \eq{LODE} in \lem{global} below. This lemma is potentially useful for other ODEs as well and can be straightforwardly adapted to other problems.

\begin{lemma}\label{lem:global}
Consider an instance of the quantum quadratic ODE problem as defined in \prb{node}, with $\Rq<1$ as defined in \eq{A1}. Choose a time step
\begin{equation}
h \le \min \biggl\{ \frac{1}{N\|F_1\|}, \frac{2(|\Re(\lambda_1)|-\|F_2\|-\|F_0\|)}{N(|\Re(\lambda_1)|^2 - (\|F_2\|+\|F_0\|)^2 + \|F_1\|^2)} \biggr\}
\label{eq:h}
\end{equation}
in general, or
\begin{equation}
h \le \frac{1}{N\|F_1\|}
\label{eq:h_real}
\end{equation}
if the eigenvalues of $F_1$ are all real. Suppose the error from Carleman linearization $\eta(t)$ as defined in \lem{Carleman} is bounded by
\begin{equation}
\|\eta(t)\| \le \frac{g}{4},
\label{eq:g_condition_1}
\end{equation}
where $g$ is defined in \eq{gq}. Then the global error from the forward Euler method \eq{forward} on the interval $[0,T]$ for \eq{LODE} satisfies
\begin{equation}
\|\hat y_1(T)-y^m_1\| 
\le
\|\hat y(T)-y^m\| 
\le 3N^{2.5}Th[(\|F_2\|+\|F_1\|+\|F_0\|)^2+\|F_0'\|].
\end{equation}
\end{lemma}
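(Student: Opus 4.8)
The plan is to run the textbook global-error analysis of the forward Euler method, but to replace the usual Gr\"onwall factor $e^{LT}$---which would reintroduce exponential dependence on $T$---by a uniform discrete-stability estimate $\|I + A(t)h\| \le 1$. Such a contraction is available precisely because \eq{LODE} is dissipative, i.e.\ $A(t) + A(t)^{\dagger} \prec 0$ (as established inside the proof of \lem{Carleman}), provided the stepsize $h$ is small enough to dominate the $O(h^{2})$ term; pinning down the sharp condition on $h$ is the crux.

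First I would compare the Euler iteration \eq{forward} with the second-order Taylor expansion (with integral remainder) of the exact solution $\hat y$ of \eq{LODE} about each node $t = kh$. Since $F_0 \in C^1$, $\hat y$ is $C^2$ with $\hat y' = A(t)\hat y + b(t)$, and matching the two expansions gives the error recursion $e^{k+1} = (I + A(kh)h)e^{k} + \tau^{k}$ for the global error $e^{k} \coloneqq \hat y(kh) - y^{k}$, with $e^{0} = 0$ and local truncation error $\|\tau^{k}\| \le \tfrac{h^{2}}{2}\max_{t\in[0,T]}\|\hat y''(t)\|$. Unrolling the recursion and using $\|I + A(lh)h\| \le 1$ for $l = 0,\dots,m-1$ yields $\|\hat y(T) - y^{m}\| = \|e^{m}\| \le \sum_{k=0}^{m-1}\|\tau^{k}\| \le \tfrac{mh^{2}}{2}\max_{t}\|\hat y''(t)\| = \tfrac{Th}{2}\max_{t}\|\hat y''(t)\|$, while $\|\hat y_1(T) - y^{m}_1\| \le \|\hat y(T) - y^{m}\|$ since $\hat y_1$ and $y^m_1$ are sub-vectors. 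It remains to supply (i) the stability bound $\|I + A(t)h\| \le 1$ and (ii) a bound on $\max_{t}\|\hat y''(t)\|$.

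For (ii), differentiating \eq{LODE} once more gives $\hat y'' = A'(t)\hat y + A(t)^{2}\hat y + A(t)b(t) + b'(t)$. Only the sub-diagonal blocks \eq{tensor0} depend on $t$, so $\|A'(t)\| \le N\|F_0'\|$ and $\|b'(t)\| \le \|F_0'\|$; also $\|A(t)\| \le N(\|F_2\|+\|F_1\|+\|F_0\|)$ and $\|b(t)\| \le \|F_0\|$ from the tensor structure \eq{tensor2}--\eq{tensor0}. Finally $\|\hat y(t)\| \le \sqrt{\sum_{j=1}^{N}\|u(t)\|^{2j}} + \|\eta(t)\| \le \sqrt{N} + g/4 \le 2\sqrt{N}$, using $\hat y_j = u^{\otimes j} - \eta_j$, the bound $\|u(t)\| < \|u_{\mathrm{in}}\| < 1$ (\lem{decrease} with the rescaling \eq{A3}), and the hypothesis \eq{g_condition_1}. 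Assembling these and absorbing absolute constants gives $\max_{t}\|\hat y''(t)\| \le 3N^{2.5}[(\|F_2\|+\|F_1\|+\|F_0\|)^{2}+\|F_0'\|]$, which when substituted above proves the claimed bound (the constant $3$ is loose).

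The substantive step is (i). Writing $\|(I + A(t)h)v\|^{2} = \|v\|^{2} + h\,v^{\dagger}(A(t)+A(t)^{\dagger})v + h^{2}\|A(t)v\|^{2}$, the middle term is negative and quantitatively controlled by the block-by-block computation from the proof of \lem{Carleman}: $A(t)+A(t)^{\dagger} \preceq G + G^{\dagger} \preceq 2\Re(\nu_1) I$ with $|\Re(\nu_1)| \ge |\Re(\lambda_1)| - \|F_2\| - \|F_0\| > 0$ (strict diagonal dominance of $G$ under \eq{A2}). The point is to dominate $h^{2}\|A(t)v\|^{2}$ by this negative drift without the wasteful estimate $\|A(t)v\| \le \|A(t)\|\,\|v\| \le N(\|F_2\|+\|F_1\|+\|F_0\|)\|v\|$; instead I would argue block-wise, bounding the $j$-th block of $(I+A(t)h)v$ by $\|(I + A_j^j h)v_j\| + jh\|F_0\|\,\|v_{j-1}\| + jh\|F_2\|\,\|v_{j+1}\|$ and using $\|(I + A_j^j h)v_j\|^{2} \le (1 - 2jh|\Re(\lambda_1)| + j^{2}h^{2}\|F_1\|^{2})\|v_j\|^{2}$, which follows from $A_j^j + (A_j^j)^{\dagger} \preceq 2j\Re(\lambda_1)I$ and $\|A_j^j\| \le j\|F_1\|$. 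Summing over $j$ reduces $\|(I+A(t)h)v\| \le \|v\|$ to the requirement that an explicit nonnegative $N\times N$ tridiagonal matrix---diagonal entries $\sqrt{1 - 2jh|\Re(\lambda_1)| + j^{2}h^{2}\|F_1\|^{2}}$, off-diagonals $jh\|F_0\|$ and $jh\|F_2\|$---has spectral norm at most $1$; bounding that norm by its largest row sum and squaring is essentially what forces the stepsize condition \eq{h}. When the eigenvalues of $F_1$ are real, the blocks $I + A_j^j h$ can be handled more tightly through the eigenvalues of $F_1$ directly, dropping the $j^{2}h^{2}\|F_1\|^{2}$ slack and giving the simpler bound \eq{h_real}. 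I expect this block-wise discrete-stability estimate---extracting the correct powers of $N$ and the right combination of $|\Re(\lambda_1)|$, $\|F_1\|$, $\|F_2\|$, $\|F_0\|$---to be the main obstacle; the remaining steps are routine bookkeeping.
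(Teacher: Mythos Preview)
Your overall skeleton---local truncation error via the second-order Taylor remainder, the recursion $e^{k+1}=(I+A(kh)h)e^k+\tau^k$, replacement of the Gr\"onwall factor by the discrete contraction $\|I+A(t)h\|\le 1$, and the bound on $\max_t\|\hat y''(t)\|$ via $\|A\|,\|A'\|,\|b\|,\|b'\|$ together with $\|\hat y(t)\|=O(\sqrt{N})$---is exactly what the paper does, and your part (ii) matches the paper's argument almost line for line.

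The one place you diverge is part (i). The paper does not pass through an auxiliary $N\times N$ tridiagonal matrix; it simply splits $I+A(t)h=H_2+H_1+H_0(t)$ into its super-diagonal, block-diagonal, and sub-diagonal parts and uses the operator-norm triangle inequality $\|H_2\|+\|H_1\|+\|H_0\|\le 1$. The off-diagonal pieces are block shifts, so $\|H_2\|\le Nh\|F_2\|$ and $\|H_0\|\le Nh\|F_0\|$ directly, while $\|H_1\|$ is obtained by computing the eigenvalues of $I+A_j^jh$ explicitly as $\{1+h\sum_\ell\lambda_{\mathcal I_\ell}\}$ and maximizing $|1+h\sum\lambda|^2\le 1-2Nh|\Re(\lambda_1)|+N^2h^2(|\Re(\lambda_1)|^2+J^2)$ with $J=\max_\ell|\Im\lambda_\ell|$; the real-spectrum case \eq{h_real} then drops out immediately from $J=0$. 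Your route recovers the same row-sum inequality in the end, but the step ``bound the spectral norm of the tridiagonal matrix by its largest row sum'' is not correct as stated (the row sum controls the $\ell^\infty\!\to\!\ell^\infty$ norm, not the $\ell^2\!\to\!\ell^2$ norm, and your matrix is not symmetric). You would need both the row- and column-sum bounds together, or Schur's test---at which point you have effectively redone the paper's triangle-inequality argument in a less transparent way. Also, because the paper bounds $\|H_1\|$ through the eigenvalues of $F_1$ rather than through $\|A_j^j\|\le j\|F_1\|$, the complex and real cases are separated cleanly by the single parameter $J$, which is what produces the precise form of \eq{h} versus \eq{h_real}.
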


\begin{proof}
We define a linear system that locally approximates the initial value problem \eq{LODE} on $[kh, (k+1)h]$ for $k\in\rangez{m}$ as
\begin{equation}
z^k = [I+A((k-1)h)h] \hat y((k-1)h) + b((k-1)h),
\label{eq:local_forward}
\end{equation}
where $\hat y(t)$ is the exact solution of \eq{LODE}. For $k \in \range{m}$, we denote the local truncation error by 
\begin{equation}
e^k \coloneqq \|\hat y(kh)-z^k\|
\label{eq:local_error}
\end{equation}
and the global error by
\begin{equation}
g^k \coloneqq \|\hat y(kh)-y^k\|,
\label{eq:global_error}
\end{equation}
where $y^k$ in \eq{forward} is the numerical solution. Note that $g^m = \|\hat y(T)-y^m\|$.

For the local truncation error, we Taylor expand $\hat y((k-1)h)$ with a second-order Lagrange remainder, giving
\begin{equation}
\hat y(kh) = \hat y((k-1)h) + \hat y'((k-1)h)h + \frac{\hat y''(\xi)h^2}{2}
\end{equation}
for some $\xi\in[(k-1)h,kh]$. Since $\hat y'((k-1)h) = A((k-1)h)\hat y((k-1)h)+b((k-1)h)$ by \eq{LODE}, we have
\begin{equation}
\hat y(kh) = [I+A((k-1)h)h]\hat y((k-1)h)+b((k-1)h) + \frac{\hat y''(\xi)h^2}{2} = z^k + \frac{\hat y''(\xi)h^2}{2},
\end{equation}
and thus the local error \eq{local_error} can be bounded as
\begin{equation}
e^k = \|\hat y(kh)-z^k\| = \|\hat y''(\xi)h\|\frac{h^2}{2} \le \frac{Mh^2}{2},
\label{eq:local}
\end{equation}
where $M \coloneqq \max_{t\in[0,T]}\|\hat y''(\tau)\|$.

By the triangle inequality, the global error \eq{global_error} can therefore be bounded as
\begin{equation}
g^k = \|\hat y(kh)-y^k\| \le \|\hat y(kh)-z^k\|+\|z^k-y^k\| \le e^k + \|z^k-y^k\|.
\label{eq:global_local}
\end{equation}
Since $y^k$ and $z^k$ are obtained by the same linear system with different right-hand sides, we have the upper bound
\begin{equation}
\begin{aligned}
\|z^k-y^k\| &= \|[I+A((k-1)h)h][\hat y((k-1)h)-y^{k-1}]\| \\
&\le \|I+A((k-1)h)h\| \cdot \|\hat y((k-1)h)-y^{k-1}\| \\
&= \|I+A((k-1)h)h\|g^{k-1}.
\label{eq:zkyk}
\end{aligned}
\end{equation}

In order to provide an upper bound for  $\|I+A(t)h\|$ for all $t\in[0,T]$, we write
\begin{equation}
I+A(t)h=H_2+H_1+H_0(t)
\end{equation}
where
\begin{align}
H_2 &= \sum_{j=1}^{N-1}|j\rangle\langle j+1|\otimes A_{j+1}^jh, \\
H_1 &= I+\sum_{j=1}^N|j\rangle\langle j|\otimes A_j^jh, \\
H_0(t) &= \sum_{j=2}^{N}|j\rangle\langle j-1|\otimes A_{j-1}^jh.
\end{align}
We provide upper bounds separately for $\|H_2\|$, $\|H_1\|$, and $\|H_0\| \coloneqq \max_{t\in[0,T]} \|H_0(t)\|$, and use the bound $\max_{t\in[0,T]}\|I+A(t)h\| \le \|H_2\| + \|H_1\| + \|H_0\|$.

The eigenvalues of $A_j^j$ consist of all $j$-term sums of the eigenvalues of $F_1$. More precisely, they are $\{\sum_{\ell \in \range{j}} \lambda_{\mathcal{I}^j_\ell}\}_{\mathcal{I}^j \in \range{n}^j}$ where $\{\lambda_\ell\}_{\ell \in \range{n}}$ are the eigenvalues of $F_1$ and $\mathcal{I}^j \in \range{n}^j$ is a $j$-tuple of indices.
The eigenvalues of $H_1$ are thus $\{1 + h\sum_{\ell \in \range{j}} \lambda_{\mathcal{I}_\ell^j}\}_{\mathcal{I}^j \in \range{n}^j,j \in \range{N}}$.
With $J \coloneqq \max_{\ell \in \range{n}} |\Im(\lambda_\ell)|$, we have
\begin{equation}
  \begin{aligned}
\Big|1+h \sum_{\ell \in \range{j}}\lambda_{\mathcal{I}^j_\ell}\Big|^2 &= \Big|1+h \sum_{\ell \in \range{j}} \Re(\lambda_{\mathcal{I}^j_\ell}) \Big|^2 + \Big|h\sum_{\ell \in \range{j}} \Im(\lambda_{\mathcal{I}^j_\ell})\Big|^2 \\
&\le 1 - 2Nh|\Re(\lambda_1)| + N^2h^2(|\Re(\lambda_1)|^2+J^2)\\
  \end{aligned}
\end{equation}
where we used $Nh|\Re(\lambda_1)| \le 1$ by \eq{h}. Therefore
\begin{equation}
\|H_1\| 
= \max_{j \in \range{N}} \max_{\mathcal{I}^j \in \range{n}^j} \Big|1+h \sum_{\ell \in \range{j}}\lambda_{\mathcal{I}^j_\ell}\Big| 
\le \sqrt{1 - 2Nh|\Re(\lambda_1)| + N^2h^2(|\Re(\lambda_1)|^2 + J^2)}.
\label{eq:norm_H_1}
\end{equation}

We also have
\begin{equation}
\|H_2\| = \biggl\|\sum_{j=1}^{N-1}|j\rangle\langle j+1|\otimes A_{j+1}^jh\biggr\| \le \max_{j \in \range{N}} \|A_{j+1}^j\|h \le N\|F_2\|h
\label{eq:norm_H_2}
\end{equation}
and
\begin{equation}
\|H_0\| = \max_{t\in[0,T]}\biggl\|\sum_{j=1}^{N-1}|j\rangle\langle j01|\otimes A_{j-1}^jh\biggr\| \le \max_{t\in[0,T]}\max_{j \in \range{N}} \|A_{j-1}^j\|h \le N\max_{t\in[0,T]}\|F_0(t)\|h \le N\|F_0\|h.
\label{eq:norm_H_0}
\end{equation}

Using the bounds \eq{norm_H_1} and \eq{norm_H_2}, we aim to select the value of $h$ to ensure
\begin{equation}
\max_{t\in[0,T]}\|I+A(t)h\| \le \|H_2\| + \|H_1\| + \|H_0\| \le 1.
\label{eq:norm_equivalent}
\end{equation}
The assumption \eq{h} implies
\begin{equation}
h \le \frac{2(|\Re(\lambda_1)|-\|F_2\|-\|F_0\|)}{N(|\Re(\lambda_1)|^2 - (\|F_2\|+\|F_0\|)^2 + J^2)}
\label{eq:h_general}
\end{equation}
(note that the denominator is non-zero due to \eq{A2}).
Then we have
\begin{equation}
  \begin{aligned}
&& N^2h^2(|\Re(\lambda_1)|^2 - \|F_2\|^2 + J^2) &\le 2Nh(|\Re(\lambda_1)|-\|F_2\|-\|F_0\|) \\
&\implies &
1 - 2Nh|\Re(\lambda_1)| + N^2h^2(|\Re(\lambda_1)|^2 + J^2) &\le 1 - 2N(\|F_2\|+\|F_0\|)h + N^2(\|F_2\|+\|F_0\|)^2h^2 \\
&\implies &
\sqrt{1 - 2Nh|\Re(\lambda_1)| + N^2h^2(|\Re(\lambda_1)|^2 + J^2)} &\le 1 - N(\|F_2\|+\|F_0\|)h, \\
  \end{aligned}
\end{equation}
so $\max_{t\in[0,T]}\|I + A(t)h\| \le 1$ as claimed.

The choice \eq{h} can be improved if an upper bound on $J$ is known. In particular, if $J=0$, \eq{h_general} simplifies to
\begin{equation}
h \le \frac{2}{N(|\lambda_1| + \|F_2\| + \|F_0\|)},
\end{equation}
which is satisfied by \eq{h_real} using $\|F_2\| + \|F_0\|<|\lambda_1|\le\|F_1\|$.

Using this in \eq{zkyk}, we have
\begin{equation}
\|z^k-y^k\| \le \|I+A((k-1)h)h\|g^{k-1} \le g^{k-1}.
\label{eq:global_increase}
\end{equation}
Plugging \eq{global_increase} into \eq{global_local} iteratively, we find
\begin{equation}
g^k \le g^{k-1}+e^k \le g^{k-2}+e^{k-1}+e^k \le \cdots \le \sum_{j=1}^ke^j, \quad k\in\rangez{m+1}.
\label{eq:global_recurrence}
\end{equation}
Using \eq{local}, this shows that the global error from the forward Euler method is bounded by
\begin{equation}
\|\hat y^1(kh)-y^k_1\| \le \|\hat y(kh)-y^k\| = g^k \le \sum_{j=1}^ke^j \le \frac{Mkh^2}{2},
\label{eq:half_global_0}
\end{equation}
and when $k=m$, $mh=T$,
\begin{equation}
\|\hat y^1(T)-y^m_1\| \le g^m \le m\frac{Mh^2}{2} = \frac{MTh}{2}.
\label{eq:global_0}
\end{equation}

Finally, we remove the dependence on $M= \max_{t\in[0,T]}\|\hat y''(\tau)\|$. Since
\begin{equation}
\begin{aligned}
\hat y''(t) &= [A(t)\hat y(t)+b(t)]' = A(t)\hat y'(t)+ A'(t)\hat y(t) + b'(t) \\
&= A(t)[A(t)\hat y(t)+b(t)] + A'(t)\hat y(t) + b'(t),
\end{aligned}
\end{equation}
we have
\begin{equation}
\|\hat y''(t)\| = \|A(t)\|^2\|\hat y(t)\| + \|A(t)\|\|b(t)\| + \|A'(t)\|\|\hat y(t)\| + \|b'(t)\|.
\label{eq:bound_y2}
\end{equation}
Maximizing each term for $t\in[0,T]$, we have
\begin{equation}
\max_{t\in[0,T]}\|A(t)\| = \biggl\|\sum_{j=1}^{N-1}|j\rangle\langle j+1|\otimes A_{j+1}^j+\sum_{j=1}^N|j\rangle\langle j|\otimes A_j^j+\sum_{j=2}^{N}|j\rangle\langle j-1|\otimes A_{j-1}^j\biggr\| \le N(\|F_2\|+\|F_1\|+\|F_0\|),
\label{eq:bound_A0}
\end{equation}
\begin{equation}
\max_{t\in[0,T]}\|A'(t)\| = \biggl\|\sum_{j=2}^{N}|j\rangle\langle j-1|\otimes (A_{j-1}^j)'\biggr\| \le N\|F'_0(t)\| \le N\|F'_0\|,
\label{eq:bound_A1}
\end{equation}
\begin{equation}
\max_{t\in[0,T]}\|b(t)\| = \max_{t\in[0,T]}\|F_0(t)\| \le \|F_0\|,
\label{eq:bound_b0}
\end{equation}
\begin{equation}
\max_{t\in[0,T]}\|b'(t)\| = \max_{t\in[0,T]}\|F_0'(t)\| \le \|F_0'\|,
\label{eq:bound_b1}
\end{equation}
and using $\|u\| \le \|u_{\mathrm{in}}\| < 1$, $\|\eta_j(t)\| \le \|\eta(t)\| \le g/4 < \|u_{\mathrm{in}}\|/4$ by \eq{g_condition_1}, and $\Rq < 1$, we have
\begin{equation}
\begin{aligned}
\|\hat y(t)\|^2
&\le \sum_{j=1}^N\|\hat y_j(t)\|^2
= \sum_{j=1}^N\|u^{\otimes j}(t) - \eta_j(t)\|^2
\le 2\sum_{j=1}^N(\|u^{\otimes j}(t)\|^2 + \|\eta_j(t)\|^2) \\
&\le 2\sum_{j=1}^N\biggl(\|u_{\mathrm{in}}\|^{2j} + \frac{\|u_{\mathrm{in}}\|^2}{16}\biggr)
< 2\sum_{j=1}^N\biggl(1 + \frac{1}{16} \biggr)\|u_{\mathrm{in}}\|^2
< 4N\|u_{\mathrm{in}}\|^2
< 4N
\label{eq:parallel_inequality}
\end{aligned}
\end{equation}
for all $t \in [0,T]$. Therefore, substituting the bounds \eq{bound_A0}--\eq{parallel_inequality} into \eq{bound_y2}, we find
\begin{equation}
\begin{aligned}
M &\le \max_{t\in[0,T]} \|A(t)\|^2\|\hat y(t)\| + \|A(t)\|\|b(t)\| + \|A'(t)\|\|\hat y(t)\| + \|b'(t)\| \\
&\le 2N^{2.5}(\|F_2\|+\|F_1\|+\|F_0\|)^2 + N(\|F_2\|+\|F_1\|+\|F_0\|)\|F_0\| + 2N^{1.5}\|F'_0\| + \|F_0'\| \\
&\le 6N^{2.5}[(\|F_2\|+\|F_1\|+\|F_0\|)^2+\|F_0'\|].
\end{aligned}
\end{equation}
Thus, \eq{global_0} gives
\begin{equation}
\|\hat y^1(kh)-y^m_1\| \le 3N^{2.5}kh^2[(\|F_2\|+\|F_1\|+\|F_0\|)^2+\|F_0'\|],
\label{eq:half_global}
\end{equation}
and when $k=m$, $mh=T$,
\begin{equation}
\|\hat y^1(T)-y^m_1\| \le 3N^{2.5}Th[(\|F_2\|+\|F_1\|+\|F_0\|)^2+\|F_0'\|]
\label{eq:global}
\end{equation}
as claimed.
\end{proof}

\subsection{Condition number}

Now we upper bound the condition number of the linear system.

\begin{lemma}\label{lem:condition}
Consider an instance of the quantum quadratic ODE problem as defined in \prb{node}. Apply the forward Euler method \eq{forward} with time step \eq{h} to the Carleman linearization \eq{LODE}. Then the condition number of the matrix $L$ defined in \eq{linear_system} satisfies
\begin{equation}
\kappa \le 3(m+p+1).
\end{equation}
\end{lemma}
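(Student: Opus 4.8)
The plan is to estimate $\kappa = \|L\|\,\|L^{-1}\|$ by bounding the two factors separately, exploiting the block bidiagonal form of $L$ in \eq{matrixL} together with the contractivity estimate $\max_{t\in[0,T]}\|I+A(t)h\|\le 1$, which is already established inside the proof of \lem{global} (see \eq{norm_equivalent}) under the step-size choice \eq{h}. Write $L = I - N$, where $N$ is the strictly block-subdiagonal part of $L$, i.e. $N = \sum_{k=1}^{m}|k\rangle\langle k-1|\otimes[I+A((k-1)h)h] + \sum_{k=m+1}^{m+p}|k\rangle\langle k-1|\otimes I$. Denote by $C_{k-1}$ the block of $N$ in block-row $k$, so $C_{k-1} = I+A((k-1)h)h$ for $k\le m$ and $C_{k-1}=I$ otherwise, with $\|C_{k-1}\|\le 1$ in all cases.

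For $\|L\|$: since $N$ has exactly one nonzero block in each block row and in each block column, $N^\dagger N$ is block diagonal with diagonal blocks $C_{k-1}^\dagger C_{k-1}$, hence $\|N\| = \max_k \|C_{k-1}\| \le 1$, and therefore $\|L\| \le \|I\| + \|N\| \le 2$. For $\|L^{-1}\|$: $N$ is strictly block lower triangular of block size $m+p+1$, hence nilpotent with $N^{m+p+1}=0$, so the Neumann series terminates, $L^{-1} = \sum_{j=0}^{m+p} N^j$, giving $\|L^{-1}\| \le \sum_{j=0}^{m+p}\|N\|^j \le m+p+1$. (Equivalently one can read off the $(j,k)$ block of $L^{-1}$ for $j\ge k$ as the telescoping product $C_{j-1}C_{j-2}\cdots C_{k}$, each factor of norm $\le 1$, and bound $\|L^{-1}\|$ by the maximum block-row sum $\max_j\sum_{k\le j}1 \le m+p+1$.) Combining the two bounds, $\kappa = \|L\|\,\|L^{-1}\| \le 2(m+p+1) \le 3(m+p+1)$, as claimed.

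The only nontrivial input is the estimate $\|N\|\le 1$, equivalently $\|I+A(kh)h\|\le 1$: without it, a product of up to $m$ subdiagonal blocks could grow like $e^{\Theta(\|F_1\|T)}$, which would make $\|L^{-1}\|$ --- and hence the QLSA cost --- exponential in $T$. That contractivity is precisely where the dissipativity hypothesis $\Re(\lambda_1)<0$ and the condition $\Rq<1$ (via the rescaling \eq{A2}) together with the step-size choice \eq{h} enter, and it has already been proved in \lem{global}; the remainder of the argument is the routine bidiagonal bookkeeping above, so I do not anticipate a genuine obstacle. The slack constant $3$ (rather than $2$) simply leaves room in case one prefers a cruder bound such as $\|L\|\le 3$.
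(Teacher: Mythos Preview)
Your argument is correct and in fact slightly sharper than the paper's. Both proofs hinge on the same crucial input, namely the contractivity estimate $\max_{t\in[0,T]}\|I+A(t)h\|\le 1$ from \lem{global}. The difference is organizational: the paper splits $L=L_1+L_2+L_3$ into the block-diagonal identity and two separate subdiagonal pieces, bounds each by $1$, and obtains $\|L\|\le 3$; you combine the two subdiagonal pieces into a single shift $N$ with $\|N\|\le 1$ and get $\|L\|\le 2$. For $\|L^{-1}\|$, the paper explicitly computes $L^{-1}|b^k\rangle$ block by block via the Euler recursion and assembles the bound, whereas your nilpotency/Neumann-series observation $L^{-1}=\sum_{j=0}^{m+p}N^j$ yields the same bound $\|L^{-1}\|\le m+p+1$ in one line. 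Your route is cleaner and gives the constant $2$ rather than $3$; the paper's hands-on computation has the minor advantage that the explicit block formulas for $L^{-1}|b^k\rangle$ reappear later in the success-probability analysis, so that bookkeeping is not wasted. Either way the lemma follows.
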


\begin{proof}
We begin by upper bounding $\|L\|$. We write
\begin{equation}
L = L_1 + L_2 + L_3,
\end{equation}
where
\begin{align}
L_1 &= \sum_{k=0}^{m+p}|k\rangle\langle k|\otimes I, \\
L_2 &= -\sum_{k=1}^{m}|k\rangle\langle k-1|\otimes [I+A((k-1)h)h], \\
L_3 &= -\sum_{k=m+1}^{m+p}|k\rangle\langle k-1|\otimes I.
\end{align}
Clearly $\|L_1\| = \|L_3\| = 1$. Furthermore, $\|L_2\|\le \max_{t\in[0,T]}\|I+A(t)h\| \le 1$ by \eq{norm_equivalent}, which follows from the choice of time step \eq{h}. Therefore,
\begin{equation}
\|L\| \le \|L_1\| + \|L_2\| + \|L_3\| \le 3.
\label{eq:upper}
\end{equation}

Next we upper bound 
\begin{equation}
\|L^{-1}\|=\sup_{\||B\rangle\|\le1}\|L^{-1}|B\rangle\|.
\end{equation}
We express $|B\rangle$ as
\begin{equation}
|B\rangle= \sum_{k=0}^{m+p}\beta_k|k\rangle = \sum_{k=0}^{m+p}|b^k\rangle,
\label{eq:Bvec}
\end{equation}
where $|b^k\rangle \coloneqq \beta_k|k\rangle$ satisfies
\begin{equation}
\sum_{k=0}^{m+p}\||b^k\rangle\|^2 = \||B\rangle\|^2 \le 1.
\end{equation}
Given any $|b^k\rangle$ for $k\in\rangez{m+p+1}$, we define
\begin{equation}
|Y^k\rangle \coloneqq L^{-1}|b^k\rangle = \sum_{l=0}^{m+p}\gamma^k_l|l\rangle = \sum_{l=0}^{m+p}|Y^k_l\rangle,
\label{eq:Yvec}
\end{equation}
where $|Y^k_l\rangle \coloneqq \gamma^k_l|l\rangle$. We first upper bound $\||Y^k\rangle\| = \|L^{-1}|b^k\rangle\|$, and then use this to upper bound $\|L^{-1}|B\rangle\|$.

We consider two cases. First, for fixed $k\in\rangez{m+1}$, we directly calculate $|Y^k_l\rangle$ for each $l\in\rangez{m+p+1}$ by the forward Euler method \eq{forward}, giving
\begin{equation}
|Y^k_l\rangle =
\begin{cases}
0, &\text{if } l\in\rangez{k}; \\
|b^k\rangle, &\text{if } l=k; \\
\Pi_{j=k}^{l-1}[I+A(jh)h]|b^k\rangle, &\text{if } l\in\rangez{m+1}\setminus\rangez{k+1}; \\
\Pi_{j=k}^{m-1}[I+A(jh)h]|b^k\rangle, &\text{if } l\in\rangez{m+p+1}\setminus\rangez{m+1}.
\label{eq:expand_1}
\end{cases}
\end{equation}
Since $\max_{t\in[0,T]}\|I+A(t)h\|| \le 1$ by \eq{norm_equivalent}, \eq{Yvec} gives
\begin{equation}
\begin{aligned}
\||Y^k\rangle\|^2
&= \sum_{l=0}^{m+p}\||Y^k_l\rangle\|^2
 \le \sum_{l=k}^{m}\||b^k\rangle\|^2 + \sum_{l=m+1}^{m+p}\||b^k\rangle\|^2 \\
&\le (m+p+1-k)\||b^k\rangle\|^2 \le (m+p+1)\||b^k\rangle\|^2.
\label{eq:expand_norm_1}
\end{aligned}
\end{equation}

Second, for fixed $k\in\rangez{m+p+1}\setminus\rangez{m+1}$, similarly to \eq{expand_1}, we directly calculate $|Y^k_l\rangle$ using \eq{forward}, giving
\begin{equation}
|Y^k_l\rangle =
\begin{cases}
0, &\text{if } l\in\rangez{k}; \\
|b^k\rangle, &\text{if } l\in\rangez{m+p+1}\setminus\rangez{k}.
\label{eq:expand_2}
\end{cases}
\end{equation}
Similarly to \eq{expand_norm_1}, we have (again using \eq{Yvec})
\begin{equation}
\||Y^k\rangle\|^2 = \sum_{l=0}^{m+p}\||Y^k_l\rangle\|^2 = \sum_{l=k}^{m+p}\||b^k\rangle\|^2 = (m+p+1-k)\||b^k\rangle\|^2 \le (m+p+1)\||b^k\rangle\|^2.
\label{eq:expand_norm_2}
\end{equation}

Combining \eq{expand_norm_1} and \eq{expand_norm_2}, for any $k\in\rangez{m+p+1}$, we have
\begin{equation}
\||Y^k\rangle\|^2 = \|L^{-1}|b^k\rangle\|^2 \le (m+p+1)\||b^k\rangle\|^2.
\label{eq:expand_norm_Y}
\end{equation}
By the definition of $|Y^k\rangle$ in \eq{Yvec}, \eq{expand_norm_Y} gives
\begin{equation}
\| L^{-1}\|^2 = \sup_{\||B\rangle\|\le1}\| L^{-1}|B\rangle\|^2 \le (m+p+1) \sup_{\||b^k\rangle\|\le1}\|L^{-1}|b^k\rangle\|^2 \le (m+p+1)^2,
\end{equation}
and therefore
\begin{equation}
\| L^{-1}\|  \le (m+p+1).
\label{eq:lower}
\end{equation}
Finally, combining \eq{upper} with \eq{lower}, we conclude
\begin{equation}
\kappa = \| L\|\| L^{-1}\| \le 3(m+p+1)
\end{equation}
as claimed.
\end{proof}

\subsection{State preparation}

We now describe a procedure for preparing the right-hand side $|B\rangle$ of the linear system \eq{linear_system}, whose entries are composed of $|y_{\mathrm{in}}\rangle$ and $|b((k-1)h)\rangle$ for $k\in\range{m}$.

The initial vector $y_{\mathrm{in}}$ is a direct sum over spaces of different dimensions, which is cumbersome to prepare. Instead, we prepare an equivalent state that has a convenient tensor product structure. Specifically, we embed $y_{\mathrm{in}}$ into a slightly larger space and prepare the normalized version of
\begin{equation}
    z_{\mathrm{in}} = [u_{\mathrm{in}}\otimes v_0^{N-1}; u_{\mathrm{in}}^{\otimes 2}\otimes v_0^{N-2}; \ldots; u_{\mathrm{in}}^{\otimes N}],
\end{equation}
where $v_0$ is some standard vector (for simplicity, we take $v_0 = \ket{0}$). If $u_{\mathrm{in}}$ lives in a vector space of dimension $n$, then $z_{\mathrm{in}}$ lives in a space of dimension $Nn^N$ while $y_{\mathrm{in}}$ lives in a slightly smaller space of dimension $\dimn = n+n^2+\dots +n^N = (n^{N+1}-n)/(n-1)$. Using standard techniques, all the operations we would otherwise apply to $y_{\mathrm{in}}$ can be applied instead to $z_{\mathrm{in}}$, with the same effect.

\begin{lemma}\label{lem:preparation}
Assume we are given the value $\|u_{\mathrm{in}}\|$, and let $O_x$ be an oracle that maps $|00\ldots0\rangle \in\C^n$ to a normalized quantum state $|u_{\mathrm{in}}\rangle$ proportional to $u_{\mathrm{in}}$. 
Assume we are also given the values $\|F_0(t)\|$ for each $t\in[0,T]$, and let $O_{F_0}$ be an oracle that provides the locations and values of the nonzero entries of $F_0(t)$ for any specified $t$. Then the quantum state $|B\rangle$ defined in \eq{vectorB} (with $y_{\mathrm{in}}$ replaced by $z_{\mathrm{in}}$) can be prepared using $O(N)$ queries to $O_x$ and $O(m)$ queries to $O_{F_0}$,  with gate complexity larger by a $\poly(\log N, \log n)$ factor.
\end{lemma}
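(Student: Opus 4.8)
The plan is to build $|B\rangle$ as a (normalized) superposition over the time-register index $k$, preparing the $k=0$ block from $O_x$ and the $k\ge 1$ blocks from $O_{F_0}$, and then combining them with a single block-encoding/state-preparation pass over the time register. First I would handle the $k=0$ component. We want (the normalized version of) $\|z_{\mathrm{in}}\| \, |0\rangle \otimes |z_{\mathrm{in}}\rangle$, where $z_{\mathrm{in}} = [u_{\mathrm{in}}\otimes v_0^{N-1}; u_{\mathrm{in}}^{\otimes 2}\otimes v_0^{N-2}; \ldots; u_{\mathrm{in}}^{\otimes N}]$ with $v_0 = \ket 0$. Because $z_{\mathrm{in}}$ lives in $(\C^n)^{\otimes N}$ tensored with an $N$-dimensional ``level'' register, and its level-$j$ block is $u_{\mathrm{in}}^{\otimes j}\otimes \ket 0^{\otimes (N-j)}$, the natural route is: (i) prepare a normalized state on the level register with amplitude proportional to $\|u_{\mathrm{in}}\|^{j}$ on level $j$ — this is an $N$-dimensional state whose amplitudes are known explicitly from the given value $\|u_{\mathrm{in}}\|$, so it costs $\poly(\log N)$ gates; (ii) controlled on the level register holding $j$, apply $O_x$ to the first $j$ of the $N$ copies of $\C^n$ and leave the rest in $\ket 0$. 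Step (ii) uses $O_x$ at most $N$ times (once per copy, each controlled on ``$j \ge$ that copy's index''), giving $O(N)$ queries to $O_x$ and $\poly(\log N,\log n)$ extra gates for the controls. This produces exactly the normalized $z_{\mathrm{in}}$.

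Next I would handle the blocks $k\in\range m$. Block $k$ of $|B\rangle$ is $\|b((k-1)h)\|\,|k\rangle\otimes |b((k-1)h)\rangle$, and by \eq{vectorB}--\eq{UODE} the vector $b((k-1)h)$ is just $F_0((k-1)h)$ sitting in the first ($j=1$) level and zero elsewhere, so each $|b((k-1)h)\rangle$ is an $n$-dimensional state (embedded into the larger space) preparable from a single query to $O_{F_0}$ at time $t=(k-1)h$ together with the given norm $\|F_0((k-1)h)\|$: use the sparsity-$s$ oracle to load the $\le s$ nonzero entries and prepare the corresponding normalized sparse vector with $\poly(\log n, \log s)$ gates. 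To assemble all $m$ of these coherently, I would put the time register into the state $\sum_{k=1}^m \|b((k-1)h)\|\,|k\rangle/(\cdot)$ — again a state with known amplitudes, preparable in $\poly(\log m)$ gates — and then, controlled on $|k\rangle$, call $O_{F_0}$ at the (classically computable) time $(k-1)h$ and prepare $|b((k-1)h)\rangle$. Since the query time is supplied to $O_{F_0}$ from the register, a single controlled invocation suffices per value of $k$, so the whole batch costs $O(m)$ queries to $O_{F_0}$ and $\poly(\log m,\log n,\log s)$ gates. Alternatively one can structure this as $O(m)$ sequential controlled-prepare steps; either way the query count is $O(m)$.

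Finally I would glue the $k=0$ piece and the $k\in\range m$ piece together: introduce the two-outcome ``which-part'' amplitude $(\|z_{\mathrm{in}}\|\,:\,\sqrt{\sum_{k=1}^m\|b((k-1)h)\|^2})$, which is computable from the given norms, prepare that qubit in $\poly(1)$ gates, and run the $k=0$ subroutine conditioned on one outcome and the $k\ge 1$ subroutine on the other. Renormalizing gives exactly $|B\rangle$ as in \eq{vectorB} (with $y_{\mathrm{in}}\to z_{\mathrm{in}}$), with total cost $O(N)$ queries to $O_x$, $O(m)$ queries to $O_{F_0}$, and gate overhead $\poly(\log N,\log n)$ (absorbing the $\poly(\log m,\log s)$ factors, since $m$ and $s$ are polynomially related to the other parameters in the final complexity). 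I expect the only real subtlety — not a deep obstacle, but the step that needs care — to be the bookkeeping of the embedding $y_{\mathrm{in}}\hookrightarrow z_{\mathrm{in}}$: one must check that the level register's controlled applications of $O_x$ land on the correct tensor factors and that the ``padding'' $\ket 0$ components are consistent with how the matrix $L$ (and hence the QLSA) acts on this enlarged space, which is exactly the ``standard techniques'' remark preceding the lemma; the query-count arithmetic itself is routine.
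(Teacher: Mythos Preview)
Your approach is essentially the same as the paper's: prepare an intermediate state on the time/level registers with the correct (classically known) amplitudes, then apply the oracles conditionally to fill in the data registers. Two small points where the paper is more explicit or more careful. First, for the level register you assert that the state with amplitudes $\propto\|u_{\mathrm{in}}\|^{j}$ on $|j\rangle$ costs $\poly(\log N)$ gates; the paper actually \emph{shows} this by writing $j-1$ in binary and observing that the geometric amplitudes factor as a tensor product of $\log_2 N$ single-qubit states, so each qubit is prepared independently in $O(1)$. Second, your claim that the time-register state $\propto\sum_{k=1}^{m}\|F_0((k-1)h)\|\,|k\rangle$ costs $\poly(\log m)$ gates is not justified: these amplitudes have no special structure (unlike the geometric ones), so generic state preparation costs $O(m)$, and that is exactly what the paper invokes (citing the Shende--Bullock--Markov result). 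This over-claim is harmless for the lemma, since the stated gate bound is query complexity times $\poly(\log N,\log n)$ and the query complexity already contains $O(m)$; but the intermediate estimate as written is wrong.
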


\begin{proof}
We first show how to prepare the state
\begin{equation}
    \ket{z_{\mathrm{in}}}=\frac{1}{\sqrt{V}}\sum_{j=1}^N \norm{u_{\mathrm{in}}}^j |j\rangle|u_{\mathrm{in}}\rangle^{\otimes j} |0\rangle^{\otimes N-j},
\end{equation}
where
\begin{equation}
    V \coloneqq \sum_{j=1}^{N}\|u_{\mathrm{in}}\|^{2j}.
\end{equation}
This state can be prepared using $N$ queries to the initial state oracle $O_x$ applied in superposition to the intermediate state
\begin{equation}
    \ket{\psi_{\mathrm{int}}}\coloneqq\frac{1}{\sqrt{V}}\sum_{j=1}^N \norm{u_{\mathrm{in}}}^j |j\rangle\otimes \ket{0}^{\otimes N}.
\end{equation}

To efficiently prepare $\ket{\psi_{\mathrm{int}}}$, notice that 
\begin{equation}
    \ket{\psi_{\mathrm{int}}}=\frac{\norm{u_{\mathrm{in}}}}{\sqrt{V}}\sum_{j_0,j_1,\dots, j_{k-1}=0}^1 \prod_{\ell=0}^{k-1} \norm{u_{\mathrm{in}}}^{j_\ell2^\ell}\ket{j_0j_1\dots j_{k-1}} \otimes \ket{0}^{\otimes N},
\end{equation}
where $k \coloneqq \log_2 N$ (assuming for simplicity that $N$ is a power of $2$) and $j_{k-1} \dots j_1 j_0$ is the $k$-bit binary expansion of $j-1$. Observe that
\begin{equation}\label{eq:psi_int}
    \ket{\psi_{\mathrm{int}}}=\bigotimes_{\ell=0}^{k-1} \bigg(\frac{1}{\sqrt{V_\ell}}\sum_{j_\ell=0}^1 \norm{u_{\mathrm{in}}}^{j_\ell2^\ell}\ket{j_\ell}\bigg)\otimes \ket{0}^{\otimes N}
\end{equation}
where 
\begin{equation}
    V_\ell \coloneqq 1+\norm{u_{\mathrm{in}}}^{2^{\ell+1}}.
\end{equation}
(Notice that $\prod_{\ell=0}^{k-1} V_\ell=V/\norm{u_{\mathrm{in}}}^2$.) Each tensor factor in \eq{psi_int} is a qubit that can be produced in constant time.
Overall, we prepare these $k=\log_2 N$ qubit states and then apply $O_x$ $N$ times.

We now discuss how to prepare the state
\begin{equation}
|B\rangle = \frac{1}{\sqrt{B_m}}\Bigl(\|z_{\mathrm{in}}\||0\rangle \otimes |z_{\mathrm{in}}\rangle+\sum_{k=1}^{m}\|b((k-1)h)\||k\rangle \otimes |b((k-1)h)\rangle\Bigr),
\end{equation}
in which we replace $y_{\mathrm{in}}$ by $z_{\mathrm{in}}$ in \eq{vectorB}, and define
\begin{equation}
    B_m \coloneqq \norm{z_{\mathrm{in}}}^2 + \sum_{k=1}^m\|b((k-1)h)\|^2.
\end{equation}
This state can be prepared using the above procedure for $|0\rangle \mapsto |z_{\mathrm{in}}\rangle$ and $m$ queries to $O_{F_0}$ with $t=(k-1)h$ that implement $|0\rangle \mapsto |b((k-1)h)\rangle$ for $k \in \{1,\ldots,m\}$, applied in superposition to the intermediate state
\begin{equation}
|\phi_{\mathrm{int}}\rangle = \frac{1}{\sqrt{B_m}}\Bigl(\|z_{\mathrm{in}}\||0\rangle \otimes |0\rangle+\sum_{k=1}^{m}\|b((k-1)h)\||k\rangle \otimes |0\rangle\Bigr).
\end{equation}
Here the queries are applied conditionally upon the value in the first register: we prepare $|z_{\mathrm{in}}\rangle$ if the first register is $|0\rangle$ and $|b((k-1)h)\rangle$ if the first register is $|k\rangle$ for $k \in \{1,\ldots,m\}$.
We can prepare $|\phi_{\mathrm{int}}\rangle$ (i.e., perform a unitary transform mapping $|0\rangle|0\rangle \mapsto |\phi_{\mathrm{int}}\rangle$) in time complexity $O(m)$ \cite{SBM06} using the known values of $\norm{u_{\mathrm in}}$ and $\norm{b((k-1)h)}$. 

Overall, we use $O(N)$ queries to $O_x$ and $O(m)$ queries to $O_{F_0}$ to prepare $|B\rangle$. The gate complexity is larger by a $\poly(\log N, \log n)$ factor.
\end{proof}

\subsection{Measurement success probability}

After applying the QLSA to \eq{linear_system}, we perform a measurement to extract a final state of the desired form. We now consider the probability of this measurement succeeding.

\begin{lemma}\label{lem:measure}
Consider an instance of the quantum quadratic ODE problem defined in \prb{node}, with the QLSA applied to the linear system \eq{linear_system} using the forward Euler method \eq{forward} with time step \eq{h}. Suppose the error from Carleman linearization satisfies $\|\eta(t)\|\le \frac{g}{4}$ as in \eq{g_condition_1}, and the global error from the forward Euler method as defined in \lem{global} is bounded by
\begin{equation}
\|\hat y(T)-y^m\| \le \frac{g}{4},
\label{eq:g_condition_2}
\end{equation}
where $g$ is defined in \eq{gq}. Then the probability of measuring a state $|y^k_1\rangle$ for $k=\rangez{m+p+1}\setminus\rangez{m+1}$ satisfies
\begin{equation}
P_{\mathrm{measure}} \ge \frac{p+1}{9(m+p+1)Nq^2},
\end{equation}
where $q$ is also defined in \eq{gq}.
\end{lemma}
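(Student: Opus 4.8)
\emph{Proof idea.} The plan is to write $P_{\mathrm{measure}}$ explicitly in terms of the forward Euler iterates $y^k$ appearing in the solution of \eq{linear_system}, then to lower-bound the ``useful'' piece of the amplitude and upper-bound the total norm. Recall that (up to the QLSA error, accounted for separately) the QLSA applied to \eq{linear_system} prepares the normalized state $|Y\rangle = \bigl(\sum_{l=0}^{m+p}\|y^l\|^2\bigr)^{-1/2}\sum_{k=0}^{m+p}|k\rangle\otimes|y^k\rangle$, where each $y^k\in\R^{\dimn}$ is the iterate \eq{forward}. Measuring the clock register and projecting the solution register onto the first Carleman block (the $n$-dimensional subspace carrying $y_1^k$), the probability of obtaining clock value $k$ with the solution register in the state proportional to $y_1^k$ is $\|y_1^k\|^2/\sum_{l=0}^{m+p}\|y^l\|^2$. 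Since the recursion \eq{forward} freezes the iterate after step $m$, we have $y^k=y^m$ (hence $y_1^k=y_1^m$) for every $k\in\{m,m+1,\dots,m+p\}$, so summing over these $p+1$ clock values gives
\[
P_{\mathrm{measure}}=\frac{(p+1)\,\|y_1^m\|^2}{\sum_{l=0}^{m+p}\|y^l\|^2}.
\]

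Next I would lower-bound the numerator. By the triangle inequality, $\|y_1^m-u(T)\|\le\|y_1^m-\hat y_1(T)\|+\|\hat y_1(T)-u(T)\|\le\|y^m-\hat y(T)\|+\|\eta_1(T)\|$, and the hypotheses \eq{g_condition_2} and \eq{g_condition_1} together with \lem{Carleman} bound this by $g/4+g/4=g/2$. Since $\|u(T)\|=g$ by \eq{gq}, it follows that $\|y_1^m\|\ge g/2$, i.e.\ $\|y_1^m\|^2\ge g^2/4$.

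Then I would upper-bound the denominator, which reduces to bounding $\|y^k\|$ uniformly in $k$. Write $t_k:=\min\{kh,T\}$, so that $\|y^k-\hat y(t_k)\|=g^k$; the argument in the proof of \lem{global} (see \eq{half_global_0} and \eq{global_recurrence}) bounds $g^k$ by the same quantity $\tfrac{MTh}{2}$ that bounds $g^m$, so \eq{g_condition_2} yields $g^k\le g/4\le\|u_{\mathrm{in}}\|/4$, using $g=\|u(T)\|<\|u_{\mathrm{in}}\|$ from \lem{decrease}. Exactly as in the derivation of \eq{parallel_inequality}, $\|\hat y(t)\|^2\le\sum_{j=1}^N(\|u(t)\|^j+\|\eta_j(t)\|)^2\le\sum_{j=1}^N(\|u_{\mathrm{in}}\|^j+g/4)^2\le\tfrac{25}{16}N\|u_{\mathrm{in}}\|^2$, using $\|u_{\mathrm{in}}\|<1$ and $g\le\|u_{\mathrm{in}}\|$, hence $\|\hat y(t)\|\le\tfrac54\sqrt N\,\|u_{\mathrm{in}}\|$. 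Combining and using $N\ge1$, $\|y^k\|\le\tfrac54\sqrt N\,\|u_{\mathrm{in}}\|+\tfrac14\|u_{\mathrm{in}}\|\le\tfrac32\sqrt N\,\|u_{\mathrm{in}}\|$, so $\|y^k\|^2\le\tfrac94 N\|u_{\mathrm{in}}\|^2$ and $\sum_{l=0}^{m+p}\|y^l\|^2\le\tfrac94(m+p+1)N\|u_{\mathrm{in}}\|^2$. Substituting the two bounds into the identity above, and using $q=\|u_{\mathrm{in}}\|/g$ from \eq{gq},
\[
P_{\mathrm{measure}}\ge\frac{(p+1)\,g^2/4}{\tfrac94(m+p+1)N\|u_{\mathrm{in}}\|^2}=\frac{p+1}{9(m+p+1)N q^2}.
\]

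\textbf{Main obstacle.} The delicate step is the uniform denominator bound: pinning down the constant $3/2$ (hence the $9$ in the statement) requires using the rescaling $\|u_{\mathrm{in}}\|<1$, the decay estimate $g<\|u_{\mathrm{in}}\|$ of \lem{decrease}, and both quarter-$g$ error budgets at once; moreover one must know that the forward Euler global error is controlled at \emph{every} intermediate step $k\le m$, not only at $k=m$, which forces reuse of the interior estimates inside the proof of \lem{global} rather than just its final statement.
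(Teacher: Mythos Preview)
Your proposal is correct and follows essentially the same approach as the paper: express $P_{\mathrm{measure}}=(p+1)\|y_1^m\|^2/\sum_l\|y^l\|^2$, lower-bound the numerator via $\|y_1^m\|\ge g/2$ from the two quarter-$g$ error budgets, and upper-bound the denominator by controlling each $\|y^k\|$ through $\|\hat y(t_k)\|$ (bounded using \eq{parallel_inequality}) plus the intermediate Euler error (bounded via the internals of \lem{global}). The only cosmetic differences are that the paper inserts the intermediate factor $\|y_1^0\|^2=\|u_{\mathrm{in}}\|^2$ and uses the looser inequality $\|a+b\|^2\le 2(\|a\|^2+\|b\|^2)$ to bound $\|y^k\|^2$, obtaining $\|y^k\|^2<9N\|u_{\mathrm{in}}\|^2$ rather than your sharper $\tfrac94 N\|u_{\mathrm{in}}\|^2$; your tighter constants are what recover the exact factor $9$ in the stated bound. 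Your identification of the ``main obstacle'' --- that one must reach inside the proof of \lem{global} to get $g^k\le g/4$ for all $k$, since the hypothesis \eq{g_condition_2} literally controls only $k=m$ --- is spot on, and the paper glosses over this in the same way you do.
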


\begin{proof}
The idealized quantum state produced by the QLSA applied to \eq{linear_system} has the form
\begin{equation}
|Y\rangle = \sum_{k=0}^{m+p}|y^k\rangle|k\rangle = \sum_{k=0}^{m+p}\sum_{j=1}^N|y_j^k\rangle|j\rangle|k\rangle
\end{equation}
where the states $|y^k\rangle$ and $|y_j^k\rangle$ for $k \in \rangez{m+p+1}$ and $j \in \range{N}$ are subnormalized to ensure $\||Y\rangle\|=1$.

We decompose the state $|Y\rangle$ as
\begin{equation}
|Y\rangle = |Y_{\mathrm{bad}}\rangle+|Y_{\mathrm{good}}\rangle,
\end{equation}
where
\begin{equation}
\begin{aligned}
|Y_{\mathrm{bad}}\rangle &\coloneqq \sum_{k=0}^{m-1}\sum_{j=1}^N|y_j^k\rangle|j\rangle|k\rangle+\sum_{k=m}^{m+p}\sum_{j=2}^N|y_j^k\rangle|j\rangle|k\rangle, \\
|Y_{\mathrm{good}}\rangle &\coloneqq \sum_{k=m}^{m+p}|y_1^k\rangle|1\rangle|k\rangle.
\end{aligned}
\end{equation}
Note that $|y_1^k\rangle = |y_1^m\rangle$ for all $k \in \{m,m+1,\ldots,m+p\}$.
We lower bound
\begin{equation}
P_{\mathrm{measure}} \coloneqq \frac{\||Y_{\mathrm{good}}\rangle\|^2}{\||Y\rangle\|^2} = \frac{(p+1)\||y^m_1\rangle\|^2}{\||Y\rangle\|^2}
\label{eq:measure_good}
\end{equation}
by lower bounding the terms of the product
\begin{equation}
\frac{\||y^m_1\rangle\|^2}{\||Y\rangle\|^2} = \frac{\||y^m_1\rangle\|^2}{\||y^0_1\rangle\|^2} \cdot \frac{\||y^0_1\rangle\|^2}{\||Y\rangle\|^2}.
\label{eq:measure_relation}
\end{equation}

First, according to \eq{g_condition_1} and \eq{g_condition_2}, the exact solution $u(T)$ and the approximate solution $y^m_1$ defined in \eq{linear_system} satisfy
\begin{equation}
\|u(T)-y^m_1\| \le \|u(T)-\hat y_1(T)\|+\|\hat y_1(T)-y^m_1\| \le \|\eta(t)\|+\|\hat y(T)-y^m\| \le \frac{g}{2}.
\label{eq:epsilon_inequality}
\end{equation}
Since $y^0_1=(y_{\mathrm{in}})_1=u_{\mathrm{in}}$, using \eq{epsilon_inequality}, we have
\begin{equation}
\frac{\||y^m_1\rangle\|}{\||y^0_1\rangle\|} 
= \frac{\|y^m_1\|}{\|u_{\mathrm{in}}\|} 
\ge \frac{\|u(T)\| -\|u(T)-y^m_1\|}{\|u_{\mathrm{in}}\|} 
= \frac{g-\|u(T)-y^m_1\|}{\|u_{\mathrm{in}}\|} 
\ge \frac{g}{2\|u_{\mathrm{in}}\|} 
= \frac{1}{2q}.
\label{eq:measure_1}
\end{equation}

Second, we upper bound $\|y^k\|^2$ by
\begin{equation}
\begin{aligned}
\|y^k\|^2 = \|\hat y(kh) - [y(kh)-y^k]\|^2
\le 2(\|\hat y(kh)\|^2 + \|y(kh)-y^k\|^2).
\end{aligned}
\end{equation}
Using $\|\hat y(t)\|^2 < 4N\|u_{\mathrm{in}}\|^2$ by \eq{parallel_inequality}, and $\|y(kh)-y^k\| \le \|\hat y(T)-y^m\| \le g/4 < \|u_{\mathrm{in}}\|/4$ by \eq{g_condition_2}, and $\Rq < 1$, we have
\begin{equation}
\begin{aligned}
\|y^k\|^2 \le 2\biggl(4N\|u_{\mathrm{in}}\|^2 + \frac{\|u_{\mathrm{in}}\|^2}{16}\biggr)
< 9N\|u_{\mathrm{in}}\|^2.
\end{aligned}
\end{equation}
Therefore
\begin{equation}
  \frac{\||y^0_1\rangle\|^2}{\||Y\rangle\|^2} = \frac{\||y^0_1\rangle\|^2}{\sum_{k=0}^{m+p}\||y^k\rangle\|^2} \ge \frac{\|u_{\mathrm{in}}\|^2}{9N(m+p+1)\|u_{\mathrm{in}}\|^2} = \frac{1}{9N(m+p+1)}.
  \label{eq:measure_2}
\end{equation}

Finally, using \eq{measure_1} and \eq{measure_2} in \eq{measure_relation} and \eq{measure_good}, we have
\begin{equation}
P_{\mathrm{measure}} \ge \frac{p+1}{9(m+p+1)Nq^2}
\end{equation}
as claimed.
\end{proof}

Choosing $m=p$, we have $P_{\mathrm{measure}} = \Omega(1/Nq^2)$. Using amplitude amplification, $O(\sqrt{N}q)$ iterations suffice to succeed with constant probability.

\subsection{Proof of \texorpdfstring{\thm{main}}{Theorem \ref{thm:main}}}

\begin{proof}
We first present the quantum Carleman linearization (QCL) algorithm and then analyze its complexity.

\paragraph{The QCL algorithm.} We start by rescaling the system to satisfy \eq{A2} and \eq{A3}. Given a quadratic ODE \eq{NODE} satisfying $\Rq<1$ (where $\Rq$ is defined in \eq{A1}), we define a scaling factor $\gamma>0$, and rescale $u$ to
\begin{equation}
\overline{u} \coloneqq \gamma u.
\end{equation}
Replacing $u$ by $\overline{u}$ in \eq{NODE}, we have
\begin{equation}
\begin{aligned}
\frac{\d{\overline{u}}}{\d{t}} &= \frac{1}{\gamma}F_2\overline{u}^{\otimes 2}+F_1\overline{u}+\gamma F_0(t), \\
\overline{u}(0) &= \overline{u}_{\mathrm{in}} \coloneqq \gamma u_{\mathrm{in}}.
\label{eq:NNODE}
\end{aligned}
\end{equation}
We let $\overline{F}_2 \coloneqq \frac{1}{\gamma}F_2$, $\overline{F}_1 \coloneqq F_1$, and $\overline{F}_0(t) \coloneqq \gamma F_0(t)$ so that
\begin{equation}
\begin{aligned}
\frac{\d{\overline{u}}}{\d{t}} &= \overline{F}_2\overline{u}^{\otimes 2}+\overline{F}_1\overline{u}+\overline{F}_0(t), \\
\overline{u}(0) &= \overline{u}_{\mathrm{in}}.
\end{aligned}
\end{equation}
Note that $\Rq$ is invariant under this rescaling, so $\Rq<1$ still holds for the rescaled equation.

Concretely, we take\footnote{In fact, one can show that any $\gamma \in \bigl(\frac{1}{r_+},\frac{1}{\|u_{\mathrm{in}}\|}\bigr)$ suffices to satisfy \eq{A2} and \eq{A3}.\label{foot:varygamma}}
\begin{equation}
\gamma = \frac{1}{\sqrt{\|u_{\mathrm{in}}\|r_+}}.
\label{eq:rescaling1}
\end{equation}
After rescaling, the new quadratic ODE satisfies $\|\overline{u}_{\mathrm{in}}\|\overline{r}_+=\gamma^2\|u_{\mathrm{in}}\|r_+=1$. Since $\|u_{\mathrm{in}}\|<r_+$ by \lem{decrease}, we have $\overline{r}_-<\|\overline{u}_{\mathrm{in}}\|<1<\overline{r}_+$, so \eq{A3} holds. Furthermore, $1$ is located between the two roots $\overline{r}_-$ and $\overline{r}_+$, which implies $\|\overline{F}_2\|\cdot1^2 + |{\Re{(\overline{\lambda}_1)}}|\cdot1 + \|\overline{F}_0\| < 0$ as shown in \lem{decrease}, so \eq{A2} holds for the rescaled problem.

Having performed this rescaling, we henceforth assume that \eq{A2} and \eq{A3} are satisfied. We then introduce the choice of parameters as follows. Given $g$ and an error bound $\epsilon\le1$, we define
\begin{equation}
\delta \coloneqq \frac{g\epsilon}{1+\epsilon} \le \frac{g}{2}.
\label{eq:delta_1}
\end{equation}
Given $\|u_{\mathrm{in}}\|$, $\|F_2\|$, and $\Re(\lambda_1)<0$, we choose
\begin{equation}
N = \biggl\lceil \frac{\log(2T\|F_2\|/\delta)}{\log(1/\|u_{\mathrm{in}}\|)} \biggr\rceil = \biggl\lceil \frac{\log(2T\|F_2\|/\delta)}{\log(r_+)} \biggr\rceil.
\label{eq:N}
\end{equation}
Since $\|u_{\mathrm{in}}\|/\delta>1$ by \eq{delta_1} and $g < \|u_{\mathrm{in}}\|$, 
\lem{Carleman} gives
\begin{equation}
\|u(T)-\hat y_1(T)\| \le \|\eta(T)\| \le TN\|F_2\|\|u_{\mathrm{in}}\|^{N+1} = TN\|F_2\|(\frac{1}{r_+})^{N+1} \le \frac{\delta}{2}.
\label{eq:error_1}
\end{equation}
Thus, \eq{g_condition_1} holds since $\delta \le g/2$.

Now we discuss the choice of $h$. On the one hand, $h$ must satisfy \eq{h} to satisfy the conditions of \lem{global} and \lem{condition}.
On the other hand, \lem{global} gives the upper bound 
\begin{equation}
\|\hat y_1(T)-y^m_1\| \le 3N^{2.5}Th[(\|F_2\|+\|F_1\|+\|F_0\|)^2+\|F_0'\|] \le \frac{g\epsilon}{4} \le \frac{g\epsilon}{2(1+\epsilon)} = \frac{\delta}{2}.
\label{eq:error_2}
\end{equation}
This also ensures that \eq{g_condition_2} holds since $\delta \le g/2$. Thus, we choose
\begin{equation}
\begin{aligned}
h \le \min \biggl\{& \frac{g\epsilon}{12N^{2.5}T[(\|F_2\|+\|F_1\|+\|F_0\|)^2+\|F_0'\|]},\frac{1}{N\|F_1\|}, \\ & \frac{2(|\Re(\lambda_1)|-\|F_2\|-\|F_0\|)}{N(|\Re(\lambda_1)|^2 - (\|F_2\|+\|F_0\|)^2 + \|F_1\|^2)} \biggr\}
\end{aligned}
\end{equation}
to satisfy \eq{h} and \eq{error_2}.

Combining \eq{error_1} with \eq{error_2}, we have
\begin{equation}
\|u(T)-y^m_1\| \le \|u(T)-\hat y_1(T)\|+\|\hat y_1(T)-y^m_1\| \le \delta.
\label{eq:error_3}
\end{equation}
Thus, \eq{epsilon_inequality} holds since $\delta \le g/2$. Using 
\begin{equation}
\biggl\|\frac{u(T)}{\|u(T)\|}-\frac{y^m_1}{\|y^m_1\|}\biggr\| \le \frac{\|u(T)-y^m_1\|}{\min\{\|u(T)\|,\|y^m_1\|\}} \le \frac{\|u(T)-y^m_1\|}{g-\|u(T)-y^m_1\|}
\label{eq:normalized_inequality}
\end{equation}
and \eq{error_3}, we obtain
\begin{equation}
\biggl\|\frac{u(T)}{\|u(T)\|}-\frac{y^m_1}{\|y^m_1\|}\biggr\| \le \frac{\delta}{g-\delta} = \epsilon,
\label{eq:error_4}
\end{equation}
i.e., the normalized output state is $\epsilon$-close to $\frac{u(T)}{\|u(T)\|}$.

We follow the procedure in \lem{preparation} to prepare the initial state $|\hat y_{\mathrm{in}}\rangle$.
We apply the QLSA \cite{CKS15} to the linear system \eq{linear_system} with $m=p=\lceil T/h \rceil$, giving a solution $|Y\rangle$. We then perform a measurement to obtain a normalized state of $|y^k_j\rangle$ for some 
$k\in\rangez{m+p+1}$ and $j\in\range{N}$.
By \lem{measure}, the probability of obtaining a state $|y^k_1\rangle$ for some $k\in\rangez{m+p+1}\setminus\rangez{m+1}$, giving the normalized vector ${y^m_1}/{\|y^m_1\|}$, is
\begin{equation}
P_{\mathrm{measure}} \ge \frac{p+1}{9(m+p+1)Nq^2} \ge \frac{1}{18Nq^2}.
\end{equation}
By amplitude amplification, we can achieve success probability $\Omega(1)$ with $O(\sqrt{N}q)$ repetitions of the above procedure.

\paragraph{Analysis of the complexity.} By \lem{preparation}, the right-hand side $|B\rangle$ in \eq{linear_system} can be prepared with $O(N)$ queries to $O_x$ and $O(m)$ queries to $O_{F_0}$, with gate complexity larger by a $\poly(\log N, \log n)$ factor. The matrix $L$ in \eq{linear_system} is an $(m+p+1)\dimn\times(m+p+1)\dimn$ matrix with $O(Ns)$ nonzero entries in any row or column. By \lem{condition} and our choice of parameters, the condition number of $L$ is at most
\begin{equation}
\begin{aligned}
&3(m+p+1) \\
&= O\biggl(\frac{N^{2.5}T^2[(\|F_2\|+\|F_1\|+\|F_0\|)^2+\|F_0'\|]}{\delta}+NT\|F_1\| \\
&\qquad\quad +\frac{NT[|\Re(\lambda_1)|^2 - (\|F_2\|+\|F_0\|)^2 + \|F_1\|^2]}{2(|\Re(\lambda_1)|-\|F_2\|-\|F_0\|)} \biggr) \\
&= O\biggl( \frac{N^{2.5}T^2[(\|F_2\|+\|F_1\|+\|F_0\|)^2+\|F_0'\|]}{g\epsilon} + \frac{1}{(1-\|u_{\mathrm{in}}\|)^2}\cdot\frac{NT\|F_1\|^2}{\|F_2\|+\|F_0\|} \biggr) \\
&= O\biggl(\frac{N^{2.5}T^2[(\|F_2\|+\|F_1\|+\|F_0\|)^2+\|F_0'\|]}{(1-\|u_{\mathrm{in}}\|)^2(\|F_2\|+\|F_0\|)g\epsilon}\biggr).
\label{eq:condition_estimate}
\end{aligned}
\end{equation}
Here we use $\|F_2\|+\|F_0\|<|\Re(\lambda_1)|\le\|F_1\|$ and
\begin{equation}
\begin{aligned}
2&(|\Re(\lambda_1)|-\|F_2\|-\|F_0\|)
>(\|u_{\mathrm{in}}\|+\frac{1}{\|u_{\mathrm{in}}\|}-2)(\|F_2\|+\|F_0\|) \\
&=\frac{1}{\|u_{\mathrm{in}}\|}(1-\|u_{\mathrm{in}}\|)^2(\|F_2\|+\|F_0\|)
> (1-\|u_{\mathrm{in}}\|)^2(\|F_2\|+\|F_0\|).
\end{aligned}
\end{equation}
The first inequality above is from the sum of $|\Re(\lambda_1)|>\|F_2\|\|u_{\mathrm{in}}\|+\|F_0\|/\|u_{\mathrm{in}}\|$ and $|\Re(\lambda_1)|=\|F_2\|r_++\|F_0\|/r_+$, where $r_+=1/\|u_{\mathrm{in}}\|$. Consequently, by Theorem 5 of \cite{CKS15}, the QLSA produces the state $|Y\rangle$ with
\begin{equation}
\begin{aligned}
& \frac{N^{3.5}sT^2[(\|F_2\|+\|F_1\|+\|F_0\|)^2+\|F_0'\|]}{(1-\|u_{\mathrm{in}}\|)^2(\|F_2\|+\|F_0\|)g\epsilon}\cdot
\poly\biggl(\log\frac{NsT\|F_2\|\|F_1\|\|F_0\|\|F'_0\|}{(1-\|u_{\mathrm{in}}\|)g\epsilon}\biggr) \\
&\quad = \frac{sT^2[(\|F_2\|+\|F_1\|+\|F_0\|)^2+\|F_0'\|]}{(1-\|u_{\mathrm{in}}\|)^2(\|F_2\|+\|F_0\|)g\epsilon}\cdot
\poly\biggl(\log\frac{sT\|F_2\|\|F_1\|\|F_0\|\|F'_0\|}{(1-\|u_{\mathrm{in}}\|)g\epsilon}/\log (1/\|u_{\mathrm{in}}\|)\biggr)
\end{aligned}
\end{equation}
queries to the oracles $O_{F_2}$, $O_{F_1}$, $O_{F_0}$, and $O_{x}$. Using $O(\sqrt{N}q)$ steps of amplitude amplification to achieve success probability $\Omega(1)$, the overall query complexity of our algorithm is
\begin{equation}
\begin{aligned}
& \frac{N^4sT^2q[(\|F_2\|+\|F_1\|+\|F_0\|)^2+\|F_0'\|]}{(1-\|u_{\mathrm{in}}\|)^2(\|F_2\|+\|F_0\|)g\epsilon}\cdot
\poly\biggl(\log\frac{NsT\|F_2\|\|F_1\|\|F_0\|\|F'_0\|}{(1-\|u_{\mathrm{in}}\|)g\epsilon}\biggr) \\
&\quad = \frac{sT^2q[(\|F_2\|+\|F_1\|+\|F_0\|)^2+\|F_0'\|]}{(1-\|u_{\mathrm{in}}\|)^2(\|F_2\|+\|F_0\|)g\epsilon}\cdot
\poly\biggl(\log\frac{sT\|F_2\|\|F_1\|\|F_0\|\|F'_0\|}{(1-\|u_{\mathrm{in}}\|)g\epsilon}/\log (1/\|u_{\mathrm{in}}\|)\biggr)
\label{eq:query_complexity}
\end{aligned}
\end{equation}
and the gate complexity exceeds this by a factor of
$\poly\bigl(\log(nsT\|F_2\|\|F_1\|\|F_0\|\|F'_0\|/(1-\Rq) g\epsilon)/\allowbreak \log (1/\|u_{\mathrm{in}}\|)\bigr)$.

If the eigenvalues $\lambda_j$ of $F_1$ are all real, by \eq{h_real}, the condition number of $L$ is at most
\begin{equation}
\begin{aligned}
3(m+p+1) &= O\biggl(\frac{N^{2.5}T^2[(\|F_2\|+\|F_1\|+\|F_0\|)^2+\|F_0'\|]}{\delta}+NT\|F_1\|\biggr) \\
&= O\biggl(\frac{N^{2.5}T^2[(\|F_2\|+\|F_1\|+\|F_0\|)^2+\|F_0'\|]}{g\epsilon}\biggr).
\end{aligned}
\end{equation}
Similarly, the QLSA produces the state $|Y\rangle$ with
\begin{equation}
\begin{aligned}
\frac{sT^2[(\|F_2\|+\|F_1\|+\|F_0\|)^2+\|F_0'\|]}{g\epsilon}\cdot
\poly\biggl(\log\frac{sT\|F_2\|\|F_1\|\|F_0\|\|F'_0\|}{g\epsilon}/\log (1/\|u_{\mathrm{in}}\|) \biggr)
\end{aligned}
\end{equation}
queries to the oracles $O_{F_2}$, $O_{F_1}$, $O_{F_0}$, and $O_{x}$. Using amplitude amplification to achieve success probability $\Omega(1)$, the overall query complexity of the algorithm is
\begin{equation}
\begin{aligned}
\frac{sT^2q[(\|F_2\|+\|F_1\|+\|F_0\|)^2+\|F_0'\|]}{g\epsilon}\cdot
\poly\biggl(\log\frac{sT\|F_2\|\|F_1\|\|F_0\|\|F'_0\|}{g\epsilon}/\log (1/\|u_{\mathrm{in}}\|) \biggr)
\label{eq:query_complexity_real}
\end{aligned}
\end{equation}
and the gate complexity is larger by a factor of
$\poly\bigl(\log(nsT\|F_2\|\|F_1\|\|F_0\|\|F'_0\|/g\epsilon)/\log (1/\|u_{\mathrm{in}}\|)\bigr)$
as claimed.
\end{proof}

\section{Lower bound}
\label{sec:lower}

In this section, we establish a limitation on the ability of quantum computers to solve the quadratic ODE problem when the nonlinearity is sufficiently strong. We quantify the strength of the nonlinearity in terms of the quantity $\Rq$ defined in \eq{A1}. Whereas there is an efficient quantum algorithm for $\Rq < 1$ (as shown in \thm{main}), we show here that the problem is intractable for $\Rq \ge \sqrt2$.

\begin{theorem}\label{thm:lower}
Assume $\Rq \ge \sqrt2$. Then there is an instance of the quantum quadratic ODE problem defined in \prb{node} such that any quantum algorithm for producing a quantum state approximating $u(T)/\| u(T)\|$ with bounded error must have worst-case time complexity exponential in $T$.
\end{theorem}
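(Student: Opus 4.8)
The plan is to follow the reduction-from-state-discrimination strategy of Abrams--Lloyd \cite{AL98} and Childs--Young \cite{CY16}: exhibit a concrete quadratic ODE on a qubit ($n=2$) whose flow drives nearby initial conditions exponentially far apart, so that a hypothetical fast algorithm for \prb{node} would distinguish nonorthogonal quantum states far better than quantum mechanics permits.

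Concretely, I would take $F_1 = -I$ (so $\Re(\lambda_1) = -1 < 0$ and $F_1$ is diagonalizable), $F_0 = 0$, and the decoupled quadratic nonlinearity $F_2 u^{\otimes 2} = \sqrt 2\,(u_1^2, u_2^2)^T$, so that $\|F_2\| = \sqrt 2$ and, taking $\|u_{\mathrm{in}}\| = 1$, we get $\Rq = \sqrt 2$. (For a target value $\Rq = R_0 > \sqrt 2$, I would append a coordinate $u_3$ with $u_3(0) = 0$ and $\dot u_3 = R_0 u_3^2 - u_3$, which stays identically zero, raises $\|F_2\|$ to $R_0$, and leaves the qubit dynamics untouched.) Each coordinate then obeys the scalar equation $\dot x = \sqrt 2\,x(x - \tfrac{1}{\sqrt 2})$, with explicit solution $x(t)^{-1} = \sqrt 2 - (\sqrt 2 - x(0)^{-1})e^{t}$: a stable fixed point at $0$, an unstable one at $\tfrac{1}{\sqrt 2}$ (where the linearized growth rate equals $|\Re(\lambda_1)| = 1$), and finite-time blow-up, at time $\Theta(\log(1/(\sqrt 2\,x(0)-1)))$, whenever $x(0) > \tfrac{1}{\sqrt 2}$. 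For $u_{\mathrm{in}} = (\cos\theta, \sin\theta)^T$ with $\theta = \tfrac{\pi}{4} \pm \delta$, exactly one coordinate starts above $\tfrac1{\sqrt2}$, so the surviving coordinate blows up at a time $t^\ast = \Theta(\log(1/\delta))$; I would set $T$ to be a suitable constant below $t^\ast$ and verify, by tracking the ratio $u_2(t)/u_1(t)$ near $t^\ast$ via the explicit solution, that $u(T)/\|u(T)\|$ is within any prescribed constant of $|1\rangle$ when $\theta = \tfrac{\pi}{4}+\delta$ and of $|0\rangle$ when $\theta = \tfrac{\pi}{4}-\delta$. The point is that $T = \Theta(\log(1/\delta))$ while $|\langle\psi_{\pi/4+\delta}|\psi_{\pi/4-\delta}\rangle| = \cos 2\delta = 1 - \Theta(\delta^2)$.

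Given this instance, suppose some quantum algorithm solves \prb{node} with bounded error in time $\mathcal C(T)$. Choosing the preparation oracle $O_x$ to be the planar rotation $|0\rangle \mapsto \cos\theta|0\rangle + \sin\theta|1\rangle$ (for a lower bound we may pick any valid oracle), a standard hybrid argument bounds the trace distance between the final states produced on $\theta = \tfrac{\pi}{4}+\delta$ versus $\theta = \tfrac{\pi}{4}-\delta$ by an algorithm making $N_q$ queries to $O_x$ (and $O_x^{-1}$) by $O(N_q\,\|U_{\pi/4+\delta} - U_{\pi/4-\delta}\|) = O(N_q\,\delta)$; hence distinguishing the two inputs with constant bias forces $N_q = \Omega(1/\delta)$. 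But running the ODE algorithm on the instance above with evolution time $T$ and measuring the output in the computational basis distinguishes the inputs with bias $\Omega(1)$ using at most $\mathcal C(T)$ queries to $O_x$. Therefore $\mathcal C(T) = \Omega(1/\delta) = 2^{\Omega(T)}$, i.e., the worst-case time complexity is exponential in $T$.

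The main obstacle is the trajectory analysis in the second step: one must choose a single evolution time $T$ that works for both input states and is simultaneously (i) strictly before blow-up, (ii) late enough that the surviving coordinate dominates the other by an arbitrarily large constant factor, and (iii) equal to $\Theta(\log(1/\delta))$. From the explicit solution, the dominance ratio exceeds any fixed threshold only within $\Theta(\delta)$ (in the variable $e^t$) of the blow-up time, which is precisely why the unstable fixed point must sit at $1/\sqrt2$ — and hence why the threshold value $\Rq = \sqrt 2$ appears; pinning down these constants and the "same $T$ for both states" claim is the bulk of the work. A secondary point is stating the query lower bound for state discrimination with oracle (rather than sample) access carefully, though this is the routine hybrid argument.
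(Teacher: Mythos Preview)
Your proposal is correct and follows essentially the same strategy as the paper: the same decoupled scalar ODE $\dot x=-x+rx^{2}$ on a qubit, the same use of the unstable fixed point at $1/r\le 1/\sqrt2$ to drive nearby unit vectors to well-separated normalized states in time $\Theta(\log(1/\delta))$, and the same reduction to the hardness of distinguishing nonorthogonal states. The only cosmetic differences are that the paper takes one of the two inputs to be the diagonal state $(1/\sqrt2,1/\sqrt2)$ (whose normalized trajectory is trivially constant, sidestepping your ``same $T$ for both states'' issue), varies $r$ directly rather than padding with a dummy coordinate, and invokes a Helstrom-type copy-counting bound in place of your hybrid argument.
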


We establish this result by showing how the nonlinear dynamics can be used to distinguish nonorthogonal quantum states, a task that requires many copies of the given state. Note that since our algorithm only approximates the quantum state corresponding to the solution, we must lower bound the query complexity of \emph{approximating} the solution of a quadratic ODE.

\subsection{Hardness of state discrimination}


Previous work on the computational power of nonlinear quantum mechanics shows that the ability to distinguish non-orthogonal states can be applied to solve unstructured search (and other hard computational problems) \cite{AL98,Aar05,CY16}. Here we show a similar limitation using an information-theoretic argument.

\begin{lemma}\label{lem:search}
Let $|\psi\rangle,|\phi\rangle$ be states of a qubit with $|\langle\psi|\phi\rangle|=1-\epsilon$. Suppose we are either given a black box that prepares $|\psi\rangle$ or a black box that prepares $|\phi\rangle$. Then any bounded-error protocol for determining whether the state is $|\psi\rangle$ or $|\phi\rangle$ must use $\Omega(1/\epsilon)$ queries.
\end{lemma}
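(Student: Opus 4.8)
The plan is to give a purely information-theoretic argument: a protocol making few queries produces almost indistinguishable states in the two cases, so it cannot guess correctly with bias bounded away from $0$.

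First I would reduce an arbitrary $k$-query protocol to a canonical form. Model the black box as the map $\ket{0}\mapsto\ket{\psi}$ (resp.\ $\ket{0}\mapsto\ket{\phi}$) on a fresh register, pad the protocol so that it makes exactly $k$ queries, and defer all intermediate measurements to the end. Since the $i$-th query acts on register $i$, which is disjoint from every unitary applied before that query, all $k$ queries commute past the earlier unitaries and can be moved to the start; the protocol then consists of preparing $\ket{\psi}^{\otimes k}$ (resp.\ $\ket{\phi}^{\otimes k}$) together with a fixed ancilla in state $\ket{0}$, applying a single unitary $W$ that is independent of the supplied state, and performing a fixed two-outcome measurement. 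Write $\rho_\psi \coloneqq W\bigl(\ket{\psi}\!\bra{\psi}^{\otimes k}\otimes\ket{0}\!\bra{0}\bigr)W^\dagger$ and define $\rho_\phi$ analogously; these are the states seen by the final measurement.

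Next I would bound the fidelity between the two final states. Fidelity is invariant under the common unitary $W$ and multiplicative across tensor factors, so $F(\rho_\psi,\rho_\phi) = F\bigl(\ket{\psi}^{\otimes k},\ket{\phi}^{\otimes k}\bigr) = |\braket{\psi|\phi}|^{k} = (1-\epsilon)^{k}$. The Fuchs--van de Graaf inequality then gives $\tfrac12\norm{\rho_\psi-\rho_\phi}_1 \le \sqrt{1-(1-\epsilon)^{2k}}$, and the optimal success probability of any measurement distinguishing $\rho_\psi$ from $\rho_\phi$ under a uniform prior is at most $\tfrac12+\tfrac14\norm{\rho_\psi-\rho_\phi}_1 \le \tfrac12+\tfrac12\sqrt{1-(1-\epsilon)^{2k}}$. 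For a bounded-error protocol this quantity must be at least some constant $c>\tfrac12$, which forces $(1-\epsilon)^{2k}\le 1-(2c-1)^2<1$, hence $2k\log\tfrac{1}{1-\epsilon}\ge\log\tfrac{1}{1-(2c-1)^2}$. Using $\log\tfrac{1}{1-\epsilon}\le 2\epsilon$ for $\epsilon\le\tfrac12$ (and noting the claim is trivial when $\epsilon$ is bounded away from $0$), this yields $k=\Omega(1/\epsilon)$.

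The main obstacle is the reduction in the first step: one must check carefully that the black box does nothing beyond appending a single fresh copy of the unknown state per query, so that deferring the queries (and measurements) is legitimate and the controlling unitary $W$ is genuinely state-independent. Once this canonical form is in place, the remaining steps are standard facts about fidelity, the Fuchs--van de Graaf inequalities, and the Helstrom bound for binary state discrimination.
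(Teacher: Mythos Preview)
Your argument is correct and follows the same approach as the paper: bound the overlap after $k$ queries by $(1-\epsilon)^k$, convert to trace distance via Fuchs--van de Graaf, and apply the Helstrom bound. The paper's proof is considerably terser---it simply asserts that $k$ uses of the black box yield states with overlap $(1-\epsilon)^k$ and then bounds the trace distance by $\sqrt{1-(1-\epsilon)^{2k}}\le\sqrt{2k\epsilon}$---whereas you spell out the reduction to a canonical protocol (defer measurements, commute the fresh-register preparations to the front) that justifies this assertion; this extra care is useful and does not change the underlying strategy.
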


\begin{proof}
Using the black box $k$ times, we prepare states with overlap $(1-\epsilon)^k$. By the well-known relationship between fidelity and trace distance, these states have trace distance at most $\sqrt{1-(1-\epsilon)^{2k}} \le \sqrt{2k\epsilon}$. Therefore, by the Helstrom bound (which states that the advantage over random guessing for the best measurement to distinguish two quantum states is given by their trace distance \cite{Hel69}), we need $k = \Omega(1/\epsilon)$ to distinguish the states with bounded error.
\end{proof}

\subsection{State discrimination with nonlinear dynamics}

\lem{search} can be used to establish limitations on the ability of quantum computers to simulate nonlinear dynamics, since nonlinear dynamics can be used to distinguish nonorthogonal states. Whereas previous work considers models of nonlinear quantum dynamics (such as the Weinberg model \cite{AL98,Aar05} and the Gross-Pitaevskii equation \cite{CY16}), here we aim to show the difficulty of efficiently simulating more general nonlinear ODEs---in particular, quadratic ODEs with dissipation---using quantum algorithms.

\begin{lemma}\label{lem:discrimination}
There exists an instance of the quantum quadratic ODE problem as defined in \prb{node} with $\Rq \ge \sqrt2$, and two states of a qubit with overlap $1-\epsilon$ (for $0 < \epsilon < 1-3/\sqrt{10}$) as possible initial conditions, such that the two final states after evolution time $T=O(\log(1/\epsilon))$ have an overlap no larger than $3/\sqrt{10}$.
\end{lemma}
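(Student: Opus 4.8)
The statement is purely about a quadratic ODE, so the plan is to write down an explicit instance of \prb{node} with $\Rq\ge\sqrt2$ whose flow expands a chosen direction exponentially, and then to estimate the overlap of the two evolved states. This is the piece that, combined with \lem{search}, will yield \thm{lower}.

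For the construction I would take $n=2$, keep the linear part dissipative ($F_1=\diag(-1,-1)$, so $\Re(\lambda_1)=-1<0$), and generate the instability from a \emph{cross-term} in the quadratic part evaluated at a nonzero equilibrium of the first coordinate. Concretely, fix constants $0<p<1$ and $k>1/p$ and take
\begin{equation}
\frac{\d u_1}{\d t} = -u_1 + p, \qquad \frac{\d u_2}{\d t} = -u_2 + k\, u_1 u_2 .
\end{equation}
Then $u_1(t)\to p$ monotonically (it stays in $[p,1]$ whenever $u_1(0)\le 1$), and the second equation is linear in $u_2$ with instantaneous rate $k u_1(t)-1 \ge kp-1 =: \mu > 0$, so $|u_2(t)|$ grows at least like $e^{\mu t}$. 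The strength of the nonlinearity is exactly what permits this despite dissipation: for a normalized qubit initial state, $\Rq = \|u_{\mathrm{in}}\|\|F_2\| + \|F_0\|/\|u_{\mathrm{in}}\|$, which with $|\Re(\lambda_1)|=1$, $\|F_2\|=\Theta(k)$, $\|F_0\|=p$, and $k>1/p$ is bounded below by an absolute constant strictly larger than $\sqrt2$ (by AM--GM on $k$ and $1/p$). I would then verify the remaining hypotheses of \prb{node}: $F_1$ is diagonal hence diagonalizable with negative-real-part spectrum, the matrices are $O(1)$-sparse, and all norms and oracles are trivially available for this tiny explicit instance (which, importantly, does not depend on $\epsilon$).

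Next I would take the two initial conditions to be the qubit states $|\psi^\pm\rangle = \cos\tfrac\theta2\,|0\rangle \pm \sin\tfrac\theta2\,|1\rangle$ with $\cos\theta = 1-\epsilon$, so their overlap is exactly $1-\epsilon$ while their second coordinates are $\pm\sin\tfrac\theta2 = \pm\Theta(\sqrt\epsilon)$. The dynamics above is solved in closed form: $u_1^\pm(T) = p + O(e^{-T})$, and $u_2^\pm(T) = \pm\, c_0\sqrt\epsilon\, e^{\mu T}$ up to a bounded transient factor $c_0=\Theta(1)$ (the transient while $u_1$ relaxes to $p$ contributes only $\exp(\pm k|u_1(0)-p|)$). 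Choosing $T$ to be the first time at which $|u_2^\pm(T)|$ reaches a prescribed constant multiple $\rho$ of $p$ gives $T = \tfrac1\mu\log\Theta(1/\sqrt\epsilon) = O(\log(1/\epsilon))$, and the normalized final states become proportional to $(1,\pm\rho)$, whose overlap is $\tfrac{1-\rho^2}{1+\rho^2}$. Taking $\rho$ large enough that $\tfrac{1-\rho^2}{1+\rho^2}\le 3/\sqrt{10}$ completes the estimate; the hypothesis $0<\epsilon<1-3/\sqrt{10}$ just guarantees the pair starts strictly above the target overlap $3/\sqrt{10}$ (and that $\epsilon$ is small enough for the $O(1)$-time transient to be negligible before time $T$), so that the claimed separation is meaningful and the explicit-solution bound applies all the way to $T$.

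The main obstacle I expect is conceptual rather than computational: reconciling \prb{node}'s dissipation requirement $\Re(\lambda_1)<0$ with genuine exponential expansion. This is precisely what forces the cross-term design — the growth must come from the quadratic term read off at a nonzero operating point, not from the linear part — and it is why the threshold lands near $\Rq=\sqrt2$ rather than near the algorithm's $\Rq<1$. The rest is bookkeeping: bounding the transient, confirming the solution exists on all of $[0,T]$ (a coordinate grows, so $T$ is chosen before anything degenerates), and tracking the differing norms of the two unnormalized trajectories when passing to the normalized outputs. None of this is deep; the only care needed is in fixing the constants $p,k,\rho$ so that the final overlap lands at $3/\sqrt{10}$ over the stated range of $\epsilon$.
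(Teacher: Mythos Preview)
Your construction is correct and proves the lemma as stated, but it is genuinely different from the paper's. The paper takes the \emph{decoupled homogeneous} system $\dot u_i=-u_i+r u_i^2$ (so $\Rq=r$), uses the two initial states $(1/\sqrt2,1/\sqrt2)$ and $(\cos(\theta+\tfrac\pi4),\sin(\theta+\tfrac\pi4))$, and exploits \emph{finite-time blowup} of the larger coordinate: since $\sin(\theta+\tfrac\pi4)>1/\sqrt2\ge 1/\Rq$, that coordinate diverges at a time $t^\ast=\log\bigl(\Rq/(\Rq-1/w_0)\bigr)$, so well before $t^\ast$ the ratio $K=w/v$ reaches $2$ and the overlap drops to $(K+1)/\sqrt{2K^2+2}\le 3/\sqrt{10}$, with $T<t^\ast=O(\log(1/\epsilon))$. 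By contrast, you engineer \emph{exponential} growth via a cross term $k u_1u_2$ about an inhomogeneity-driven equilibrium $u_1\to p$, and use the symmetric pair $(\cos\tfrac\theta2,\pm\sin\tfrac\theta2)$, which makes the overlap computation $(1-\rho^2)/(1+\rho^2)$ especially clean and keeps the solution globally defined. What the paper's route buys is tightness at the threshold: the construction works for \emph{every} $\Rq\ge\sqrt2$ (one just sets $r=\Rq$), whereas your AM--GM bound forces $\Rq>2^{3/4}$, so you exhibit an instance with $\Rq\ge\sqrt2$ but cannot tune $\Rq$ down to $\sqrt2$; this is harmless for the lemma's existential statement but slightly weakens what you could feed into \thm{lower}. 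What your route buys is robustness: no blowup time to manage, a solution that exists for all $t$, and (because $u_1^+\equiv u_1^-$, $u_2^+\equiv-u_2^-$) the two unnormalized trajectories have identical norms, so the ``tracking differing norms'' step you flag as bookkeeping actually disappears.
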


\begin{proof}
Consider a $2$-dimensional system of the form
\begin{equation}
\begin{aligned}
\frac{\d{u_1}}{\d{t}} &= -u_1 + r u_1^2, \\
\frac{\d{u_2}}{\d{t}} &= -u_2 + r u_2^2,
\end{aligned}
\label{eq:NODE2}
\end{equation}
for some $r>0$, with an initial condition $u(0)=[u_1(0);u_2(0)]=u_{\mathrm{in}}$ satisfying $\|u_{\mathrm{in}}\|=1$. According to the definition of $\Rq$ in \eq{A1}, we have $\Rq=r$, so henceforth we write this parameter as $\Rq$. The analytic solution of \eq{NODE2} is
\begin{equation}
\begin{aligned}
u_1(t) &= \frac{1}{\Rq-e^t(\Rq-1/u_1(0))}, \\
u_2(t) &= \frac{1}{\Rq-e^t(\Rq-1/u_2(0))}.
\end{aligned}
\label{eq:u1u2}
\end{equation}

When $u_2(0)>1/\Rq$, $u_2(t)$ is finite within the domain
\begin{equation}
0 \le t < t^{\ast} \coloneqq \log\biggl(\frac{\Rq}{\Rq-1/u_2(0)}\biggr);
\label{eq:domain}
\end{equation}
when $u_2(0)=1/\Rq$, we have $u_2(t)=1/\Rq$ for all $t$; and when $u_2(0)<1/\Rq$, $u_2(t)$ goes to $0$ as $t \to \infty$. The behavior of $u_1(t)$ depends similarly on $u_1(0)$.

Without loss of generality, we assume $u_1(0) \le u_2(0)$. For $u_2(0)\ge u_1(0)>1/\Rq$, both $u_1(t)$ and $u_2(t)$ are finite within the domain \eq{domain}, which we consider as the domain of $u(t)$.

Now we consider 1-qubit states that provide inputs to \eq{NODE2}. Given a sufficiently small $\epsilon>0$, we first define $\theta \in (0,\pi/4)$ by
\begin{equation}
2\sin^2\frac{\theta}{2} = \epsilon.
\label{eq:theta}
\end{equation}
We then construct two 1-qubit states with overlap $1-\epsilon$, namely
\begin{equation}
|\phi(0)\rangle = \frac{1}{\sqrt{2}} (|0\rangle + |1\rangle)
\label{eq:phi_0}
\end{equation}
and
\begin{equation}
|\psi(0)\rangle = \cos\Bigl(\theta+\frac{\pi}{4}\Bigr) |0\rangle + \sin\Bigl(\theta+\frac{\pi}{4}\Bigr) |1\rangle.
\label{eq:psi_0}
\end{equation}
Then the overlap between the two initial states is
\begin{equation}
\langle \phi(0) | \psi(0) \rangle = \cos\theta
= 1-\epsilon.
\label{eq:overlap_0}
\end{equation}
The initial overlap \eq{overlap_0} is larger than the target overlap $3/\sqrt{10}$ in \lem{discrimination} provided $\epsilon<1-3/\sqrt{10}$. For simplicity, we denote
\begin{equation}
\begin{aligned}
v_0 \coloneqq \cos\Bigl(\theta+\frac{\pi}{4}\Bigr), \\
w_0 \coloneqq \sin\Bigl(\theta+\frac{\pi}{4}\Bigr),
\end{aligned}
\label{eq:v0w0}
\end{equation}
and let $v(t)$ and $w(t)$ denote solutions of \eq{NODE2} with initial conditions $v(0)=v_0$ and $w(0)=w_0$, respectively. Since $w_0 > 1/\Rq$, we see that $w(t)$ increases with $t$, satisfying
\begin{equation}
\frac{1}{\Rq} \le \frac{1}{\sqrt{2}} < w_0 < w(t),
\label{eq:vbw}
\end{equation}
and
\begin{equation}
v(t) < w(t)
\label{eq:vw}
\end{equation}
for any time $0<t<t^{\ast}$, whatever the behavior of $v(t)$.

We now study the outputs of our problem. For the state $|\phi(0)\rangle$, the initial condition for \eq{NODE2} is $[1/\sqrt{2} ; 1/\sqrt{2}]$. Thus, the output for any $t \ge 0$ is
\begin{equation}
|\phi(t)\rangle = \frac{1}{\sqrt{2}} (|0\rangle + |1\rangle).
\label{eq:phi_2}
\end{equation}

For the state $|\psi(0)\rangle$, the initial condition for \eq{NODE2} is $[v_0;w_0]$. We now discuss how to select a terminal time $T$ to give a useful output state $|\psi(T)\rangle$. For simplicity, we denote the ratio of $w(t)$ and $v(t)$ by
\begin{equation}
K(t) \coloneqq \frac{w(t)}{v(t)}.
\label{eq:wkv}
\end{equation}
Noticing that $w(t)$ goes to infinity as $t$ approaches $t^{\ast}$, while $v(t)$ remains finite within \eq{domain}, there exists a terminal time $T$ such that\footnote{More concretely, we take $v_{\text{max}} = \max\{v_0, v(t^{\ast})\}$ that upper bounds $v(t)$ on the domain $[0,t^{\ast})$, in which $v(t^{\ast})$ is a finite value since $v_0<w_0$. Then there exists a terminal time $T$ such that $w(T)=2v_{\text{max}}$, and hence $K(T) = w(T)/v(T) \ge 2$.
}
\begin{equation}
K(T) \ge 2.
\label{eq:T2}
\end{equation}
The normalized output state at this time $T$ is
\begin{equation}
|\psi(T)\rangle = \frac{1}{\sqrt{K(T)^2+1}} (|0\rangle + K(T)|1\rangle).
\label{eq:psi_2}
\end{equation}

Combining \eq{phi_2} with \eq{psi_2}, the overlap of $|\phi(T)\rangle$ and $|\psi(T)\rangle$ is
\begin{equation}
\langle \phi(T) | \psi(T) \rangle = \frac{K(T)+1}{\sqrt{2K(T)^2+2}} 
\le \frac{3}{\sqrt{10}}
\label{eq:overlap}
\end{equation}
using \eq{T2}.

Finally, we estimate the evolution time $T$, which is implicitly defined by \eq{T2}. We can upper bound its value by $t^{\ast}$. According to \eq{domain}, we have
\begin{equation}
T < t^{\ast} = \log\biggl(\frac{\Rq}{\Rq-\frac{1}{w_0}}\biggr) < \log\biggl(\frac{\sqrt{2}}{\sqrt{2}-\frac{1}{w_0}}\biggr)
\end{equation}
since the function $\log(x/(x-c))$ decreases monotonically with $x$ for $x>c>0$.
Using \eq{overlap_0} to rewrite this expression in terms of $\epsilon$, we have
\begin{equation}
T < t^{\ast} < \log\Biggl(\frac{\sqrt{2}}{\sqrt{2}-\frac{1}{\sin(\theta+\frac{\pi}{4})}}\Biggr)
= \log\biggl(1 + \frac{1}{\sqrt{2\epsilon-\epsilon^2}-\epsilon}\biggr),
\label{eq:T_estimate}
\end{equation}
which scales like $\frac{1}{2} \log(1/2\epsilon)$ as $\epsilon \to 0$. Therefore $T = O(\log({1}/{\epsilon}))$
as claimed.
\end{proof}

\subsection{Proof of \texorpdfstring{\thm{lower}}{Theorem \ref{thm:lower}}}

We now establish our main lower bound result.

\begin{proof}
As introduced in the proof of \lem{discrimination}, consider the quadratic ODE \eq{NODE2}; the two initial states of a qubit $|\phi(0)\rangle$ and $|\psi(0)\rangle$ defined in \eq{phi_0} and \eq{psi_0}, respectively; and the terminal time $T$ defined in \eq{T2}.

Suppose we have a quantum algorithm that, given a black box to prepare a state that is either $|\phi(0)\rangle$ or $|\psi(0)\rangle$, can produce quantum states $|\phi^\prime(T)\rangle$ or $|\psi^\prime(T)\rangle$ that are within distance $\delta$ of $|\phi(T)\rangle$ and $|\psi(T)\rangle$, respectively. Since by \lem{discrimination}, $|\phi(T)\rangle$ and $|\psi(T)\rangle$ have constant overlap, the overlap between $|\phi^\prime(T)\rangle$ and $|\psi^\prime(T)\rangle$ is also constant for sufficiently small $\delta$. More precisely, we have
\begin{equation}
    \langle \phi(T) | \psi(T) \rangle \le \frac{3}{\sqrt{10}}
\end{equation}
by \eq{overlap}, which implies
\begin{equation}
    \| |\phi(T)\rangle - \ket{\psi(T)}\| \geq \sqrt{2\biggl(1-\frac{3}{\sqrt{10}}\biggr)} > 0.32.
\end{equation}
We also have
\begin{equation}
    \|\ket{\phi(T)} - \ket{\phi^\prime(T)}\|\leq \delta,
\end{equation}
and similarly for $\psi(T)$. These three inequalities give us
\begin{align}
    \|\ket{\phi^\prime(T)} - \ket{\psi^\prime(T)}\|
    &=\|(\ket{\phi(T)} - \ket{\psi(T))} - (\ket{\phi(T)} - \ket{\phi^\prime(T))} - (\ket{\psi^\prime(T)} - \ket{\psi(T))}\| \nonumber\\
    &\geq\|(\ket{\phi(T)} - \ket{\psi(T))}\| - \|(\ket{\phi(T)} - \ket{\phi^\prime(T))}\| - \|(\ket{\psi^\prime(T)} - \ket{\psi(T))}\| \nonumber\\
    &> 0.32 - 2\delta,
\end{align}
which is at least a constant for (say) $\delta<0.15$.

\lem{search} therefore shows that preparing the states $|\phi'(T)\rangle$ and $|\psi'(T)\rangle$ requires time $\Omega(1/\epsilon)$, as these states can be used to distinguish the two possibilities with bounded error. By \lem{discrimination}, this time is $2^{\Omega(T)}$. This shows that we need at least exponential simulation time to approximate the solution of arbitrary quadratic ODEs to within sufficiently small bounded error when $\Rq\geq \sqrt{2}$. 
\end{proof}

Note that exponential time is achievable since our QCL algorithm can solve the problem by taking $N$ to be exponential in $T$, where $N$ is the truncation level of Carleman linearization. (The algorithm of Leyton and Osborne also solves quadratic differential equations with complexity exponential in $T$, but requires the additional assumptions that the quadratic polynomial is measure-preserving and Lipschitz continuous \cite{LO08}.)

\section{Applications}
\label{sec:application}

Due to the assumptions of our analysis, our quantum Carleman linearization algorithm can only be applied to problems with certain properties. First, there are two requirements to guarantee convergence of the inhomogeneous Carleman linearization: the system must have linear dissipation, manifested by $\Re(\lambda_1) < 0$; and the dissipation must be sufficiently stronger than both the nonlinear and the forcing terms, so that $\Rq < 1$. Dissipation typically leads to an exponentially decaying solution, but for the dependency on $g$ and $q$ in \eq{query_complexity} to allow an efficient algorithm, the solution cannot exponentially approach zero.

However, this issue does not arise if the forcing term $F_0$ resists the exponential decay towards zero, instead causing the solution to decay towards some non-zero (possibly time-dependent) state. The norm of the state that is exponentially approached can possibly decay towards zero, but this decay itself must happen slower than exponentially for the algorithm to be efficient.\footnote{Also note that the QCL algorithm might provide an advantage over classical computation for homogeneous equations in cases where only evolution for a short time is of interest.} 

We now investigate possible applications that satisfy these conditions. First we present an application governed by ordinary differential equations, and then we present possible applications in partial differential equations.

Several physical systems can be represented in terms of quadratic ODEs. Examples include models of interacting populations of predators and prey \cite{Wal83}, dynamics of chemical reactions \cite{Mon72,CNF19}, and the spread of an epidemic \cite{BC12}. We now give an example of the latter, based on the epidemiological model used in \cite{WLX20} to describe the early spread of the COVID-19 virus.

The so-called SEIR model divides a population of $P$ individuals into four components: susceptible ($P_S$), exposed ($P_E$), infected ($P_I$), and recovered ($P_R$). We denote the rate of transmission from an infected to a susceptible person by $\rtra$, the typical time until an exposed person becomes infectious by the latent time $\Tlat$, and the typical time an infectious person can infect others by  the infectious time $\Tinf$. Furthermore we assume that there is a flux $\Lambda$ of individuals constantly refreshing the population. This flux corresponds to susceptible individuals moving into, and individuals of all components moving out of, the population, in such a way that the total population remains constant.

To ensure that there is sufficiently strong linear decay to guarantee Carleman convergence, we also add a vaccination term to the $P_S$ component. We choose an approach similar to that of \cite{GYI08}, but denote the vaccination rate, which is approximately equal to the fraction of susceptible individuals vaccinated each day, by $\rvac$. The model is then
\begin{align}
    \frac{\d P_S}{\d t} &= - \Lambda \frac{P_S}{P} - \rvac P_S  + \Lambda - \rtra P_S \frac{P_I}{P}\label{eq:seir_dSdt}\\
    \frac{\d P_E}{\d t} &= - \Lambda \frac{P_E}{P} - \frac{P_E}{\Tlat} + \rtra P_S \frac{P_I}{P}\label{eq:seir_dEdt} \\
    \frac{\d P_I}{\d t} &= - \Lambda \frac{P_I}{P} + \frac{P_E}{\Tlat} - \frac{P_I}{\Tinf}\label{eq:seir_dIdt}\\
    \frac{\d P_R}{\d t} &=- \Lambda \frac{P_R}{P} + \rvac P_S + \frac{P_I}{\Tinf}\label{eq:seir_dVdt}. 
\end{align}

The sum of equations \eq{seir_dSdt}--\eq{seir_dVdt} shows that $P = P_S+P_E+P_I+P_R$ is a constant. Hence we do not need to include the equation for $P_R$ in our analysis, which is crucial since the $P_R$ component would have introduced positive eigenvalues. The matrices corresponding to \eq{NODE} are then
\begin{align}
  F_0 &= \begin{pmatrix} 
  \Lambda\\
  0\\
  0\\
  \end{pmatrix}, \quad
  F_1 = \begin{pmatrix} 
  -\frac{\Lambda}{P} - \rvac & 0 & 0 \\
  0 & -\frac{\Lambda}{P} - \frac{1}{\Tlat} & 0\\
  0 &  \frac{1}{\Tlat} & -\frac{\Lambda}{P} - \frac{1}{\Tinf}\\
  \end{pmatrix}, \\
\setcounter{MaxMatrixCols}{20}
  F_2 &= \begin{pmatrix} 
  0 & 0 & -\frac{\rtra}{P} & 0 & 0 & 0 & 0 & 0 & 0\\
  0 & 0 & \frac{\rtra}{P} & 0 & 0 & 0 & 0 & 0 & 0\\
  0 & 0 & 0 & 0 & 0 & 0 & 0 & 0 & 0\\
  \end{pmatrix}.
\end{align}

Since $F_1$ is a triangular matrix, its eigenvalues are located on its diagonal, so $\Re(\lambda_1) = -\Lambda/P-\min\left\{\rvac, 1/\Tlat, 1/\Tinf \right\}$. Furthermore we can bound $P/\sqrt{3} \le \|u_{\text{in}}\| \le P$, $\|F_0\| = \Lambda$, and $\|F_2\| = \sqrt 2\rtra /P$, so
\begin{align}
\Rq &\le \frac{\sqrt{2}\rtra+\sqrt{3}\Lambda/P}{\min\left\{\rvac, 1/\Tlat, 1/\Tinf \right\}+\Lambda/P}.
\end{align}
We see that the condition for guaranteed convergence of Carleman linearization is $\sqrt{2} \rtra < \min\left\{\rvac, 1/\Tlat, 1/\Tinf\right\} - (\sqrt{3}-1)$ and $\Lambda \le \sqrt 2\rtra /P$. Essentially, the Carleman method only converges if the (nonlinear) transmission is sufficiently slower than the (linear) decay and the travel flux is low compared to the transmission rate divided by the total population.

To assess how restrictive this assumption is, we consider the SEIR parameters used in \cite{WLX20}. Note that they also included separate components for asymptomatic and hospitalized persons, but to simplify the analysis we include both of these components in the $P_I$ component. In their work, they considered a city with approximately $P = 10^7$ inhabitants. In a specific period, they estimated a travel flux%
\footnote{
For negligible travel flux $\Lambda \approx 0$, the solution $u(t)$ will decay to a small value, so the solution is not interesting in the asymptotic limit. However, in practice we are interested in the behavior over shorter times for which the population has not significantly decayed.
} 
of $\Lambda \approx 0$ individuals per day, latent time $\Tlat = 5.2$ days, infectious time $\Tinf = 2.3$ days,  and transmission rate $\rtra \approx 0.13\text{ days}^{-1}$. We let the initial condition be dominated by the susceptible component so that $\|u_{\text{in}}\| \approx P$, and we assume\footnote{This example arguably corresponds to quite rapid vaccination, and is chosen here such that $\Rq$ remains smaller than one, as required to formally guarantee convergence of the Carleman method. However, as shown in the upcoming example of the Burgers equation, larger values of $\Rq$ might still allow for convergence in practice, suggesting that our algorithm might handle lower values of the vaccination rate.} that $\rvac > 1/\Tlat \approx 0.19\text{ days}^{-1}$ and negligible travel flux ($\Lambda=0$).
With the stated parameters, a direct calculation gives $\Rq = 0.956$ and ensures $\Lambda \le \sqrt 2\rtra /P$, showing that the assumptions of our algorithm can correspond to some real-world problems that are only moderately nonlinear.

While the example discussed above has only a constant number of variables, this example can be generalized to a high-dimensional system of ODEs that models the early spread over a large number of cities with interaction, similar to what is done in \cite{BK15} and \cite{QA18}.

Other examples of high-dimensional ODEs arise from the discretization of certain PDEs. Consider, for example, equations for $\vec u(\vec r, t)$ of the type
\begin{equation}
    \partial_t \vec u + (\vec u \cdot \nabla) \vec u + \beta \vec u = \nu \nabla^2 \vec u + \vec f. \label{eq:NS-LD}
\end{equation}
with the forcing term $\vec f$ being a function of both space and time. This equation can be cast in the form of \eq{NODE} by using standard discretizations of space and time. Equations of the form \eq{NS-LD} can represent Navier--Stokes-type equations, which are ubiquitous in fluid mechanics \cite{Lem18}, and related models such as those studied in \cite{KPS90, SZ89, VDF94}
to describe the formation of large-scale structure in the universe. Similar equations also appear in models of magnetohydrodynamics (e.g., \cite{davidson_2001}), or the motion of free particles that stick to each other upon collision \cite{BG98}. In the inviscid case, $\nu = 0$, the resulting Euler-type equations with linear damping are also of interest, both for modeling micromechanical devices \cite{Bao00} and for their intimate connection with viscous models \cite{Daf05}.

\begin{figure}[htbp]
    \centering
    \includegraphics[width=0.85\textwidth]{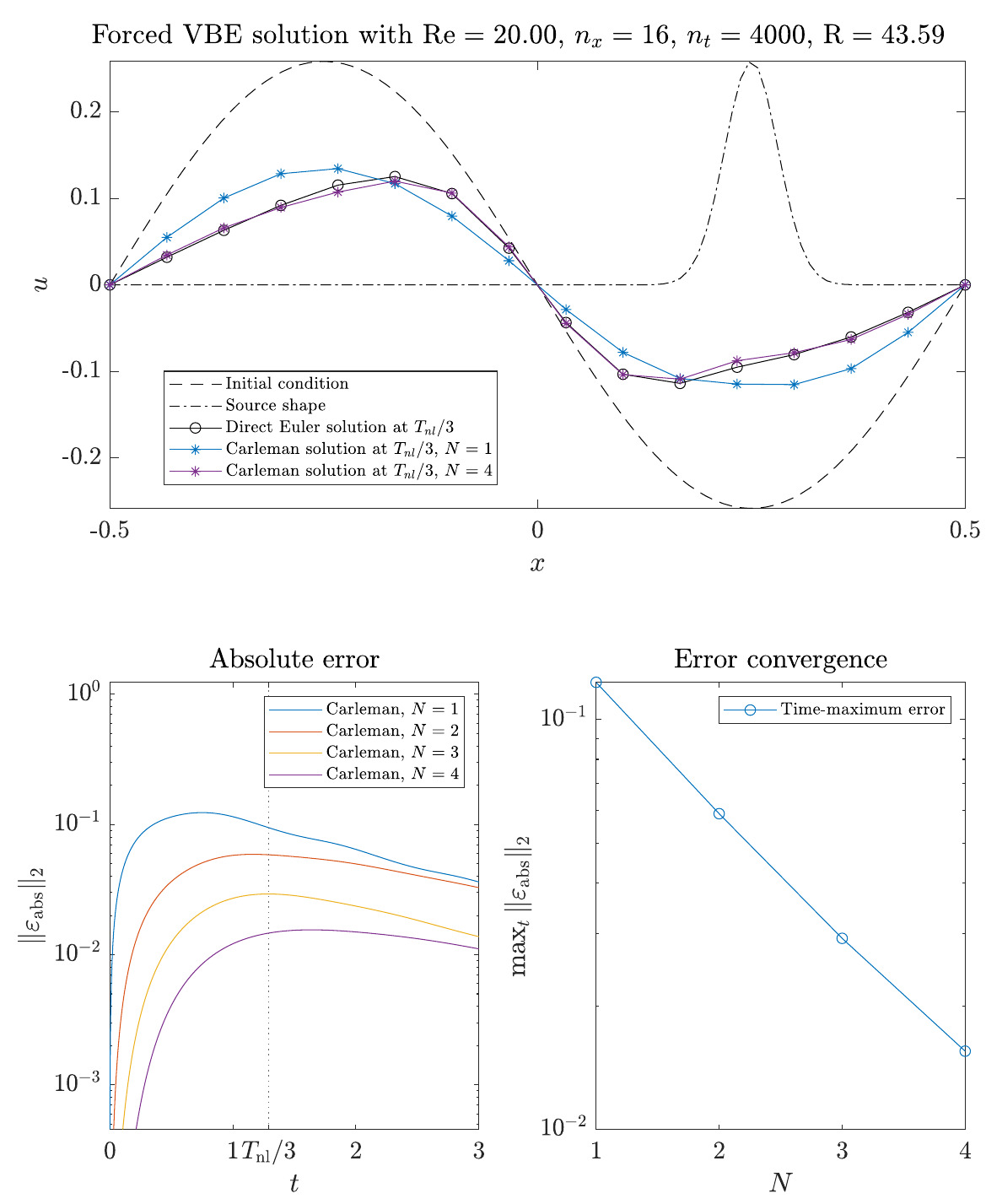}
    \caption{Integration of the forced viscous Burgers equation using Carleman linearization on a classical computer (source code available at \url{https://github.com/hermankolden/CarlemanBurgers}). The viscosity is set so that the Reynolds number $\Re = U_0L_0/\nu = 20$. The parameters $n_x=16$ and $n_t=4000$ are the number of spatial and temporal discretization intervals, respectively. The corresponding Carleman convergence parameter is $\Rq = 43.59$. Top: Initial condition and solution plotted at a third of the nonlinear time $\frac 1 3 T_{\mathrm{nl}} = \frac{L_0}{3U_0}$. Bottom: $l_2$ norm of the absolute error between the Carleman solutions at various truncation levels $N$ (left), and the convergence of the corresponding time-maximum error (right).}
    \label{fig:vbe_re0_20}
\end{figure}

As a specific example, consider the one-dimensional forced viscous Burgers equation
\begin{equation}
    \label{eq:vbe}
    \partial_t u + u \partial_x u = \nu \partial_x^2 u + f,
\end{equation}
which is the one-dimensional case of equation \eq{NS-LD} with $\beta = 0$. Equation \eq{vbe} is often used as a simple model of convective flow \cite{Bur48}. For concreteness, let the initial condition be $u(x, 0) = U_0 \sin(2\pi x/L_0)$ on the domain $x \in [-L_0/2, L_0/2]$, and use Dirichlet boundary conditions $u(-L_0/2,0) \allowbreak = u(L_0/2,0) = 0$. We force this equation using a localized off-center Gaussian with a sinusoidal time dependence,\footnote{Note that this forcing does not satisfy the general conditions for efficient implementation of our algorithm since it is not sparse. However, we expect that the algorithm can still be implemented efficiently for a structured non-sparse forcing term such as in this example.} given by $f(x,t) = U_0\exp\left(-\frac{(x-L_0/4)^2}{2(L_0/32)^2}\right)\cos(2\pi t)$. To solve this equation using the Carleman method, we discretize the spatial domain into $n_x$ points and use central differences for the derivatives to get
\begin{equation}
    \label{eq:disc_vbe}
    \partial_t u_i = \nu \frac{u_{i+1}-2u_i+u_{i-1}}{\Delta x^2} - \frac{u^2_{i+1}-u^2_{i-1}}{4\Delta x} + f_i
\end{equation}
with $\Delta x = L_0 / (n_x - 1)$. This equation is of the form \eq{NODE} and can thus generate the Carleman system \eq{UODE}. The resulting linear ODE can then be integrated using the forward Euler method, as shown in \fig{vbe_re0_20}. In this example, the viscosity $\nu$ is defined such that the Reynolds number $\Re \coloneqq U_0L_0/\nu = 20$, and $n_x=16$ spatial discretization points were sufficient to resolve the solution. 
The figure compares the Carleman solution with the solution obtained via direct integration of \eq{disc_vbe} with the forward Euler method (i.e., without Carleman linearization). By inserting the matrices $F_0$, $F_1$, and $F_2$ corresponding to equation \eq{disc_vbe} into the definition of $\Rq$ \eq{A1}, we find that $\Re(\lambda_1)$ is indeed negative as required, given Dirichlet boundary conditions, but the parameters used in this example result in $\Rq \approx 44$. Even though this does not satisfy the requirement $\Rq < 1$ of the QCL algorithm, we see from the absolute error plot in \fig{vbe_re0_20} that the maximum absolute error over time decreases exponentially as the truncation level $N$ is incremented (in this example, the maximum Carleman truncation level considered is $N=4$). Surprisingly, this suggests that in this example, the error of the classical Carleman method converges exponentially with $N$, even though $\Rq > 1$.

\section{Discussion}
\label{sec:discussion}

In this paper we have presented a quantum Carleman linearization (QCL) algorithm for a class of quadratic nonlinear differential equations. Compared to the previous approach of \cite{LO08}, our algorithm improves the complexity from an exponential dependence on $T$ to a nearly quadratic dependence, under the condition $\Rq < 1$ as defined in \eq{A1}. Qualitatively, this means that the system must be dissipative and that the nonlinear and inhomogeneous effects must be small relative to the linear effects. We have also provided numerical results suggesting the classical Carleman method may work on certain PDEs that do not strictly satisfy the assumption $\Rq < 1$.
Furthermore, we established a lower bound showing that for general quadratic differential equations with $\Rq \ge \sqrt{2}$, quantum algorithms must have worst-case complexity exponential in $T$. 
We also discussed several potential applications arising in biology and fluid and plasma dynamics.

It is natural to ask whether the result of \thm{main} can be achieved with a classical algorithm, i.e., whether the assumption $\Rq < 1$ makes differential equations classically tractable. Clearly a naive integration of the truncated Carleman system \eq{UODE} is not efficient on a classical computer since the system size is $\Theta(n^N)$. But furthermore, it is unlikely that \emph{any} classical algorithm for this problem can run in time polylogarithmic in $n$. If we consider \prb{node} with dissipation that is small compared to the total evolution time, but let the nonlinearity and forcing be even smaller such that $\Rq < 1$, then in the asymptotic limit we have a linear differential equation with no dissipation. Hence any classical algorithm that could solve \prb{node} could also solve non-dissipative linear differential equations, which is a $\text{BQP}$-hard problem even when the dynamics are unitary \cite{Fey85}. In other words, an efficient classical algorithm for this problem would imply efficient classical algorithms for any problem that can be solved efficiently by a quantum computer, which is considered unlikely.

Our upper and lower bounds leave a gap in the range $1 \le \Rq < \sqrt2$, for which we do not know the complexity of the quantum quadratic ODE problem. We hope that future work will close this gap and determine for which $\Rq$ the problem can be solved efficiently by quantum computers in the worst case.

Furthermore, the complexity of our algorithm has nearly quadratic dependence on $T$, namely $T^2\poly(\log T)$. It is unknown whether the complexity for quadratic ODEs must be at least linear or quadratic in $T$. Note that sublinear complexity is impossible in general because of the no-fast-forwarding theorem \cite{BAC07}. However, it should be possible to fast-forward the dynamics in special cases, and it would be interesting to understand the extent to which dissipation enables this.

The complexity of our algorithm depends on the parameter $q$ defined in \thm{main}, which characterizes the decay of the final solution relative to the initial condition. This restricts the utility of our result, since we must have a suitable initial condition and terminal time such that the final state is not exponentially smaller than the initial state. However, it is unlikely that such a dependence can be significantly improved, since renormalization of the state can be used to implement postselection, which would imply the unlikely consequence $\mathrm{BQP} = \mathrm{PP}$ (see Section 8 of \cite{BCOW17} for further discussion). As discussed in the introduction, the solution of a homogeneous dissipative equation necessarily decays exponentially in time, so our method is not asymptotically efficient.  However, for inhomogeneous equations the final state need not be exponentially smaller than the initial state even in a long-time simulation, suggesting that our algorithm could be especially suitable for models with forcing terms.

It is possible that variations of the Carleman linearization procedure could increase the accuracy of the result. For instance, instead of using just tensor powers of $u$ as auxiliary variables, one could use other nonlinear functions. Several previous papers on Carleman linearization have suggested using multidimensional orthogonal polynomials \cite{BR63,FP17}. They also discuss approximating higher-order terms with lower-order ones in \eq{UODE} instead of simply dropping them, possibly improving accuracy. Such changes would however alter the structure of the resulting linear ODE, which could affect the quantum implementation.

The quantum part of the algorithm might also be improved. In this paper we limit ourselves to the first-order Euler method to discretize the linearized ODEs in time. This is crucial for the analysis in \lem{global}, which states the global error increases at most linearly with $T$. To establish this result for the Euler method, it suffices to choose the time step \eq{h} to ensure $\|I+Ah\|\le1$, and then estimate the growth of global error by \eq{global_0}. However, it is unclear how to give a similar bound for higher-order numerical schemes. If this obstacle could be overcome, the error dependence of the complexity might be improved.

It is also natural to ask whether our approach can be improved by taking features of particular systems into account. Since the Carleman method has only received limited attention and has generally been used for purposes other than numerical integration, it seems likely that such improvements are possible.
In fact, the numerical results discussed in \sec{application} (see in particular \fig{vbe_re0_20}) suggest that the condition $\Rq < 1$ is not a strict requirement for the viscous Burgers equation, since we observe convergence even though $\Rq \approx 44$. This suggests that some property of equation \eq{vbe} makes it more amenable to Carleman linearization than our current analysis predicts. We leave a detailed investigation of this for future work.

A related question is whether our algorithm can efficiently simulate systems exhibiting dynamical chaos. The condition $\Rq<1$ might preclude chaos, but we do not have a proof of this. More generally, the presence or absence of chaos might provide a more fine-grained picture of the algorithm's efficiency.

When contemplating applications, it should be emphasized that our approach produces a state vector that encodes the solution without specifying how information is to be extracted from that state. Simply producing a state vector is not enough for an end-to-end application since the full quantum state cannot be read out efficiently. In some cases in may be possible to extract useful information by sampling a simple observable, whereas in other cases, more sophisticated postprocessing may be required to infer a desired property of the solution. Our method does not address this issue, but can be considered as a subroutine whose output will be parsed by subsequent quantum algorithms. 
We hope that future work will address this issue and develop end-to-end applications of these methods.

Finally, the algorithm presented in this paper might be extended to solve related mathematical problems on quantum computers. Obvious candidates include initial value problems with time-dependent coefficients and boundary value problems. Carleman methods for such problems are explored in \cite{KS91}, but it is not obvious how to implement those methods in a quantum algorithm. It is also possible that suitable formulations of problems in nonlinear optimization or control could be solvable using related techniques.

\section*{Acknowledgments}

We thank Paola Cappellaro for valuable discussions and comments. JPL thanks Aaron Ostrander for inspiring discussions. HØK thanks Bernhard Paus Græsdal for introducing him to the Carleman method. HKK and NFL thank Seth Lloyd for preliminary discussions on nonlinear equations and the quantum linear systems algorithm. We also thank Dong An and anonymous referees for pointing out the exponential decay of solutions to dissipative homogeneous equations. AMC and JPL did part of this work while visiting the Simons Institute for the Theory of Computing in Berkeley and gratefully acknowledge its hospitality. 

Furthermore, we thank Dong An for pointing out a technical issue with \lem{Carleman}, and therefore \thm{main}, in an earlier version of this paper. This revised version adds the condition $\|F_0\| \le \|F_2\|$, which is sufficient for the result to hold.

AMC and JPL acknowledge support from the Department of Energy, Office of Science, Office of Advanced Scientific Computing Research, Quantum Algorithms Teams and Accelerated Research in Quantum Computing programs, and from the National Science Foundation (CCF-1813814).
HØK, HKK, and NFL were partially funded by award no.\ DE-SC0020264 from the Department of Energy.
\bibliographystyle{myhamsplain}
\bibliography{qplasma}

\providecommand{\bysame}{\leavevmode\hbox to3em{\hrulefill}\thinspace}
\begin{thebibliography}{10}

\bibitem{Aar05}
Scott Aaronson, \emph{{NP}-complete problems and physical reality}, ACM SIGACT News \textbf{36} (2005), no.~1, 30--52, \href{https://arxiv.org/abs/quant-ph/0502072}{arXiv:quant-ph/0502072}.

\bibitem{AL98}
Daniel~S. Abrams and Seth Lloyd, \emph{Nonlinear quantum mechanics implies polynomial-time solution for {NP}-complete and \#{P} problems}, Physical Review Letters \textbf{81} (1998), no.~18, 3992, \href{https://arxiv.org/abs/quant-ph/9801041}{arXiv:quant-ph/9801041}.

\bibitem{And82}
Roberto F.~S. Andrade, \emph{Carleman embedding and {L}yapunov exponents}, Journal of Mathematical Physics \textbf{23} (1982), no.~12, 2271--2275.

\bibitem{Atk08}
Kendall~E. Atkinson, \emph{An introduction to numerical analysis}, John Wiley \& Sons, 2008.

\bibitem{Bao00}
Min-Hang Bao, \emph{Micro mechanical transducers: pressure sensors, accelerometers and gyroscopes}, Elsevier, 2000.

\bibitem{BR63}
Richard Bellman and John~M. Richardson, \emph{On some questions arising in the approximate solution of nonlinear differential equations}, Quarterly of Applied Mathematics \textbf{20} (1963), 333--339.

\bibitem{Ber14}
Dominic~W. Berry, \emph{High-order quantum algorithm for solving linear differential equations}, Journal of Physics A: Mathematical and Theoretical \textbf{47} (2014), no.~10, 105301, \href{https://arxiv.org/abs/1010.2745}{arXiv:1010.2745}.

\bibitem{BAC07}
Dominic~W. Berry, Graeme Ahokas, Richard Cleve, and Barry~C. Sanders, \emph{Efficient quantum algorithms for simulating sparse {H}amiltonians}, Communications in Mathematical Physics \textbf{270} (2007), 359–371, \href{https://arxiv.org/abs/quant-ph/0508139}{arXiv:quant-ph/0508139}.

\bibitem{BCOW17}
Dominic~W. Berry, Andrew~M. Childs, Aaron Ostrander, and Guoming Wang, \emph{Quantum algorithm for linear differential equations with exponentially improved dependence on precision}, Communications in Mathematical Physics \textbf{356} (2017), no.~3, 1057--1081, \href{https://arxiv.org/abs/1701.03684}{arXiv:1701.03684}.

\bibitem{BK15}
Derdei Bichara, Yun Kang, Carlos Castillo-Chávez, Richard Horan, and Charles Perrings, \emph{{SIS} and {SIR} epidemic models under virtual dispersal}, 03 2015.

\bibitem{BC12}
Fred Brauer and Carlos Castillo-Chavez, \emph{Mathematical models in population biology and epidemiology}, vol.~2, Springer, 2012.

\bibitem{BG98}
Yann Brenier and Emmanuel Grenier, \emph{Sticky particles and scalar conservation laws}, SIAM Journal on Numerical Analysis \textbf{35} (1998), no.~6, 2317--2328.

\bibitem{Bro14}
Roger Brockett, \emph{The early days of geometric nonlinear control}, Automatica \textbf{50} (2014), no.~9, 2203--2224.

\bibitem{Bur48}
Johannes~M. Burgers, \emph{A mathematical model illustrating the theory of turbulence}, Advances in Applied Mechanics, vol.~1, Elsevier, 1948, pp.~171--199.

\bibitem{CPPTK13}
Yudong Cao, Anargyros Papageorgiou, Iasonas Petras, Joseph Traub, and Sabre Kais, \emph{Quantum algorithm and circuit design solving the {P}oisson equation}, New Journal of Physics \textbf{15} (2013), no.~1, 013021, \href{https://arxiv.org/abs/1207.2485}{arXiv:1207.2485}.

\bibitem{Car32}
Torsten Carleman, \emph{Application de la th{\'e}orie des {\'e}quations int{\'e}grales lin{\'e}aires aux syst{\`e}mes d'{\'e}quations diff{\'e}rentielles non lin{\'e}aires}, Acta Mathematica \textbf{59} (1932), no.~1, 63--87.

\bibitem{CNF19}
Alessandro Ceccato, Paolo Nicolini, and Diego Frezzato, \emph{Recasting the mass-action rate equations of open chemical reaction networks into a universal quadratic format}, Journal of Mathematical Chemistry \textbf{57} (2019), 1001--1018.

\bibitem{CKS15}
Andrew~M. Childs, Robin Kothari, and Rolando~D. Somma, \emph{Quantum algorithm for systems of linear equations with exponentially improved dependence on precision}, SIAM Journal on Computing \textbf{46} (2017), no.~6, 1920--1950, \href{https://arxiv.org/abs/1511.02306}{arXiv:1511.02306}.

\bibitem{CL19}
Andrew~M. Childs and Jin-Peng Liu, \emph{Quantum spectral methods for differential equations}, Communications in Mathematical Physics \textbf{375} (2020), 1427--1457, \href{https://arxiv.org/abs/1901.00961}{arXiv:1901.00961}.

\bibitem{CLO20}
Andrew~M. Childs, Jin-Peng Liu, and Aaron Ostrander, \emph{High-precision quantum algorithms for partial differential equations}, 2020, \href{https://arxiv.org/abs/2002.07868}{arXiv:2002.07868}.

\bibitem{CY16}
Andrew~M. Childs and Joshua Young, \emph{Optimal state discrimination and unstructured search in nonlinear quantum mechanics}, Physical Review A \textbf{93} (2016), no.~2, 022314, \href{https://arxiv.org/abs/1507.06334}{arXiv:1507.06334}.

\bibitem{CJS13}
B.~David Clader, Bryan~C. Jacobs, and Chad~R. Sprouse, \emph{Preconditioned quantum linear system algorithm}, Physical Review Letters \textbf{110} (2013), no.~25, 250504, \href{https://arxiv.org/abs/1301.2340}{arXiv:1301.2340}.

\bibitem{CJO19}
Pedro C.~S. Costa, Stephen Jordan, and Aaron Ostrander, \emph{Quantum algorithm for simulating the wave equation}, Physical Review A \textbf{99} (2019), no.~1, 012323, \href{https://arxiv.org/abs/1711.05394}{arXiv:1711.05394}.

\bibitem{Daf05}
Constantine~M. Dafermos, \emph{Hyperbolic conservation laws in continuum physics}, vol.~3, Springer, 2005.

\bibitem{Dah63}
Germund~G. Dahlquist, \emph{A special stability problem for linear multistep methods}, BIT Numerical Mathematics \textbf{3} (1963), no.~1, 27--43.

\bibitem{davidson_2001}
P.~A. Davidson, \emph{An introduction to magnetohydrodynamics}, Cambridge Texts in Applied Mathematics, Cambridge University Press, 2001.

\bibitem{DS20}
Ilya~Y. Dodin and Edward~A. Startsev, \emph{On applications of quantum computing to plasma simulations}, 2020, \href{https://arxiv.org/abs/2005.14369}{arXiv:2005.14369}.

\bibitem{ESP19}
Alexander Engel, Graeme Smith, and Scott~E. Parker, \emph{Quantum algorithm for the {V}lasov equation}, Physical Review A \textbf{100} (2019), no.~6, 062315, \href{https://arxiv.org/abs/1907.09418}{arXiv:1907.09418}.

\bibitem{Fey85}
Richard~P. Feynman, \emph{Quantum mechanical computers}, Optics News \textbf{11} (1985), 11--20.

\bibitem{FP17}
Marcelo Forets and Amaury Pouly, \emph{Explicit error bounds for {C}arleman linearization}, 2017, \href{https://arxiv.org/abs/1711.02552}{arXiv:1711.02552}.

\bibitem{GMP05}
Alfredo Germani, Costanzo Manes, and Pasquale Palumbo, \emph{Filtering of differential nonlinear systems via a {C}arleman approximation approach}, Proceedings of the 44th IEEE Conference on Decision and Control, pp.~5917--5922, 2005.

\bibitem{HHL09}
Aram~W. Harrow, Avinatan Hassidim, and Seth Lloyd, \emph{Quantum algorithm for linear systems of equations}, Physical Review Letters \textbf{103} (2009), no.~15, 150502, \href{https://arxiv.org/abs/0811.3171}{arXiv:0811.3171}.

\bibitem{Hel69}
Carl~W. Helstrom, \emph{Quantum detection and estimation theory}, Journal of Statistical Physics \textbf{1} (1969), 231--252.

\bibitem{Jos20}
Ilon Joseph, \emph{Koopman-von {N}eumann approach to quantum simulation of nonlinear classical dynamics}, 2020, \href{https://arxiv.org/abs/2003.09980}{arXiv:2003.09980}.

\bibitem{Ker81}
Edward~H. Kerner, \emph{Universal formats for nonlinear ordinary differential systems}, Journal of Mathematical Physics \textbf{22} (1981), no.~7, 1366--1371.

\bibitem{KPS90}
Lev Kofman, Dmitri Pogosian, and Sergei Shandarin, \emph{Structure of the universe in the two-dimensional model of adhesion}, Monthly Notices of the Royal Astronomical Society \textbf{242} (1990), no.~2, 200--208.

\bibitem{KS91}
Krzysztof Kowalski and Willi-Hans Steeb, \emph{Nonlinear dynamical systems and {C}arleman linearization}, World Scientific, 1991.

\bibitem{Lem18}
Pierre~Gilles Lemari{\'e}-Rieusset, \emph{The {N}avier-{S}tokes problem in the 21st century}, CRC Press, 2018.

\bibitem{LO08}
Sarah~K. Leyton and Tobias~J. Osborne, \emph{A quantum algorithm to solve nonlinear differential equations}, 2008, \href{https://arxiv.org/abs/0812.4423}{arXiv:0812.4423}.

\bibitem{LMS20}
Noah Linden, Ashley Montanaro, and Changpeng Shao, \emph{Quantum vs.\ classical algorithms for solving the heat equation}, \href{https://arxiv.org/abs/2004.06516}{arXiv:2004.06516}.

\bibitem{LPG20}
Seth Lloyd, Giacomo De~Palma, Can Gokler, Bobak Kiani, Zi-Wen Liu, Milad Marvian, Felix Tennie, and Tim Palmer, \emph{Quantum algorithm for nonlinear differential equations}, 2020, \href{https://arxiv.org/abs/2011.06571}{arXiv:2011.06571}.

\bibitem{LT87}
Gerasimos Lyberatos and Christos~A. Tsiligiannis, \emph{A linear algebraic method for analysing {H}opf bifurcation of chemical reaction systems}, Chemical Engineering Science \textbf{42} (1987), no.~5, 1242--1244.

\bibitem{MP16}
Ashley Montanaro and Sam Pallister, \emph{Quantum algorithms and the finite element method}, Physical Review A \textbf{93} (2016), no.~3, 032324, \href{https://arxiv.org/abs/1512.05903}{arXiv:1512.05903}.

\bibitem{Mon72}
Elliott~W. Montroll, \emph{On coupled rate equations with quadratic nonlinearities}, Proceedings of the National Academy of Sciences \textbf{69} (1972), no.~9, 2532--2536.

\bibitem{QA18}
Rany Qurratu~Aini, Deden Aldila, and Kiki Sugeng, \emph{Basic reproduction number of a multi-patch {SVI} model represented as a star graph topology}, 10 2018, p.~020237.

\bibitem{RMA09}
Andreas Rauh, Johanna Minisini, and Harald Aschemann, \emph{Carleman linearization for control and for state and disturbance estimation of nonlinear dynamical processes}, IFAC Proceedings Volumes \textbf{42} (2009), no.~13, 455--460.

\bibitem{SZ89}
Sergei~F. Shandarin and Yakov~B. Zeldovich, \emph{The large-scale structure of the universe: turbulence, intermittency, structures in a self-gravitating medium}, Reviews of Modern Physics \textbf{61} (1989), no.~2, 185--220.

\bibitem{SBM06}
Vivek~V. Shende, Stephen~S. Bullock, and Igor~L. Markov, \emph{Synthesis of quantum-logic circuits}, IEEE Transactions on Computer-Aided Design of Integrated Circuits and Systems \textbf{25} (2006), no.~6, 1000--1010, \href{https://arxiv.org/abs/quant-ph/0406176}{arXiv:quant-ph/0406176}.

\bibitem{SW80}
Willi-Hans Steeb and F.~Wilhelm, \emph{Non-linear autonomous systems of differential equations and {C}arleman linearization procedure}, Journal of Mathematical Analysis and Applications \textbf{77} (1980), no.~2, 601--611.

\bibitem{VDF94}
Massimo Vergassola, B{\'e}reng{\`e}re Dubrulle, Uriel Frisch, and Alain Noullez, \emph{Burgers' equation, devil's staircases and the mass distribution for large-scale structures}, Astronomy and Astrophysics \textbf{289} (1994), 325--356.

\bibitem{Wal83}
Paul Waltman, \emph{Competition models in population biology}, SIAM, 1983.

\bibitem{WLX20}
Chaolong Wang, Li~Liu, Xingjie Hao, Huan Guo, Qi~Wang, Jiao Huang, Na~He, Hongjie Yu, Xihong Lin, An~Pan, Sheng Wei, and Tangchun Wu, \emph{Evolving epidemiology and impact of non-pharmaceutical interventions on the outbreak of coronavirus disease 2019 in {W}uhan, {C}hina}, Journal of the American Medical Association \textbf{323} (2020), no.~19, 1915--1923.

\bibitem{GYI08}
Gul Zaman, Yong {Han Kang}, and Il~Hyo Jung, \emph{Stability analysis and optimal vaccination of an {SIR} epidemic model}, Biosystems \textbf{93} (2008), no.~3, 240--249.

\end{thebibliography}

\end{document}